\newtheorem{thm}{Theorem}[section]  
\newtheorem*{un-no-thm}{Theorem}
\newtheorem{cor}[thm]{Corollary}     
\newtheorem{lem}[thm]{Lemma}         
\newtheorem{prop}[thm]{Proposition}  
\newtheorem{add}[thm]{Addendum}
\newtheorem{conjecture}[thm]{Conjecture}
\newtheorem{bigthm}{Theorem}
\theoremstyle{definition} 
\newtheorem{defn}[thm]{Definition}   
\theoremstyle{definition}
\newtheorem*{prob}{Problem}   
\theoremstyle{definition}
\theoremstyle{remark}
\newtheorem{rem}[thm]{Remark}
\newtheorem{hypo}[thm]{Hypothesis}
\newtheorem{notation}[thm]{Notation}
\newtheorem*{acks}{Acknowledgements}
\newtheorem*{out}{Outline}
\newtheorem*{intro-rem}{Remark}
\newtheorem*{intro-rems}{Remarks}
\newtheorem{ex}[thm]{Example}
\DeclareMathOperator{\spec}{Spec}
\DeclareMathOperator{\symm}{sym}
\DeclareMathOperator{\self}{End}
\DeclareMathOperator{\U}{U}
\DeclareMathOperator{\Or}{O}
\DeclareMathOperator{\SO}{SO}
\DeclareMathOperator{\PSO}{PSO}
\DeclareMathOperator{\GL}{GL}
\DeclareMathOperator{\PGL}{PGL}
\DeclareMathOperator{\tr}{tr}
\def\:{\colon\!}
\def\cal{\mathcal}
\def\Bbb{\mathbb}
\def\scr{\mathscr}
\begin{document}

\title{On the variety of X-states}
\date{\today} 
\author[L.~Candelori, V.~Y.~Chernyak, and J.~R.~Klein]{
Luca Candelori,$^{\!\! 1}$~Vladimir Y. Chernyak,$^{\!\! 1,2}$~John R. Klein,$^{\!\! 1}$
}

\thanks{\hskip -.17in $^{1}$\textsc{\tiny Department of Mathematics, Wayne State University, Detroit, MI 48201, USA}\\
$^{2}$\textsc{\tiny Department of Chemistry, Wayne State University, Detroit, MI 48201, USA}}

\subjclass[2020]{Primary: 81P40, 81P42, 13A50, Secondary: 14L24}

\begin{abstract}  We introduce the notion of an X-state on $n$-qubits. After taking
the Zariski closure of the set of X-states in the space of all mixed states, we obtain a complex algebraic variety $\scr X$
that is equipped with the action of the Lie group of local symmetries $G$. We show
that the field of $G$-invariant rational functions on $\scr X$ is purely transcendental over
the complex numbers of degree $2^{2n-1}-n-1$. 
 \end{abstract}

\maketitle
\setcounter{tocdepth}{1}
{\color{olive}
\tableofcontents}
\addcontentsline{file}{sec_unit}{entry}

\section{Introduction} \label{sec:intro}
In recent years, the term ``X-states'' has come to refer to a certain class of  mixed states in 
two qubit quantum systems whose density matrices have the form
\begin{equation*}
\begin{pmatrix} 
\rho_{11} & 0  & 0 & \rho_{14} \\
0 & \rho_{22} &\rho_{23} & 0 \\
0& \rho_{32}& \rho_{33} & 0 \\
\rho_{41} & 0 & 0 & \rho_{44} 
\end{pmatrix}
\end{equation*}
\cite{X-Eberly}, \cite{Rau_2009}, \cite{MENDONCA201479}, \cite{X-states-Gerdt-et-al}.
This class of mixed states arises in the study of quantum entanglement. 
Various states of interest, such as maximally entangled Bell states, Werner states, and maximally isotropic states,
are special cases of X-states.

The above definition of X-states depends on the choice of an ordered basis 
  for the individual qubits. For this reason, the above notion is {\it not} invariant under quantum evolution.
  To correct for this defect, the domain of X-states is usually extended to all mixed states of two qubits that can be obtained by 
  quantum evolution of an X-state. However, this ``extended" definition  does not have the structure of a 
  manifold or an algebraic variety.
  In other words, we cannot study the geometry of X-states directly. 
  
In this article we 
introduce the concept of an {\it X-state on $n$-qubits} for any positive integer $n$. The set of such X-states
is invariant with respect to the Lie group $G$ of local symmetries of the $n$-qubit system.\footnote{As we are working over the complex numbers, in our setting 
$G$ is isomorphic to a cartesian product of $n$-copies of the special orthogonal group $\SO_3(\Bbb C)$.}
To make this new notion accessible to the tools of algebraic geometry,
we take the Zariski closure of the set of X-states on $n$-qubits inside the affine space of all mixed states on $n$-qubits.
This results in an algebraic variety $\scr X_n$ over the complex numbers which remains invariant with respect to
 the action of the group $G$.  In what follows, when $n$ is understood, we drop the subscript from the notation
 and refer to $\scr X_n$ as $\scr X$.

We investigate the variety $\scr X$ together with its $G$-action using the techniques of geometric invariant theory in the affine setting.
A  rational function  $\scr X \to \Bbb C$ which is invariant with respect to $G$
may be viewed as a detailed {\it measure of entanglement} \cite[\S XV]{Horodecki}.
We will  provide a description of the algebraic structure of such measures of entanglement on $\scr X$.
 Our main result is that 
   the field of $G$-invariant rational functions  $\scr X \to \Bbb C$ 
is freely generated by  $2^{2n-1} - n-1$ algebraically independent elements:

\begin{bigthm} \label{bigthm:main} The field $\Bbb C(\scr X)^G$ is purely transcendental over $\Bbb C$ of degree $2^{2n-1} - n-1$.
\end{bigthm}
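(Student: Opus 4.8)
The plan is to reduce the computation of $\Bbb C(\scr X)^G$ to an invariant field of a small solvable group acting linearly on a vector space, by means of a birational slice, and then to settle that smaller problem by an explicit analysis. The starting point is the description of $\scr X$ as the Zariski closure of the $G$-orbit of the linear subspace $V$ of X-form states — concretely, the trace-one matrices commuting with $\sigma_3^{\otimes n}$, equivalently those supported on the $2^{2n-1}$ Pauli tensors having an even number of tensor factors in $\{\sigma_1,\sigma_2\}$, so that $\dim V=2^{2n-1}-1$. Let $N=\operatorname{Stab}_G(V)$. Computing with the conjugation action, $g=(g_1,\dots,g_n)\in G$ preserves $V$ exactly when each $g_k$ stabilizes the line $\Bbb C\sigma_3\subset\Bbb C^3$, so $N\cong\Or_2(\Bbb C)^n$; in particular $\dim N=n$, $N$ is solvable, and it sits in an exact sequence $1\to T\to N\to(\Bbb Z/2)^n\to 1$ with $T=\SO_2(\Bbb C)^n$ an $n$-dimensional torus acting on $V$ by conjugation.

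The first step is to prove that the contracted-product multiplication map $(G\times V)/N\to\scr X$ (with $N$ acting by $h\cdot(g,v)=(gh^{-1},hv)$) is birational. Dominance is immediate because the image contains the dense set $G\cdot V$; the substantive point is generic injectivity, i.e.\ that a generic $G$-orbit in $\scr X$ meets $V$ in a single $N$-orbit, which I expect to follow from the explicit orbit description together with the dimension identity $\dim\scr X=\dim G+\dim V-\dim N$. Granting birationality there is a $G$-equivariant isomorphism $\Bbb C(\scr X)\cong\Bbb C((G\times V)/N)$; as $G$ acts on $(G\times V)/N$ through the left factor alone — freely, with quotient map $[g,v]\mapsto[v]\in V/N$ — taking $G$-invariants gives $\Bbb C(\scr X)^G\cong\Bbb C(V)^N$. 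The transcendence degree is then bookkeeping: the $T$-weights occurring on $V$ are the vectors $j-i\in\Bbb Z^n$ attached to the entries $\rho_{ij}$ with $i\oplus j$ of even Hamming weight, and these span the index-two sublattice $\{v\in\Bbb Z^n:\sum_k v_k\text{ even}\}$; hence a generic point of $V$ has finite $T$-stabilizer (namely the $\Bbb Z/2$ generated by conjugation by $\sigma_3^{\otimes n}$, which fixes $V$ pointwise), hence finite $N$-stabilizer, and so $\operatorname{trdeg}_{\Bbb C}\Bbb C(V)^N=\dim V-\dim N=(2^{2n-1}-1)-n=2^{2n-1}-n-1$.

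It remains to show $\Bbb C(V)^N$ is purely transcendental over $\Bbb C$, which I would do by peeling the group off in two stages. First, $T$ acts linearly on the vector space $V$, and the field of invariants of any linear torus action is purely transcendental (decompose $V$ into $T$-weight spaces, divide the weights by their common divisor, and use a Laurent monomial of primitive weight to split off the invariants, inducting on $\dim T$); hence $\Bbb C(V)^T\cong\Bbb C(u_1,\dots,u_m)$ with $m=\dim V-n$. Second, $\Bbb C(V)^N=\bigl(\Bbb C(V)^T\bigr)^W$ with $W=N/T\cong(\Bbb Z/2)^n$, and one must keep pure transcendence through this finite quotient. Here I would use the explicit form of the $W$-action: in the decomposition $V\cong\operatorname{Mat}_{2^{n-1}}(\Bbb C)\oplus\operatorname{Mat}_{2^{n-1}}(\Bbb C)$ coming from the two parity eigenspaces of $\sigma_3^{\otimes n}$, each generator of $W$ acts as a single-qubit bit-flip, i.e.\ by a signed permutation of the matrix entries interchanging or preserving the two summands. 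Since $W$ normalizes $T$ it permutes the $T$-weight spaces, so the coordinates $u_j$ can be chosen weight-adapted and $W$ then acts on them by a monomial (signed-permutation) representation; over $\Bbb C$ this is diagonalizable, and a diagonal action of the elementary abelian group $(\Bbb Z/2)^n$ has purely transcendental invariant field by Fischer's theorem. This completes the argument and yields the stated degree.

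The step I expect to be the main obstacle is the last one: a quotient of a rational variety by a finite group need not be rational — this is Noether's problem — so the particular monomial structure of the $(\Bbb Z/2)^n$-action has to be used in an essential way, and the delicate point is to produce coordinates in which $T$ is diagonal and $W$ acts monomially simultaneously. A secondary technical point is the rigorous verification that the slice map $(G\times V)/N\to\scr X$ is genuinely birational rather than merely generically finite, which rests on a precise understanding of the $G$-orbits meeting $V$.
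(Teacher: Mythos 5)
Your reduction to the slice is sound and is essentially the paper's own: your linear space $V$ of X-form states is the paper's $X(\frak B)$, your $N\cong\Or_2(\Bbb C)^n$ is its normalizer with centralizer the diagonal $\pm I$, and the birationality of $(G\times V)/N\to\scr X$ is exactly what the paper's vector-bundle model $\scr E\to\prod_i\breve{\Bbb P}(\scr V_i)$ establishes; this gives $\Bbb C(\scr X)^G\cong\Bbb C(V)^N$ with the correct transcendence degree $\dim V-\dim N=2^{2n-1}-n-1$. The gap is in your final step, which is where all the difficulty lives. After passing to $\Bbb C(V)^T$, the residual group $W=N/T\cong(\Bbb Z/2)^n$ acts on the lattice field $\Bbb C(M)$, $M=\ker(\Bbb Z^{\dim V}\to X^*(T))$, by a \emph{monomial} action: a generator sends a weight-zero Laurent monomial to $\pm$ another weight-zero Laurent monomial, which in terms of any chosen transcendence basis $u_1,\dots,u_m$ is a Laurent monomial in the $u_j$ times a sign --- e.g.\ the reflection swapping $e_+\leftrightarrow e_-$ in the $i$-th transversal plane inverts weights, producing maps like $u\mapsto u^{-1}$ or $u_1\mapsto u_2u_3^{-1}$. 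Such an action is not linear and cannot in general be diagonalized, so Fischer's theorem (a statement about linear representations of finite abelian groups) does not apply; nor is it automatic that $M$ admits a basis permuted up to sign by $W$, so even the ``signed-permutation'' form you invoke is not available for free. Rationality of multiplicative invariant fields under finite groups is precisely where Noether-type problems fail, so this step cannot be waved through.

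The paper closes this by two moves your outline omits. First, it does \emph{not} quotient all of $V$ by the torus at once: it applies the no-name lemma to a $W$-equivariant linear projection $X(\frak B)\to X_T(\frak B)$ onto a $(5n-4)$-dimensional subrepresentation (indexed by a spanning star of $K_n$) on which $W$ acts almost freely, stripping off a purely transcendental factor of degree $2^{2n-1}-5n+3$ and reducing everything to the low-dimensional quotient $X_T(\frak B)/\!\!/W$. Second, it computes that quotient entirely explicitly: the $\SO_2(\Bbb C)$-invariants $\delta,t,a,b$ of $M_2(\Bbb C)$ under left multiplication, then the loop relations among the $u_{jk}=4w_j^+w_k^-$ presenting $X_T(\frak B)/\!\!/W'$ as a localized polynomial ring, then a change of variables $(\tilde s,\tilde v,\tilde\delta,\eta)$ on which $(\Bbb Z/2)^n$ acts trivially. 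To rescue your route you would need an equally explicit analysis of the $W$-lattice $M$ (or an equivariant linear reduction like the paper's); the appeal to Fischer does not substitute for it.
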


\begin{rem} In the language of algebraic geometry, Theorem \ref{bigthm:main}  asserts that the algebraic quotient 
\[
\scr X/\!\!/G = \spec(\Bbb C[X]^G)
\] 
is rational of dimension  $2^{2n-1} - n-1$. 
\end{rem}

\begin{rem} In Appendix \ref{sec:rationality-for-2-qubits},
we exhibit a transcendence basis $p_1,\dots, p_5$ for $\Bbb C(\scr X_2)^G$.
By a different method, one may provide a transcendence basis for $\Bbb C(\scr X_n)^G$
when $n > 2$. For reasons of space, we defer the latter to another paper.
\end{rem}

\begin{rem} Although not investigated here, there is an analog of Theorem \ref{bigthm:main}
over the real numbers that takes into account the real structure of the complex algebraic variety $\scr X$. 
In the real case, the symmetry group governing quantum evolution
 is the local unitary group. The latter is isomorphic to the $n$-fold cartesian
product of copies of unitary group $\U(2)$.
\end{rem}

\subsection{Background from physics}
X-states naturally appear  in spectroscopy with either quantum or entangled photons \cite{CiracPRL2004}, \cite{KimPRL2004}, \cite{PalmaPRA2004}, \cite{PalmaPRB2004}, \cite{ZouPRA2006}. The paper 
 \cite{X-states-physics} showed that when $n =2$ qubits which are initially separable 
are exposed to an optical pulse of an entangled photon field, then the result is {\it always} a mixed state
which is also an X-state, due to the $\Bbb Z/2$ symmetry of the entangled photon correlations. 
With our definition of X-state on $n$-qubits, the same phenomenon
occurs for higher values of $n$. This is a reason why X-states are important and fundamental for the
fields of quantum information and quantum computing.

The reader may reasonably ask why mixed states are of relevance in such a setting. In the case
of coherent non-linear spectroscopy, the driving field may be viewed as an external classical field,
so that pure states are preserved by the evolution of the system. However, in the case of quantum light spectroscopy
this is no longer the case: one must necessarily deal with the  space of a joint system consisting of qubits and a quantum electromagnetic field. 
Consequently, the states of the qubit system can be considered only as a result of tracing out the quantum electromagnetic field, so that any initially pure state results in a mixed counterpart.

\begin{out} In \S\ref{sec:prelim} we outline some aspects geometric invariant theory in the affine case.
In \S\ref{sec:L-states} we consider the space $\scr L$ of mixed states on $n$-qubits. In \S\ref{sec:X-states}
we define the variety of X-states $\scr X$ and compute its dimension.  In \S\ref{sec:rel-sec} we construct
a rational section $X(\frak B)$ for the action of $G$ on $\scr X$ and identify the Weyl group $W$. In \S\ref{sec:overview} we provide an overview of the proof
of Theorem \ref{bigthm:main}. In \S\ref{sec:aux} we solve the auxiliary problem of constructing an explicit rational
quotient for the left multiplication action  of the group
$\SO_2(\Bbb C)$ on the vector space $M_2(\Bbb C)$ of $2\times 2$ matrices. \S\ref{sec:quotient-by-wprime} and
\S\ref{sec:w-wprime} in essence provide the bulk of the work in establishing that the algebraic quotient
on $X(\frak B)/\!\! /W$ is rational.
In \S\ref{sec:main} we assemble the results of previous sections to prove Theorem \ref{bigthm:main}. 

We also provide two appendices. The first of these, which is used in \S\ref{sec:w-wprime},
 provides a mild generalization of descent along a torsor in the case
of a fibration associated to a trivial vector bundle equipped with $G$-linearization
and the additional structure of a $G$-invariant ``subvariety along the fibers.'' 
 In the second appendix we construct an explicit transcendence
basis for $\Bbb C(\scr X_2)^G$.
\end{out}

\begin{acks}
  The authors are supported by the U.S. Department of Energy, Office of Science, Basic Energy Sciences, under Award Number DE-SC-SC0022134.
  \end{acks}

\section{Preliminaries}  \label{sec:prelim}

\subsection{Varieties}

In what follows, $k$ will denote algebraically closed field of characteristic zero. A {\it $k$-scheme} is a scheme over $\spec(k)$.
A scheme $X$ is {\it affine} it is isomorphic to $\spec(R)$ for some commutative ring $R$.

A {\it $k$-variety} $X$ is a $k$-scheme which is integral (i.e., irreducible and reduced) and of finite type (but not necessarily separated).
When $k$ is understood, we refer to a $k$-variety as a variety. 

If $X$ is a variety, then an irreducible open subscheme  $Y \subset X$ is again a variety. 
If $X$ is a variety, then an irreducible closed subscheme of $Y\subset X$
is a variety if $Y$ is reduced. 
  
 Note that if $Y$ is any closed subset of a $k$-scheme $X$, then there is  a unique closed subscheme $Z\subset X$ such that $Z$ is reduced
and has the property that the underlying underlying space of $Z$ is $Y$.
The scheme $Z$ is called the {\it reduced induced scheme structure} on $Y$. In particular, if $X$ is a variety and $Y$ is irreducible, then $Z$ is a variety.

Two varieties $X$ and $Y$ are {\it birationally equivalent}
if their function fields are isomorphic. Geometrically, this means that
there are Zariski dense open
subvarieties $X_0\subset X$ and $Y_0 \subset Y$ with $X_0$ isomorphic to $Y_0$.

\begin{defn} A variety $X$  is {\it rational} if it is birationally equivalent to an affine space
  $\Bbb A^d$. Equivalently, $X$ is rational if its function field is isomorphic to $k(U_1,\dots,U_d)$, the field
  of rational functions on $d$ indeterminates $U_1,\dots, U_d$.
  \end{defn}
  

 

    \subsection{Invariant functions and algebraic quotients} 
    
   Let $X$ be a variety and let $G$ an affine algebraic group over $k$ acting (algebraically) on $X$.
Recall that $G$ is said to be {\it reductive} if it possesses a faithful semi-simple representation over $k$.
In what follows, we will always assume that $G$ is reductive.

  We write the coordinate ring (i.e., the global sections of the structure sheaf) of $X$ by $k[X]$  
  and the associated function field by $k(X)$.
  We denote the ring of polynomial $G$-invariant functions on $X$ by $k[X]^G$, so $k[X]^{G}$ is a $k$-subalgebra of $k[X]$.
  Similarly, we write the field of rational $G$-invariant functions by $k(X)^{G}$.
  
 \begin{defn} The {\it algebraic quotient}\footnote{This is also known in the literature as the {\it affine quotient}.} of $X$ by $G$ is the affine scheme
 \[
X/\!\!/G :=  \spec(k[X]^G)\, .
  \]
  We define a {\it rational quotient} for the action of $G$ on $X$ to be
  any variety $Y$ which is birationally equivalent to the
  affine quotient. Equivalently, 
  \[
  k(Y) \cong k(X)^G\, .
  \]
  \end{defn}

  \begin{lem}\label{lem:double-quotient1}  
   Suppose $G$, $H$ are reductive groups acting algebraically on $X$ and $Y$ respectively.
      If $X$ and $Y$ are affine varieties, then
      there is an isomorphism
      \[
        (X\times Y)/\!\!/(G\times H)\cong (X/\!\!/ G)\times (Y/\!\!/H)\, .
      \]
   \end{lem}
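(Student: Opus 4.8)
\emph{The plan} is to reduce the statement to a purely algebraic identity about rings of invariants, taking advantage of the fact that all the schemes involved are affine. Since $X$ and $Y$ are affine, so is the product $X\times Y$ (the product of $k$-schemes being the fiber product over $\spec k$), and under the canonical identification $k[X\times Y]\cong k[X]\otimes_k k[Y]$ the $G\times H$-action becomes the one in which $G$ operates on the first tensor factor through its action on $k[X]$ while acting trivially on $k[Y]$, and $H$ operates on the second factor through its action on $k[Y]$ while acting trivially on $k[X]$. Since moreover $\spec(A\otimes_k B)\cong\spec(A)\times\spec(B)$ for $k$-algebras $A,B$, it suffices to produce a natural isomorphism of $k$-algebras
\[
(k[X]\otimes_k k[Y])^{G\times H}\;\cong\;k[X]^G\otimes_k k[Y]^H ,
\]
and then apply $\spec$.

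To establish this identity I would compute the invariants one group at a time. Taking $G$-invariants first: because $G$ acts trivially on the tensor factor $k[Y]$, a choice of $k$-basis of $k[Y]$ exhibits $k[X]\otimes_k k[Y]$ as a direct sum of copies of the $G$-representation $k[X]$, and since passage to $G$-invariants commutes with arbitrary direct sums of representations this yields $(k[X]\otimes_k k[Y])^{G}\cong k[X]^G\otimes_k k[Y]$. The residual $H$-action on the right-hand side is trivial on the factor $k[X]^G$, so the very same argument gives $(k[X]^G\otimes_k k[Y])^{H}\cong k[X]^G\otimes_k k[Y]^H$; composing produces the required isomorphism. A more symmetric alternative, which uses reductivity of $G$ and $H$ directly, is to invoke the Reynolds operators $R_G$ on $k[X]$ and $R_H$ on $k[Y]$ and observe that $R_G\otimes R_H$ is the Reynolds operator for the $G\times H$-action on $k[X]\otimes_k k[Y]$, with image exactly $k[X]^G\otimes_k k[Y]^H$.

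Finally I would do the bookkeeping: each of the maps assembling the isomorphism above respects multiplication, so the composite is an isomorphism of $k$-algebras, and applying $\spec$ therefore yields an isomorphism of affine schemes which one checks coincides with the natural comparison morphism $(X\times Y)/\!\!/(G\times H)\to (X/\!\!/G)\times(Y/\!\!/H)$. I do not expect a genuine obstacle here; the only point warranting a moment's care is interchanging invariants with the (possibly infinite) direct sum decomposition of $k[X]\otimes_k k[Y]$, and this is harmless because taking invariants is a limit and so commutes with coproducts of representations. (Finite generation of $k[X]^G$ and $k[Y]^H$, which follows from reductivity, is not needed for the isomorphism itself, only for the statement that these quotients are of finite type over $k$.)
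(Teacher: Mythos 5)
Your proposal is correct and follows essentially the same route as the paper, which reduces the claim to the identity $R^{G}\otimes_{k}S^{H}\cong(R\otimes_{k}S)^{G\times H}$ for $X=\spec(R)$, $Y=\spec(S)$ and then applies $\spec$. You simply supply the details of that identity (taking invariants one factor at a time, using that invariants commute with direct sums), which the paper leaves as an observation.
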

   
     \begin{proof} This follows immediately from the observation that
      it $X=\spec(R)$ and $Y=\spec(S)$ then  we have 
      \[
      R^{G}\otimes_{k} S^H \cong(R\otimes_{k} S)^{G\times H}\, . \qedhere
      \]
    \end{proof}
   
     Suppose that  $G$ is reductive and acts on an affine variety $X$. Suppose $N \subset G$
  is a Zariski closed normal subgroup. Then the quotient group $G/N$ is defined, is reductive, and is affine 
  \cite[\S16]{Waterhouse}.

     \begin{lem} \label{lem:double-quotient2}
      Suppose $G$ is a reductive group acting on an affine variety $X$
      and suppose that $N\subset G$ is a Zariski closed normal subgroup.
      Then there is an isomorphism of algebraic quotients 
      \[
      X/\!\!/G=(X/\!\!/H)/\!\!/ (G/H)\, .
      \]
    \end{lem}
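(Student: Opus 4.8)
The plan is to pass to coordinate rings and reduce the statement to a purely formal identity of invariant subalgebras. Write $X=\spec(R)$, where $R$ is a finitely generated $k$-algebra and an integral domain, equipped with an algebraic (hence rational) action of $G$. Since $\spec(-)$ is a contravariant equivalence on affine schemes, it is enough to produce a canonical isomorphism of $k$-algebras $R^{G}\cong (R^{N})^{G/N}$ realizing both sides as the same subalgebra of $R$. The first thing to check is that $G$ acts on the subalgebra $R^{N}\subseteq R$ of $N$-invariants, and this is exactly where normality of $N$ is used: for $g\in G$, $r\in R^{N}$ and $n\in N$ one has
\[
n\cdot(g\cdot r)=g\cdot\bigl((g^{-1}ng)\cdot r\bigr)=g\cdot r,
\]
since $g^{-1}ng\in N$ and $r$ is $N$-invariant; hence $g\cdot r\in R^{N}$.

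Next I would observe that $N$ acts trivially on $R^{N}$ by construction, so the $G$-action on $R^{N}$ factors set-theoretically through $G/N$, and then argue that the resulting $G/N$-action is \emph{algebraic}, i.e.\ that $R^{N}$ is a rational $G/N$-module. For this, use that $R$ is the directed union of its finite-dimensional $G$-stable subspaces $V$; by the computation above each $V^{N}$ is a $G$-submodule of $R$ on which $N$ acts trivially, hence (by the standard description of representations of the quotient algebraic group, cf.\ \cite[\S16]{Waterhouse}) a rational representation of $G/N$. Passing to the union, $R^{N}=\bigcup_{V}V^{N}$ is a rational $G/N$-module and multiplication is $G/N$-equivariant, so $\spec(R^{N})=X/\!\!/N$ carries an algebraic $G/N$-action and the right-hand side $(X/\!\!/N)/\!\!/(G/N)$ is legitimately defined. (If one also wants $X/\!\!/N$ to be an affine variety of finite type, invoke the Hilbert--Nagata finiteness theorem for the reductive group $N$, but this is not needed for the bare isomorphism of algebraic quotients.)

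Finally I would compute the double invariants. An element of $R^{N}$ is fixed by $G/N$ precisely when it is fixed by $G$, because the $G/N$-action is the one induced from $G$ and $G\to G/N$ is surjective; therefore
\[
(R^{N})^{G/N}=\{\,r\in R^{N}:g\cdot r=r\text{ for all }g\in G\,\}=R^{N}\cap R^{G}=R^{G},
\]
the last equality holding because $R^{G}\subseteq R^{N}$. Applying $\spec$ gives the asserted isomorphism $X/\!\!/G\cong (X/\!\!/N)/\!\!/(G/N)$. Every step here is essentially bookkeeping with invariants; the one point I would single out as the main obstacle is the verification that the induced $G/N$-action on $R^{N}$ is algebraic, so that the iterated quotient on the right makes sense — this rests on the rationality of the original $G$-action together with the fact that a rational $G$-module annihilated by a closed normal subgroup descends to a rational module over the quotient group.
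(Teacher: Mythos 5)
Your proof is correct and takes essentially the same route as the paper, whose entire argument is the one-line observation that $R^{G}=(R^{N})^{G/N}$ for $X=\spec(R)$; you have simply filled in the details (normality giving $G$-stability of $R^{N}$, rationality of the induced $G/N$-action, and the double-invariants computation). No issues.
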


 \begin{proof}
  This follows from the observation that if $ X = \spec(R)$, then 
  $R^G = (R^{N})^{G/N}$.
 \end{proof}
 
In the case when $G = A \times B$, we have $A\subset G$ is normal with quotient $B$.
 Consequently,
 
   \begin{cor}\label{cor:double-quotient3}  
   The algebraic quotient for $G = A\times B$ acting on $X$ is given by
$
(X/\!\!/A)/\!\!/B
$.
 \end{cor}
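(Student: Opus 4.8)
The plan is to deduce the statement as the special case $N=A$ of Lemma \ref{lem:double-quotient2}, so the only real task is to check that the hypotheses of that lemma apply to the subgroup $A$ sitting inside $G=A\times B$.

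First I would record the group-theoretic facts. Identifying $A$ with $A\times\{e\}\subset G$, this subgroup is Zariski closed, being the kernel of the projection homomorphism $\pi_B\colon G\to B$; it is normal because conjugation by an element $(a,b)$ in a direct product preserves $A\times\{e\}$. The map $\pi_B$ then identifies the quotient group $G/A$ with $B$. For the reductivity hypotheses needed to iterate the algebraic quotient, note that $A$ is a quotient of the reductive group $G$ (namely $G/(\{e\}\times B)$), hence reductive, and likewise for $B$; since $A$ is reductive and acts on the affine variety $X=\spec(R)$, the ring $R^A$ is a finitely generated $k$-domain, so $X/\!\!/A$ is again an affine variety carrying the residual $B$-action.

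With these observations in place, Lemma \ref{lem:double-quotient2} applied with $N=A$ yields
\[
X/\!\!/G \;=\; (X/\!\!/A)/\!\!/(G/A) \;\cong\; (X/\!\!/A)/\!\!/B\, ,
\]
which is the assertion. At the level of coordinate rings this is nothing more than the identity $R^{A\times B}=(R^A)^B$: a function is invariant under $A\times B$ exactly when it is $A$-invariant and, viewed inside $R^A$, also invariant under the $B$-action induced on $R^A$. I do not anticipate any substantive obstacle; the entire content resides in Lemma \ref{lem:double-quotient2}, and this corollary simply records the direct-product form in which that lemma will be invoked later (for instance when $G$ is a cartesian product of copies of $\SO_3(\Bbb C)$).
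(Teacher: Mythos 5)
Your proposal is correct and follows exactly the paper's route: the paper likewise observes that $A\cong A\times\{e\}$ is a closed normal subgroup of $G=A\times B$ with quotient $B$ and then invokes Lemma \ref{lem:double-quotient2} with $N=A$. The extra details you supply (reductivity of $A$ and $B$, the residual $B$-action on $R^A$) are accurate but just make explicit what the paper leaves implicit.
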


  \subsection{Rational sections} Let $G$ act on $X$.
  The {\it normalizer} of a subvariety $\cal S \subset X$ is the subgroup $N\subset G$ 
defined by $\{g\in G\mid g\cal S\subseteq \cal S\}$. The {\it centralizer} is the normal subgroup
$C\subset N$ consisting of those elements of $G$ which fix $\cal S$ pointwise. The {\it Weyl group}
is the quotient $W = N/C$.

  \begin{defn}[{\cite[\S2.8]{PV}}]
    \label{defn:relativeSection}
    We say a subvariety $\cal S\subseteq X$ is a \textit{rational section} (with respect to $G$) if there exists a
   dense open subset $\cal S_0 \subset \cal S$ such that
      \begin{enumerate} 
      \item The Zariski closure of the orbit of $\cal S$ is $X$, i.e., $\overline{G\cal S}=X$.
      \item If $g \in G$ and $g \cal \cal S_0\cap \cal S_0\neq\emptyset$, then $g\in N$.
    \end{enumerate}
  \end{defn}
  It is not hard to see if $\cal S\subseteq X$ is a rational section then the restriction $\mathbb{C}[X]^{G}\to\mathbb{C}[\cal S]^{W}$ is injective.
  The latter induces an field homomorphism $\mathbb{C}(X)^{G}\to\mathbb{C}(\cal S)^{W}$.
 In fact, 
 
  \begin{prop}[{\cite[p.~161]{PV}}] \label{prop:sec-rel}
    Let $\cal S\subseteq X$ be a rational section with respect to $G$.
    Then the restriction homomorphism $k(X)^{G}\to k(\cal S)^{W}$ is an isomorphism.    
  \end{prop}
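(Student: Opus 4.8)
The plan is to produce a common birational model for the two fields in question, namely a birational quotient of $G\times\cal S$ by an auxiliary action. Injectivity of $k(X)^G\to k(\cal S)^W$ is already noted in the discussion preceding the statement: by condition (1) a $G$-invariant rational function on $X$ is determined by its restriction to the dense set $G\cal S$, hence by its restriction to $\cal S$, and that restriction is automatically $N$-invariant and fixed by $C$, so it lands in $k(\cal S)^{N/C}=k(\cal S)^W$. Thus the real content is surjectivity, and for that I would analyze the action morphism $\psi\colon G\times\cal S\to X$, $(g,s)\mapsto gs$.

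Equip $G\times\cal S$ with two commuting actions: $G$ acts by $g'\cdot(g,s)=(g'g,s)$ and $N$ acts by $n\cdot(g,s)=(gn^{-1},ns)$, the latter using that $N$ normalizes $\cal S$. Then $\psi$ is $G$-equivariant and $N$-invariant, and both actions on $G\times\cal S$ are \emph{free} (if $n\cdot(g,s)=(g,s)$ then $gn^{-1}=g$, so $n=e$, and likewise for $G$). Next I would identify the generic fibre of $\psi$. Fix $x_0=g_0s_0$ with $s_0$ in the dense open $\cal S_0$ of Definition \ref{defn:relativeSection}; if $(g,s)\in\psi^{-1}(x_0)$ with $s\in\cal S_0$, then $(g^{-1}g_0)s_0=s\in\cal S_0$, so $(g^{-1}g_0)\cal S_0\cap\cal S_0\neq\emptyset$, and condition (2) gives $n:=g^{-1}g_0\in N$, whence $(g,s)=(g_0n^{-1},ns_0)=n\cdot(g_0,s_0)$. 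Thus the part of $\psi^{-1}(x_0)$ lying over $G\times\cal S_0$ sits inside the single $N$-orbit of $(g_0,s_0)$. Since $\psi$ is $N$-invariant, every fibre is $N$-stable and closed; as all $N$-orbits have dimension $\dim N$, any $N$-stable closed set of dimension $<\dim N$ is empty, so for generic $x_0$ the whole fibre $\psi^{-1}(x_0)$ is exactly the free $N$-orbit $N\cdot(g_0,s_0)$. In particular $\psi$ is dominant (condition (1)) with generic fibre a single $N$-orbit.

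By Rosenlicht's theorem on geometric quotients there is an $N$-stable dense open $U\subseteq G\times\cal S$ with a geometric quotient $U\to U/N$ and $k(U/N)=k(G\times\cal S)^N$; the induced map $\bar\psi\colon U/N\dashrightarrow X$ is dominant with generic fibre a point, and a dominant, generically injective morphism of varieties in characteristic zero is birational, so $\psi^{*}\colon k(X)\xrightarrow{\ \sim\ }k(G\times\cal S)^N$. This isomorphism is $G$-equivariant ($G$ acting on $k(X)$ through $X$, and on $k(G\times\cal S)^N$ by left translation), so taking $G$-invariants gives
\[
k(X)^G\ \xrightarrow[\ \psi^{*}\ ]{\ \sim\ }\ \big(k(G\times\cal S)^N\big)^G\ =\ k(G\times\cal S)^{G\times N}.
\]
Peeling off $G$ first: the second projection $G\times\cal S\to\cal S$ is the trivial (free) $G$-torsor quotient, so $k(G\times\cal S)^G=k(\cal S)$, and under this identification the residual $N$-action on $\cal S$ is the original one $s\mapsto ns$, which factors through $W=N/C$ since $C$ acts trivially on $\cal S$; hence $k(G\times\cal S)^{G\times N}=k(\cal S)^N=k(\cal S)^W$. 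Finally, precomposing the whole chain with the inclusion $s\mapsto(e,s)$ of $\cal S$ into $G\times\cal S$ — along which $\psi$ restricts to $\cal S\hookrightarrow X$ and the second projection restricts to the identity — identifies the resulting isomorphism $k(X)^G\xrightarrow{\sim}k(\cal S)^W$ with the restriction homomorphism, which is what we want.

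The main obstacle is precisely the fibre analysis of $\psi$: one must rule out extra components of the generic fibre lying outside $G\times\cal S_0$ after passing to the dense open, and the point that makes this go through cleanly is the freeness of the $N$-action, which forces $N$-stable closed sets of dimension $<\dim N$ to be empty. A secondary technical nuisance is that $N$, $C$, and $W$ need not be reductive, so the iterated-quotient statements of Lemma \ref{lem:double-quotient2} and Corollary \ref{cor:double-quotient3} are not directly applicable; this is circumvented by working birationally throughout and invoking Rosenlicht's theorem, which requires no reductivity hypothesis.
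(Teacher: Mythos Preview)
The paper does not prove this proposition; it is quoted from \cite[p.~161]{PV} without argument. Your approach---analyzing the action map $\psi\colon G\times\cal S\to X$ with its commuting free $G$- and $N$-actions, identifying $k(X)\cong k(G\times\cal S)^N$ via Rosenlicht, and then peeling off $G$ to reach $k(\cal S)^W$---is exactly the standard one, and the architecture is correct.

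There is, however, a real gap precisely where you flag it, and your stated fix does not close it. You invoke that freeness of the $N$-action forces $N$-stable closed sets of dimension $<\dim N$ to be empty, and conclude the generic fibre is a single $N$-orbit. But the hypothetical extra pieces of $\psi^{-1}(x_0)$ lying over $G\times(\cal S\setminus\cal S_0)$ would themselves be $N$-orbits of dimension \emph{equal to} $\dim N$, not less; your inequality does not exclude them. Nor can you apply the fibre-dimension bound directly to $G\times(\cal S\setminus\cal S_0)$, since nothing in Definition~\ref{defn:relativeSection} makes $\cal S_0$ $N$-stable, so that set need not be $N$-stable. The repair is short: pass to the $N$-stable set $\cal S_1:=\cal S\setminus N\!\cdot\!\cal S_0$. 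Because $N\!\cdot\!\cal S_0$ is constructible and dense, $\overline{\cal S_1}$ is a proper $N$-stable closed subvariety of $\cal S$. Any $N$-orbit in a fibre of $\psi$ that avoids $G\times\cal S_0$ must lie in $G\times\overline{\cal S_1}$; now the restriction of $\psi$ to this $N$-stable closed set has $N$-stable fibres of dimension $\ge\dim N$, hence image of dimension strictly less than $\dim X$. For generic $x_0$ the fibre therefore lies entirely in the single orbit $N\cdot(g_0,s_0)$, and from there your argument proceeds verbatim.
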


 \begin{defn}
    \label{def:almostFree}
    The action of $G$ on $X$ is \emph{almost free} if there exists an open dense $G$-invariant subset $U\subseteq X$ such that for every closed point $x \in U$, the 
    stabilizer $G_x \subseteq G$ is trivial. 
  \end{defn}

\subsection{Rationality }
 A useful tool for proving rationality is the following version of descent for vector bundles along a torsor.

\begin{prop}[Descent {\cite{dolgachev}, \cite[2.13]{PV},  \cite{CT-S}}]\label{prop:no-name}
  Let $G$ be a reductive group that acts almost freely on a variety $X$.
    Let $\pi \: E\to X$ be a $G$-linearized vector bundle. Then the algebraic quotient
    $E/\!\!/G$ is birationally equivalent
    to  a vector bundle over the algebraic quotient $X/\!\!/G$.   
    \end{prop}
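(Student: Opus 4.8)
The plan is to reduce the statement to faithfully flat descent along a $G$-torsor, after replacing $X$ by a $G$-invariant dense open subvariety on which the action is as well-behaved as possible. First I would invoke the basic structural fact underlying the cited references (ultimately due to Rosenlicht): since $G$ is reductive --- hence smooth --- and acts almost freely on $X$, there is a $G$-invariant dense open subvariety $U\subseteq X$ which admits a geometric quotient $q\: U\to\bar U:=U/G$, and this $q$ is a $G$-torsor (fppf, hence \'etale-locally trivial). Because $q$ is a geometric quotient, $k(\bar U)=k(U)^{G}=k(X)^{G}$, so $\bar U$ is birationally equivalent to $X/\!\!/G$. The almost-freeness hypothesis is used exactly here: a generically free action of an affine algebraic group is generically a torsor.

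Next I would transport the bundle to $U$. The total space $E$ is a variety (reduced, and irreducible since it is locally of the form $X'\times\Bbb A^{\rk E}$ over opens $X'\subseteq X$), and $E_U:=\pi^{-1}(U)$ is a $G$-invariant dense open subvariety on which $\pi\: E_U\to U$ is a $G$-linearized vector bundle; the action of $G$ on $E_U$ is free, since any $g\in G$ fixing a point of $E_U$ fixes its image in $U$, forcing $g=e$. The point now is that a $G$-linearization of a quasi-coherent sheaf on $U$ is precisely descent data for that sheaf along the torsor $q$ (via the identification $U\times_{\bar U}U\cong G\times U$), so faithfully flat descent applied to $E_U$ yields a vector bundle $p\:\bar E\to\bar U$ together with a $G$-equivariant isomorphism $E_U\cong q^{*}\bar E=U\times_{\bar U}\bar E$, with $G$ acting only through the $U$-factor. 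That the descended object is again a vector bundle, of the same rank $\rk E$, holds because local freeness of finite rank is an fppf-local property.

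To conclude, base-changing the torsor $q$ along $p$ exhibits $q^{*}\bar E=U\times_{\bar U}\bar E\to\bar E$ as a $G$-torsor, so $E_U/G\cong\bar E$ as schemes over $\bar U$; in particular $E_U/G$ is the total space of a vector bundle over $\bar U$. Since $E_U$ is $G$-invariant and dense open in $E$, the quotient $E_U/G$ is birationally equivalent to $E/\!\!/G$ (both are models for the invariant field $k(E)^{G}$, per the conventions above). Hence $E/\!\!/G$ is birationally equivalent to the vector bundle $\bar E$ over $\bar U$; and since $\bar U$ is birational to $X/\!\!/G$ and every vector bundle is Zariski-locally trivial, $E/\!\!/G$ is birationally equivalent to a vector bundle over $X/\!\!/G$, as claimed.

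The serious content lies entirely in the first step --- producing a $G$-invariant dense open on which the quotient exists as a torsor birational to $X/\!\!/G$. After that, the argument is formal: faithfully flat descent, base change of torsors, and the elementary fact that the algebraic quotient is unchanged up to birational equivalence by restriction to a $G$-invariant dense open. There is essentially no computation involved.
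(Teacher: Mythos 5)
The paper does not actually prove this proposition --- it is quoted from the literature (Dolgachev, Popov--Vinberg, Colliot-Th\'el\`ene--Sansuc), with only the subsequent Remark reformulating the conclusion as an equivariant trivialization $U\times\mathbb{A}^d\cong\pi^{*}U$ with $G$ acting trivially on the $\mathbb{A}^d$ factor. Your argument is, in essence, the proof given in those references: Rosenlicht's theorem produces a dense open $U$ on which $q\:U\to\bar U$ is a $G$-torsor; a $G$-linearization is exactly descent data along $q$ via $U\times_{\bar U}U\cong G\times U$; fppf descent then yields $\bar E\to\bar U$ with $E_U\cong q^{*}\bar E$, and base change of torsors identifies $E_U/G$ with $\bar E$. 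That is correct, and shrinking $\bar U$ to an open trivializing $\bar E$ recovers the paper's Remark. One point deserves flagging: your last step equates ``birational to $\spec(k[E]^G)$'' with ``has function field $k(E)^G$,'' and likewise for $X$. For a general almost free action of a reductive group these can differ ($\mathbb{G}_m$ scaling $\mathbb{A}^2$ has $\Frac(k[X]^{G})=k$ but $k(X)^{G}=k(x/y)$), so what your argument genuinely proves is the statement about rational quotients, i.e.\ $k(E)^{G}\cong k(X)^{G}(t_1,\dots,t_d)$ --- which is Corollary \ref{cor:no-name} and is the only form the paper ever uses. The conflation is already built into the paper's definition of a rational quotient, so within the paper's conventions your proof is complete; outside them, the literal claim about the affine quotients would require the additional hypothesis that generic orbits are closed (as they are in every application made here, where the relevant groups are finite or the actions are generically torsors with affine quotient of the right dimension).
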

    
    \begin{rem}     In particular, $E/\!\!/G$ is rational if $X/\!\!/G$ is rational.
    
    Let $d$ be the rank of  $\pi$.
    Then with respect to the assumptions,
    another formulation of Proposition \ref{prop:no-name} is that there is a commutative diagram of varieties with $G$-action
    \[
    \xymatrix{
    U\times \Bbb A^d \ar[dr]_{p_1} \ar[rr]^\cong &&  \pi^{\ast}U\ar[dl]^\pi \\
   &  U
   }
   \]
   in which $U \subset X$ is an open subset on which $G$ acts freely,  $p_1$ is first factor projection,
   $G$ acts trivially on $\Bbb A^d$, and the horizontal map is an isomorphism of vector bundles.
   
Note that the construction of such a diagram is equivalent to finding a basis of equivariant sections of 
$\pi$ along the open set $U$.
\end{rem}

    \begin{cor}[``No-Name Lemma'' {cf.~\cite[cor.~3.8]{CT-S}}] \label{cor:no-name} 
There is an isomorphism of function fields
    \[
 k(E)^G \cong k(\Bbb A^d) \otimes_k k(X)^G\, .
    \]
    \end{cor}

\section{L-states}  \label{sec:L-states}
\subsection{Qubits}  \label{subsec:analytical-scalar}

A {\it qubit} is a complex vector space of dimension two. If $V$ is a qubit, then
we write 
\[
\scr V := \mathfrak{sl}(V)
\] for the Lie algebra of {\it  traceless endomorphisms} of $V$.
The {\it analytical scalar product} $\scr V\otimes \scr V \to \Bbb C$ is given by the Killing form
\[
\langle A,B\rangle := \tfrac{1}{2}\tr(AB)\, ,
\]
in which $\tr$ denotes the trace of an operator.
When $A = B$, we will resort to the notation $\lVert A\rVert^2 := \langle A, A\rangle$.

The {\it vector product} 
\[
[{-},{-}]\: \scr V\otimes \scr V \to \scr V
\]
is  the Lie bracket $[A,B] = AB-BA$.  Then the triple scalar product
\begin{align*}
\Lambda^3 \scr V &@> \cong >> \Bbb C\\
A\wedge B \wedge C &\mapsto \langle [A,B],C\rangle
\end{align*}
equips $\scr V$ with a canonical orientation. In particular, the special orthogonal group 
\[
\SO(\scr V)
\] 
of linear orientation preserving isometries of $\scr V$
is defined.

\begin{rem} If we choose an ordered basis for $V$, 
then $V$ is identified with $\Bbb C^2$ and we acquire an isomorphism of Lie algebras
 $\mathfrak{sl}_2(\Bbb C) \cong \scr V$.  With respect
 to this identification, $\scr V$ inherits positively
oriented ordered  basis of  {\it Pauli matrices}
\[
\sigma_x = \begin{pmatrix}
0& 1\\
1 & 0
\end{pmatrix},
\sigma_y = \begin{pmatrix}
0 & -i\\
i & 0
\end{pmatrix} \! ,
\sigma_z = \begin{pmatrix}
1 & 0\\
0 & -1
\end{pmatrix}\! .
\]
\end{rem} 

\begin{lem} There is a canonical isomorphism
\[
\SO(\scr V) \cong \PGL(V)\, .
\]
\end{lem}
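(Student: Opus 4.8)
The plan is to produce the isomorphism $\SO(\scr V) \cong \PGL(V)$ by noting that $\PGL(V)$ acts on $\scr V = \mathfrak{sl}(V)$ by conjugation (the adjoint action), and to check that this action lands in, and exhausts, the orientation-preserving isometries. First I would observe that for $g \in \GL(V)$, the map $A \mapsto gAg^{-1}$ preserves $\mathfrak{sl}(V)$ (trace is conjugation-invariant), that scalars act trivially, and hence we get a homomorphism $\PGL(V) \to \GL(\scr V)$. Next I would verify that this lands in $\Or(\scr V)$: since $\langle A, B\rangle = \tfrac12\tr(AB)$ and $\tr(gABg^{-1}) = \tr(AB)$, conjugation preserves the analytical scalar product. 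Likewise conjugation is a Lie algebra automorphism, so it preserves the vector product $[-,-]$ and therefore the triple scalar product $\langle [A,B],C\rangle$, i.e. it preserves the canonical orientation. So the homomorphism factors through $\SO(\scr V)$.

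The core of the argument is then that $\PGL(V) \to \SO(\scr V)$ is an isomorphism of algebraic groups. For injectivity: if $g$ centralizes all of $\mathfrak{sl}(V)$, then since $\mathfrak{sl}(V)$ together with the identity spans $\self(V)$, $g$ is central in $\GL(V)$, hence scalar, hence trivial in $\PGL(V)$. For surjectivity I would use a dimension and connectedness count: $\PGL(V)$ is a connected algebraic group of dimension $3$, $\SO(\scr V) \cong \SO_3(k)$ also has dimension $3$ and is connected, and an injective homomorphism of connected algebraic groups of the same dimension is surjective (the image is a closed connected subgroup of full dimension, hence everything). Alternatively — and this is perhaps cleaner for a "canonical" statement — one invokes the classical fact that the adjoint representation identifies the adjoint group of type $A_1$ with the split form of type $B_1 = \SO_3$; concretely, choosing a basis to get $\mathfrak{sl}_2(k)$ with Pauli-matrix basis $\sigma_x,\sigma_y,\sigma_z$, conjugation by $\SL_2(k)$ realizes exactly $\SO_3(k)$ acting on $\operatorname{span}(\sigma_x,\sigma_y,\sigma_z)$, and the kernel of $\SL_2 \to \SO_3$ is $\{\pm I\}$, so $\PGL_2(k) = \SL_2(k)/\{\pm I\} \cong \SO_3(k)$; since $\PGL(V) \cong \PGL_2(k)$ and $\SO(\scr V) \cong \SO_3(k)$ compatibly, we are done.

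The one point requiring a little care — and the part I'd flag as the main (mild) obstacle — is checking that the image genuinely lies in the \emph{orientation-preserving} isometries rather than all of $\Or(\scr V)$, i.e. that the determinant of $\operatorname{Ad}_g$ on $\scr V$ is $+1$. Over an algebraically closed field this is automatic from connectedness of $\PGL(V)$ (a connected group can only map into the identity component $\SO(\scr V)$ of $\Or(\scr V)$), so it costs nothing here; I would just remark this. The remaining verifications — that conjugation preserves $\langle-,-\rangle$ and $[-,-]$, injectivity, and the dimension count for surjectivity — are all routine, so I would state them crisply and not belabor them. If one wants the isomorphism to be manifestly canonical (basis-free), the cleanest writeup is: define $\Phi\colon \PGL(V) \to \SO(\scr V)$ by $\Phi(g)(A) = gAg^{-1}$, check it is a well-defined morphism of algebraic groups landing in $\SO(\scr V)$ as above, check $\ker\Phi$ is trivial, and conclude $\Phi$ is an isomorphism by the dimension-plus-connectedness argument, with the Pauli-matrix computation relegated to a remark as the concrete model.
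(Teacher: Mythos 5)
Your proposal is correct and follows essentially the same route as the paper: both construct the homomorphism $\PGL(V)\to\Or(\scr V)$ from the conjugation (adjoint) action and then identify it as an isomorphism onto the identity component. The paper simply asserts injectivity and surjectivity onto $\SO(\scr V)$, whereas you supply the standard justifications (kernel is the scalars; dimension-plus-connectedness, or the classical $\SL_2\to\SO_3$ covering), all of which are sound.
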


\begin{proof} The group $\GL(V)$ acts by conjugation on
endomorphisms of $V$ and preserves traces. Hence
$\GL(V)$ acts on 
 $\scr V$ and the action factors through $\PGL(V)$. 
Hence, the adjoint to the action $\PGL(V) \times V \to V$ defines a canonical homomorphism $\PGL(V) \to \Or(\scr V)$, where the target
denotes the orthogonal group of $\scr V$. This homomorphism
 is injective and onto the identity component.
\end{proof}

\begin{rem} 
 With respect to the identification $\SO(\scr V) \cong \PGL(V)$, the action of $\PGL(V)$ on $\scr V$ 
 is the standard action by isometries.
 \end{rem}

\subsection{L-states on $n$-qubits}
Given qubits $V_1,\dots,V_n$,
we define  the space of {\it L-states} 
\[
\scr L := \scr L(V_1,\dots,V_n) \subset \self(\otimes_i V_i)
  \]
as the codimension one affine subspace consisting of the trace one endomorphisms
of the tensor product $V_1\otimes \cdots \otimes  V_n$.  Elements
$\rho \in \scr L$ will be referred to as {\it L-states}.\footnote{In the quantum mechanics literature, $\scr L$ is known as
the {\it Liouville space} and $\rho$ is  known as a {\it mixed state} or {\it density operator}.}

Note that as a $\Bbb C$-scheme, $\scr L\subset \Bbb A^{2^n}$  is of codimension one and 
is isomorphic to $\Bbb A^{2^n-1}$.

The complex Lie group 
\begin{equation} \label{eqn:G}
G:= \prod_{i=1}^n \PGL(V_i) \cong \prod_{i=1}^n \SO(\scr V_i) 
\end{equation}
acts by conjugation on endomorphisms, and therefore on $\scr L$.

\subsubsection{The Bloch model for L-states} Let 
\[
\overline{n} = \{1 < \dots < n\}
\]
denote the totally ordered set on $n$ elements.
For a subset
\[
I = \{i_1 < i_2 < \cdots < i_k\} \subset \overline{n}\, ,
\] 
we set
\[
  V_I := V_{i_1} \otimes  \cdots \otimes V_{i_k} \, ,\quad  \scr V_I := \scr V_{i_1} \otimes  \cdots \otimes \scr V_{i_k}   \, .
\]

There is a canonical splitting  $\self(V_i) \cong \Bbb C \oplus \scr V_i$ and which induces s a decomposition
\[
\self(V_{\overline{n}} ) \,\,  \cong\,\,   \bigoplus_{I\subset {\overline{n}}} \scr V_I \, .
\]
Moreover, the condition that an endomorphism be of trace one on the left
corresponds the the element $1 \in \Bbb C = \scr V_\emptyset$ on the right.
This results in a canonical isomorphism
\[
\scr L \cong  \bigoplus_{\emptyset \ne I\subset {\overline{n}}} \scr V_I \, .
\]
\begin{defn} The {\it Bloch model} for L-states on $n$-qubits is
\[
\scr L^b := \bigoplus_{\emptyset \ne I\subset {\overline{n}}} \scr V_I \, .
\]
\end{defn}

Consequently, there is a canonical isomorphism
\begin{align*}
\scr L &@> \cong >> \scr L^b\, ,\\
\rho       &\mapsto \rho^b\, .
\end{align*}

\begin{ex}[2-qubits] 
We have
\[
\scr L^b_2 = \scr V_1 \oplus \scr V_2 \oplus  \scr V_{12} \, ,
\]
where $\scr V_{12} = \scr V_1 \otimes \scr V_2$.
\end{ex}

\begin{defn} Given  $\rho^b \in \scr L^b$, its image  $\rho_I \in \scr V_I$
with respect to the canonical projection $\scr L^b \to \scr V_I$ is the 
{\it $I$-correlation function}
of $\rho$. 

If $|I| = k$, we say that $\rho_I$ is a {\it $k$-point correlation function} of $\rho$.
\end{defn}

\begin{rem} 
The action of $G$ on $\scr L^b$ is componentwise. The action on the component
$\scr V_I$ factors through the projection
$G \to G_I := \prod_{i \in I}\SO(\scr V_i)$.
\end{rem}

  \section{X-states on $n$-qubits}  \label{sec:X-states}

 Recall that an L-state
is a trace one endomorphism $\rho$ of the tensor product
 $V_{\overline{n}}  = V_1 \otimes \cdots \otimes V_n$
 where $V_i$ is a complex vector space of dimension two.
 If we equip each $V_i$ with an ordered basis $\{e_0^i,e_1^i\}$, 
 then the tensor product $V_{\overline{n}} $ inherits a basis. The set of basis elements of the latter
 are in bijection with functions $\overline{n} \to \{0,1\}$, i.e., binary strings of length $n$, in which a function $\phi$ corresponds to
 the basis element
\[
e^1_{\phi(1)} \otimes \cdots \otimes e^n_{\phi(n)} \, .
\]
We may order these basis elements by means of the left lexicographic ordering of functions. 
In physics notation, one may write this basis element as the ket $|\phi(1)\phi(2)\cdots\phi(n)\rangle$.

A basis element is said to be of {\it even/odd parity} if and only if the sum $\sum_i \phi(i)$ is even/odd. The parity
determines a {\it super vector space structure}
\[
V_{\overline{n}}  = V^0_{\overline{n}}  \oplus V^1_{\overline{n}} \,  ,
\]
where $V^0_{\overline{n}} $ and $V^1_{\overline{n}} $ are the spans of the even and odd parity basis elements respectively.
In what follows, recall that a morphism of super vector spaces is a parity-preserving linear transformation.

\begin{defn} An L-state  $\rho \in \scr L$ is said to be a
 {\it  X-state} if  and only if $\rho$
  is a morphism of super vector spaces with respect to some choice
of ordered basis for the individual qubits. 
\end{defn}

We emphasize that in the above definition, the choice of basis sets is permitted to vary.

 \begin{ex}[$n=2$] In the case of 2-qubits $V_1,V_2$, an $L$-state $\rho$ is an X-state iff and only
 if for some choice of ordered basis for $V_1$ and $V_2$, the $(4\times 4)$-matrix corresponding to $\rho$
has the form
\begin{equation*}
\begin{pmatrix} 
\rho_{11} & 0  & 0 & \rho_{14} \\
0 & \rho_{22} &\rho_{23} & 0 \\
0& \rho_{32}& \rho_{33} & 0 \\
\rho_{41} & 0 & 0 & \rho_{44} 
\end{pmatrix}.
\end{equation*}
This description explains the origin of  the terminology ``X-state.'' 
\end{ex}

\subsection{The Bloch model for X-states}

Let $V$ be a qubit with associated Lie algebra $\scr V$.

\begin{defn} 
A  vector $v\in \scr V$ is {\it degenerate} if and only if $\|v\|^2 :=\langle v,v\rangle =0$, otherwise, $v$ is said to be {\it non-degenerate}.
\end{defn}

The set of degenerate vectors for $\scr V$ is a vector subspace $H\subset \scr V$ of codimension one.
Let 
\[
\Bbb P(\scr V) = (\scr V\setminus \{0\})/\Bbb G_m
\]
be the projectivization of $\scr V$ where $\Bbb G_m = \Bbb C^{\times} $ is the multiplicative group of the field  $\Bbb C$.
A point $L \in \Bbb P(\scr V)$  is {\it non-degenerate} if some non-zero vector $x\in L$
is non-degenerate.

Let
\[
\breve{\Bbb P}(\scr V) \subset \Bbb P(\scr V)
\]
be the Zariski open subset obtained by removing $\Bbb P(H)$ from  $\Bbb P(W)$. Then $\breve{\Bbb P}(W)$ is a variety.
 
\subsection{Longitudinals and transversals}

For a variety $X$, let 
\[
\scr V_X 
\] 
denote the trivial vector bundle over $X$ with total space $\scr V\times X$.
Let  $\scr O_{X}(-1)$ be the twisting sheaf considered as a line bundle, let $\tau_X$ denote the tangent bundle of $X$
and let $\tau_X(-1) = \tau_X \otimes \scr O_X(-1)$.
Then in the case of projective space $\Bbb P(\scr V)$, one has 
the  Euler exact sequence of vector bundles
\[
0\to \scr O_{\Bbb P(\scr V)}(-1) \to \scr V_{\Bbb P(\scr V)} \to \tau_{\Bbb P(\scr V)}(-1)\to 0\, .
\]
\begin{lem}
The restriction of Euler sequence to the open subset 
$\breve{\Bbb P}(\scr V) \subset \breve{\Bbb P}(\scr V) $ admits a preferred splitting
\[
\scr V_{\breve{\Bbb P}(\scr V)} =  \scr O_{\breve{\Bbb P}(\scr V)}(-1) \oplus \tau_{\breve{\Bbb P}(\scr V)}(-1) \, .
\] 
\end{lem}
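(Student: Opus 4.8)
The plan is to exhibit the splitting explicitly using the analytical scalar product, which is available precisely because we have removed the degenerate locus $\Bbb P(H)$. Over a point $L \in \breve{\Bbb P}(\scr V)$, the fiber of $\scr O_{\breve{\Bbb P}(\scr V)}(-1)$ is the line $L \subset \scr V$ itself, and by hypothesis the restriction of the bilinear form $\langle {-},{-}\rangle$ to $L$ is non-degenerate. Therefore $\scr V$ decomposes as an orthogonal direct sum $L \oplus L^{\perp}$, where $L^{\perp}$ is the orthogonal complement with respect to $\langle {-},{-}\rangle$. The assignment $L \mapsto L^{\perp}$ varies algebraically in $L$ (the orthogonal projection onto $L$ is given by $x \mapsto \langle x, v\rangle v/\langle v,v\rangle$ for any nonzero $v \in L$, and this is a regular expression on $\breve{\Bbb P}(\scr V)$ once one works in a local trivialization), so it defines a subbundle complementary to $\scr O_{\breve{\Bbb P}(\scr V)}(-1)$ inside $\scr V_{\breve{\Bbb P}(\scr V)}$.

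The key steps, in order, are: (1) observe that the tautological sub-line-bundle $\scr O_{\breve{\Bbb P}(\scr V)}(-1) \hookrightarrow \scr V_{\breve{\Bbb P}(\scr V)}$ has, at each point $L$, image a line on which the form $\langle{-},{-}\rangle$ is non-degenerate — this is exactly the definition of $\breve{\Bbb P}(\scr V)$; (2) define the orthogonal projection $\scr V_{\breve{\Bbb P}(\scr V)} \to \scr O_{\breve{\Bbb P}(\scr V)}(-1)$ fiberwise by $x \mapsto \tfrac{\langle x,v\rangle}{\langle v,v\rangle} v$ and check it is a well-defined morphism of vector bundles independent of the choice of generator $v$ of $L$ (the expression is homogeneous of degree zero in $v$); (3) conclude that its kernel is a subbundle, necessarily of rank equal to $\dim \scr V - 1$, splitting the inclusion; (4) identify the kernel with $\tau_{\breve{\Bbb P}(\scr V)}(-1)$ via the Euler sequence, since any splitting of a short exact sequence of vector bundles identifies the chosen complement with the quotient.

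I expect the only genuine point requiring care is step (2)–(3): verifying that the fiberwise orthogonal projection assembles into an honest morphism of vector bundles (equivalently, that it is regular, not merely defined pointwise) and that its kernel is locally free rather than just a coherent subsheaf. This is routine once one picks an affine chart $U \subset \breve{\Bbb P}(\scr V)$ over which $\scr O(-1)$ is trivialized by a regular nowhere-degenerate section $v \colon U \to \scr V \setminus H$: then the projection is visibly given by regular functions, its kernel is the kernel of a matrix of constant rank, hence a subbundle, and glueing over charts is immediate because the construction is manifestly independent of all choices. The "preferred" in the statement refers precisely to this canonical construction via the scalar product, as opposed to an arbitrary splitting.
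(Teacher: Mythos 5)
Your proposal is correct and follows essentially the same route as the paper: both use the non-degeneracy of the analytical scalar product on the tautological line over $\breve{\Bbb P}(\scr V)$ to split off its orthogonal complement inside $\scr V_{\breve{\Bbb P}(\scr V)}$, and then identify that complement with the Euler-sequence quotient $\tau_{\breve{\Bbb P}(\scr V)}(-1)$. Your write-up merely makes explicit the fiberwise projection formula and the regularity check that the paper leaves implicit.
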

\begin{proof}
Observe that the analytical scalar product induces a scalar product on the bundle $ \scr V_{\Bbb P(\scr V)}$. Moreover, the non-degeneracy
condition implies that
one has a preferred isomorphism of bundles
\[
 \scr V_{\Bbb P(\scr V)} \cong  \scr O_{\breve{\Bbb P}(\scr V)}(-1) \oplus  \scr O_{\breve{\Bbb P}(\scr V)}(-1)^{\perp}\, ,
\]
where the second term on the right denotes the orthogonal complement. 
The latter bundle
is identified with its cokernel of the morphism $ \scr O_{\breve{\Bbb P}(\scr V)}(-1) \to \scr V_{\breve{\Bbb P}(\scr V)} $ and the conclusion follows.
\end{proof}

\begin{defn}
A {\it longitudinal axis} for $\scr V$ consists of a choice of a (closed) point $z \in \breve{\Bbb P}(\scr V)$, or equivalently,
the line $\scr V^\ell \subset \scr V$ given by the fiber at $z$ of the line bundle $\scr O_{\breve{\Bbb P}(\scr V)}(-1)$.
The {\it transversal plane} $\scr V^t$ is the orthogonal complement  to $\scr V^\ell$ in $\scr V$.

A vector $x\in \scr V$ is said to be of transversal (resp.~ longitudinal) type if it lies in the subspace $\scr V^t$ (resp.~ $\scr V^\ell$).
\end{defn}

We note that the transversal/transversal type of a vector depends on the choice of longitudinal axis.

\begin{defn}
A {\it  longitudinal system} consists of a closed point
\[
\frak B := (\scr V_1^\ell,\dots,\scr V_n^\ell) \in \breve{\Bbb P}(\scr V_1) \times \cdots \times  \breve{\Bbb P}(\scr V_n)\, .
\]
\end{defn}

Given $\frak B$ one has an associated super vector space
\begin{equation} \label{eqn:even-odd}
\scr V_I = \scr V_I^e \oplus \scr V_I^o
\end{equation}
in which $\scr V_I^e$  (resp.~ $\scr V_I^e$) is spanned by those basic tensors consisting of an even number (resp.~ odd number) of vectors of transversal type.

\begin{lem}\label{lem:BlochX} A vector $\rho^b \in \scr L^b$
is an  {\it X-state} in the Bloch model  if and only if 
$\rho^b$ lies in the vector subspace
\[
X(\frak B) := \bigoplus_{\emptyset \ne I \subset \overline{n}}  \scr V_I^e \, ,
\]
for some $\frak B$.
\end{lem}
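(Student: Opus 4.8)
The plan is to unwind the definitions on both sides and exhibit the asserted subspace as precisely the one cut out by the super-vector-space condition, once the basis is chosen compatibly with $\frak B$.

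First I would fix a longitudinal system $\frak B = (\scr V_1^\ell,\dots,\scr V_n^\ell)$ and, for each $i$, choose an ordered basis $\{e_0^i, e_1^i\}$ of $V_i$ whose associated $\mathfrak{sl}$-direction realizes the longitudinal axis: concretely, pick the basis so that, under the identification $\scr V_i \cong \mathfrak{sl}_2(\Bbb C)$ of the Remark on Pauli matrices, $\scr V_i^\ell$ is spanned by the diagonal traceless matrix $\sigma_z$ and the transversal plane $\scr V_i^t$ is spanned by the off-diagonal-supported matrices $\sigma_x, \sigma_y$. This is possible precisely because $z \in \breve{\Bbb P}(\scr V_i)$ is non-degenerate: the longitudinal line is spanned by a non-degenerate vector, and after rescaling and acting by $\PGL(V_i) = \SO(\scr V_i)$ we may assume that vector is $\sigma_z$. (I should remark that any two such choices differ by an element of $G$, which is what ultimately makes the ``for some $\frak B$'' clause meaningful and $G$-equivariant.) The key bookkeeping observation is then: in this basis, a matrix unit $E_{ab}$ in $\self(V_i)$ ($a,b\in\{0,1\}$) changes the parity bit exactly when $a\neq b$, i.e.\ the off-diagonal matrix units are the parity-reversing ones; and the off-diagonal part of $\scr V_i$ is exactly $\scr V_i^t$, while the diagonal traceless part is $\scr V_i^\ell$. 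In the terminology of \eqref{eqn:even-odd}, a basic tensor in $\scr V_I$ reverses the overall parity of $V_{\overline n}$ iff it involves an odd number of transversal-type factors.

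Next I would translate the condition ``$\rho$ is a morphism of super vector spaces'' through the Bloch isomorphism $\scr L \xrightarrow{\ \cong\ } \scr L^b = \bigoplus_{\emptyset\ne I}\scr V_I$. Writing $\rho = \sum_I \rho_I$ with $\rho_I\in\scr V_I$ (and absorbing the $\emptyset$-term, which is the fixed scalar $1$, into the identity), the operator $\rho$ preserves the $\Bbb Z/2$-grading $V_{\overline n} = V^0_{\overline n}\oplus V^1_{\overline n}$ iff each homogeneous component $\rho_I$ does. By the parity computation above, $\rho_I$ is parity-preserving iff it lies in $\scr V_I^e$, the span of basic tensors with an even number of transversal factors; the parity-reversing complement $\scr V_I^o$ is exactly the obstruction. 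Summing over $I$, $\rho$ is a super-vector-space morphism (for this particular choice of bases compatible with $\frak B$) iff $\rho^b\in\bigoplus_{\emptyset\ne I}\scr V_I^e = X(\frak B)$. Finally, the definition of X-state quantifies over \emph{some} choice of ordered bases, and every choice of ordered bases for the $V_i$ determines a longitudinal system $\frak B$ (take $\scr V_i^\ell$ spanned by the corresponding $\sigma_z$, which is automatically non-degenerate); conversely every $\frak B$ arises this way. Hence $\rho^b$ is an X-state iff it lies in $X(\frak B)$ for some $\frak B$, which is the claim.

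The main obstacle, such as it is, is purely one of careful bookkeeping rather than depth: one must verify that the splitting $\scr V_i = \scr V_i^\ell \oplus \scr V_i^t$ coming from the preferred orthogonal decomposition in the preceding Lemma really does coincide, after the basis choice, with the diagonal/off-diagonal splitting of traceless $2\times 2$ matrices, and that the induced splitting $\scr V_I = \scr V_I^e\oplus \scr V_I^o$ of \eqref{eqn:even-odd} matches the parity decomposition of $\self(V_I)$ restricted to the traceless-tensor summand. Concretely this is the statement that $\|\sigma_z\|^2 = \tfrac12\tr(\sigma_z^2) = 1 \neq 0$ so $\sigma_z$ is non-degenerate, together with the elementary fact that $E_{00}-E_{11}$ is parity-preserving while $E_{01}, E_{10}$ are parity-reversing. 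Once this identification is in place, the rest is a formal consequence of the Bloch decomposition being an isomorphism of the relevant graded pieces, so I would present the Lemma's proof as: (i) reduce to a single qubit and identify $\scr V_i^\ell, \scr V_i^t$ with the diagonal/off-diagonal parts in a $\frak B$-adapted basis; (ii) observe the parity behavior of matrix units; (iii) conclude that the super-morphism condition on each $\rho_I$ is exactly membership in $\scr V_I^e$; (iv) note the passage between basis choices and longitudinal systems is surjective.
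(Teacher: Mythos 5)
Your proposal is correct and follows essentially the same route as the paper: both identify $\scr V_i^t$ with the off-diagonal (parity-reversing) part of $\mathfrak{sl}(V_i)$ in a $\frak B$-adapted Pauli basis, reduce the super-morphism condition to requiring an even number of transversal tensor factors, and note the correspondence between ordered basis choices and longitudinal systems. The only difference is cosmetic: the paper phrases the parity count in terms of the projection operators $|\phi\rangle\langle\psi|$ and the set $J$ of indices where $\phi(i)\ne\psi(i)$, whereas you phrase it via matrix units and grading preservation of each component $\rho_I$.
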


\begin{ex}[$n=2$] \label{ex:n=2} When  $n=2$ and for a given choice of $\frak B$, one has
\[
X(\mathfrak B) =  \scr V^\ell_1\,\, \oplus \,\,\scr V^\ell_2\,\, \oplus\,\,   \scr V_1^t \otimes \scr V_2^t \,\, \oplus \,\,  \scr V_1^\ell \otimes \scr V_2^\ell  \, .
\]
\end{ex}

  \begin{proof}[Proof of Lemma \ref{lem:BlochX}]
A choice of ordered
basis for each qubit allows one to write an L-state
as a linear combination of projection operators of the form
\begin{equation} \label{eqn:projection}
|\phi(1)\cdots \phi(n)\rangle\langle \psi(1)\cdots \psi(n)| = \bigotimes_{i=1}^{n}  |\phi(i) \rangle\langle  \psi(i)|\, ,
\end{equation}
for binary sequences $\phi,\psi\: \overline{n} \to \{0,1\}$. Note that the  $|\phi(i) \rangle\langle  \psi(i)| \in \scr V_i$. 

 An L-state is an
X-state if and only it is a linear combination of such operators constrained by the condition that 
 $\sum_i \phi(i) \equiv \sum_i \psi(i) \mod 2$. Equivalently, if we let $J \subset n$ be the set of indices $i$
 in which $\phi(i) \ne \psi(i)$, then to obtain an X-state, we require
 \[
 \sum_{i\in J} \phi(i) \equiv\sum_{i\in J} \psi(i) \mod 2\, .
\]
But it is easy to see that this condition is equivalent to the condition that the indexing set $J$ having even cardinality.

Moreover, the basis set for the qubit $V_i$ induces a basis of Pauli matrices for the associated inner product space $\scr V_i$
with the span of $\sigma_z$  giving the longitudinal $\scr V_i^\ell$, and the span of $\sigma_x,\sigma_y$ as the transversal subspace $\scr V_i^t$. We may write these operators as linear 
combinations of projections as follows:
\[
\sigma_x = |0 \rangle \langle 1| +  |1 \rangle \langle 0|, \quad \sigma_y = -i|0 \rangle \langle 1| +  i|1 \rangle \langle 0|, \quad
 \sigma_z = |0 \rangle \langle 0| -  |1 \rangle \langle 1|\, .
\]
By writing each  projection of the form $|r\rangle \langle s|$ as a linear combination of the Pauli matrices, one sees that 
the operators $|0\rangle \langle 1|, |1\rangle \langle 0|$ are elements of $\scr V_i^t$. It follows that
when $\phi(i) \ne \psi(i)$, the operator $|\phi(i) \rangle \langle\psi(i)|$ is an element of $\scr V_i^t$ and
the expression on the right of \eqref{eqn:projection} defines an X-state in Bloch form if and only if
$\phi(i) \ne \psi(i)$ for an even number of indices $i\in \overline{n}$, i.e., the indexing set $J$ appearing above has even cardinality. 
Hence, there is a bijection between X-states $\rho \in \scr L$ to and vectors $\rho^b \in \scr L^b$ which lie in $X(\frak B)$ for
a variable longitudinal system $\frak B$.
\end{proof}

\begin{notation} Henceforth, when referring to L-states we will only be making use of the Bloch model. If no confusion arises, we will remove the superscript
from $\scr L^b$. That is,  $\scr L$ now denotes the Bloch model for the space of L-states.
\end{notation}
  
\begin{defn}\label{defn:X-states} 
The {\it variety of  X-states} is
the  $\Bbb C$-subscheme 
\[
\scr X \subset \scr L
\] 
given by the  reduced induced subscheme structure on the Zariski closure of the set of X-states.
\end{defn}

The following justifies calling $\scr X$ a variety.

\begin{lem} The scheme $\scr X$ is a variety with 
 $G$-action. 
\end{lem}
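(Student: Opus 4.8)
The plan is to establish the two assertions separately: that $\scr X$ is irreducible (hence a variety, since it carries the reduced induced structure by Definition \ref{defn:X-states}), and that it is $G$-invariant.

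\textbf{Irreducibility.} First I would observe that, by Lemma \ref{lem:BlochX}, the underlying set of X-states is exactly the union
\[
\bigcup_{\frak B}  X(\frak B) \subset \scr L\, ,
\]
as $\frak B$ ranges over the parameter variety $\breve{\Bbb P}(\scr V_1) \times \cdots \times \breve{\Bbb P}(\scr V_n)$. The idea is to realize this union as the image of an irreducible variety under a morphism, since the closure of the image of an irreducible variety is irreducible. Concretely, I would form the total space of the sub-bundle of the trivial bundle $\scr L \times \bigl(\prod_i \breve{\Bbb P}(\scr V_i)\bigr)$ whose fiber over $\frak B$ is $X(\frak B)$. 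To see this really is a sub-bundle (i.e.\ locally trivial of constant rank), I would invoke the splitting $\scr V_{\breve{\Bbb P}(\scr V)} = \scr O_{\breve{\Bbb P}(\scr V)}(-1) \oplus \tau_{\breve{\Bbb P}(\scr V)}(-1)$ from the Lemma preceding Definition \ref{defn:relativeSection}'s neighborhood — this splitting makes the decomposition $\scr V_I = \scr V_I^e \oplus \scr V_I^o$ vary algebraically and locally trivially over the $\frak B$-parameter space, so $\bigoplus_{I\neq\emptyset} \scr V_I^e$ assembles into an algebraic vector bundle $\scr E$ over $P := \prod_i \breve{\Bbb P}(\scr V_i)$. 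The base $P$ is irreducible (a product of open subvarieties of projective spaces), and a vector bundle over an irreducible base is irreducible, so the total space of $\scr E$ is an irreducible variety. The evident map $\scr E \to \scr L$ (forget $\frak B$, remember the vector) is a morphism of varieties whose image is precisely the set of X-states; hence its Zariski closure $\scr X$ is irreducible. Being irreducible and reduced by construction, $\scr X$ is a variety.

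\textbf{$G$-invariance.} For this I would argue that the set of X-states is $G$-stable and then pass to the closure. The group $G = \prod_i \PGL(V_i)$ acts on $\scr L$; I would check that $g \cdot X(\frak B) = X(g \frak B)$ for $g \in G$, where $g$ acts on $P$ through $\prod_i \SO(\scr V_i)$ acting on each $\breve{\Bbb P}(\scr V_i)$ (this action is well-defined since isometries preserve the non-degeneracy condition cutting out $\breve{\Bbb P}(\scr V_i)$ from $\Bbb P(\scr V_i)$). Indeed, an isometry $g_i$ of $\scr V_i$ carries the longitudinal line $\scr V_i^\ell$ to the longitudinal line of $g_i \frak B$ and the orthogonal complement $\scr V_i^t$ to the corresponding transversal plane, so it carries a basic tensor with an even number of transversal factors to another such; summing over $I$ gives $g\cdot X(\frak B) = X(g\frak B)$. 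Taking the union over all $\frak B$, the set of X-states is $G$-stable. Since $G$ acts on $\scr L$ by an algebraic action and hence by automorphisms of the scheme $\scr L$, each $g$ carries the Zariski closure of a $G$-stable set to itself; therefore $\scr X$ is $G$-invariant. (Equivalently: the map $\scr E \to \scr L$ above is $G$-equivariant for the $G$-action on $\scr E$ covering the $G$-action on $P$, so its closed image inherits the $G$-action.)

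\textbf{Expected main obstacle.} The routine points (irreducibility of the base, closure of an image being irreducible, closure of a $G$-stable set being $G$-stable) are immediate; the one step requiring genuine care is verifying that the fiberwise subspaces $X(\frak B)$ fit together into an honest algebraic vector bundle — i.e.\ that the super-vector-space decomposition \eqref{eqn:even-odd} of each $\scr V_I$ varies algebraically and with locally constant rank as $\frak B$ moves. This is where the preferred splitting of the Euler sequence over $\breve{\Bbb P}(\scr V)$ does the real work: it provides, Zariski-locally on each $\breve{\Bbb P}(\scr V_i)$, algebraic trivializations compatible with the $\scr V_i^\ell \oplus \scr V_i^t$ decomposition, and then $\scr V_I^e$ is built from these by tensor and direct-sum operations that preserve local triviality. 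Once that bundle structure is in hand, everything else follows formally.
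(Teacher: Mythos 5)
Your proof is correct, but your irreducibility argument takes a different (and heavier) route than the paper's. The paper fixes a single longitudinal system $\frak B$, notes that $X(\frak B)$ is an affine space and hence irreducible, observes that the orbit $G\cdot X(\frak B)$ is irreducible because $G$ is connected (it is the image of the irreducible variety $G\times X(\frak B)$ under the action map), and concludes that the closure of this orbit --- which is all of $\scr X$, since $G$ acts transitively on $\prod_i\breve{\Bbb P}(\scr V_i)$ --- is irreducible. This sidesteps entirely the one step you flag as delicate: there is no need to verify that the subspaces $X(\frak B)$ assemble into a locally trivial algebraic vector bundle, because connectedness of $G$ does all the work. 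Your bundle-theoretic argument is nonetheless sound, and in fact you are anticipating the vector bundle $\scr E\to\breve{\Bbb P}_{\overline{n}}$ that the paper constructs explicitly in the very next subsection (via the external tensor products $p_I^\ast\scr O_I(-1)\boxtimes p_J^\ast\tau_J(-1)$) and uses for the dimension count and the birational model of $\scr X$; so what your approach "buys" is an early payment on machinery the paper needs anyway, at the cost of front-loading the local-triviality verification into a lemma where a two-line group-theoretic argument suffices. Your treatment of $G$-invariance (stability of the set of X-states plus passage to the closure under automorphisms) matches the paper's.
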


\begin{proof} Observe that $G$ acts continuously
on $\scr L$ and the set of X-states is an invariant subset with respect to the action. 
It follows by continuity that $G$ acts on $\scr X$.

The scheme $\scr X$ is of finite type since 
it is a subscheme of the affine space $\scr L \cong \Bbb A^{2^n-1}$.
It suffices to prove that
$\scr X$ is irreducible.
As $X(\frak B)$ is an affine space, it is irreducible. The
orbit $G\cdot X(\frak B)$ is irreducible as $G$ is connected. Then
the closure of the orbit is irreducible by \cite[ex.~1.1.4]{Hartshorne}.
\end{proof}

\subsection{A vector bundle model for X-states}
We shall describe a vector bundle 
\[
\scr E \to \textstyle{\prod}_{i=1}^n \breve{\Bbb P}(\scr V_i)
\]
 whose total space $\scr E$ is birationally equivalent to $\scr X$
 in which the fiber at  $\frak B \in \prod_{i=1}^n \breve{\Bbb P}(\scr V_i)$ is 
  the vector space $X(\frak B)$ of Lemma \ref{lem:BlochX}.

To avoid clutter, it will be convenient to introduce some additional notation.
We define
\[
 \breve {\Bbb P}_i :=  \breve {\Bbb P}(\scr V_i)\, ,
\]
and for $I \subset \overline{n}$, we define
\[
\breve  {\Bbb P}_I := \prod_{i\in I}\breve {\Bbb P}_i\, .
\]
Let $p_I\: \breve{\Bbb P}_{\overline{n}}  \to \breve{\Bbb P}_I$ be the projection.

Additionally, we set
\begin{itemize}
\item $\scr O_i := \scr O_{ \breve {\Bbb P}_i}$,
 \item $\scr O_i(-1) := \scr O_{ \breve {\Bbb P}_i}(-1)$,
\item $\tau_i(-1) :=  \tau_{ \breve{\Bbb P}_i}(-1)$,
\end{itemize}
for $j = 0,1$.

Then for a pair of disjoint subsets $I, J \subset \overline{n}$, we define
\[
\scr O_I(-1) := \boxtimes_{i\in I}  \scr O_i(-1) \quad \text{ and } \quad 
 \quad 
\scr \tau_J (-1) := \boxtimes_{j\in J}  \tau_j(-1)\, ,
\]
where $\boxtimes$ denotes external tensor product.

Then by construction, the vector bundle over $\breve{\Bbb P}_{\overline{n}} $ defined by
\[
 \scr E_{I,J} := p^\ast_I(\scr O_I(-1)) \boxtimes p_J^\ast(\scr \tau_J (-1)) \, ,
\]
has fiber at $\frak B \in \breve{\Bbb P}_{\overline{n}} $ given by 
$\scr V^\ell_I \otimes \scr V^t_J$.

Let $\frak I_n$ denote the set of ordered pairs $(I,J)$ in which $I,J \subset \overline{n}$ are disjoint subsets
such that $I \cup J$ is non-empty and $J$ has even cardinality. 
Then
the Whitney sum
\[
\scr E:= \bigoplus_{(I,J) \in \frak I_n} \scr E_{I,J}
\]
is a vector bundle over $\breve{\Bbb P}_{\overline{n}} $. 
By Lemma \ref{lem:BlochX}, the fiber of this bundle at $\frak B$ is $X(\frak B)$.
This completes the construction of the desired vector bundle. We now investigate its relation to the variety $\scr X$.

There is a
canonical morphism of schemes
\begin{equation} \label{eqn:e_to_l}
\scr E\to \scr L 
\end{equation}
such that the restriction to the fiber of $\scr E$ at $\frak B$  is the inclusion $X(\frak B)\to \scr L$.

\begin{lem} The scheme theoretic image of the morphism
\eqref{eqn:e_to_l} coincides with the variety $\scr X$.
\end{lem}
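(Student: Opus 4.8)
The plan is to realize $\scr X$ directly as the scheme-theoretic image of \eqref{eqn:e_to_l}, using only the set-theoretic description of X-states provided by Lemma \ref{lem:BlochX} together with the standard behaviour of scheme-theoretic images under quasi-compact morphisms. Write $f\:\scr E\to\scr L$ for the morphism in \eqref{eqn:e_to_l} and let $Z\subseteq\scr L$ be its scheme-theoretic image, i.e.\ the closed subscheme cut out by the kernel of $\scr O_{\scr L}\to f_\ast\scr O_{\scr E}$. First I would record that the base $\breve{\Bbb P}_{\overline{n}}$ is an open subscheme of a product of copies of $\Bbb P^2$ over $\Bbb C$, hence a variety, so its vector bundle $\scr E$ has reduced Noetherian total space and $f$ is a quasi-compact morphism of finite type. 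Then two standard facts apply: since $\scr E$ is reduced, every ring $(f_\ast\scr O_{\scr E})(U)=\scr O_{\scr E}(f^{-1}U)$ is reduced, so the image of $\scr O_{\scr L}\to f_\ast\scr O_{\scr E}$ is reduced and $Z$ is a reduced closed subscheme of $\scr L$; and since $f$ is quasi-compact, the underlying topological space of $Z$ is the Zariski closure $\overline{f(\scr E)}$ of the set-theoretic image of $f$. Thus $Z$ is the reduced induced subscheme structure on $\overline{f(\scr E)}$, and it remains only to compute this closure.

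This is exactly where Lemma \ref{lem:BlochX} enters. By construction the fibre of $\scr E$ over a closed point $\frak B\in\breve{\Bbb P}_{\overline{n}}$ is the linear subspace $X(\frak B)\subseteq\scr L$, and the restriction of $f$ to that fibre is precisely the inclusion $X(\frak B)\hookrightarrow\scr L$. Hence, letting $\frak B$ range over all closed points, the set of closed points of $\scr L$ lying in $f(\scr E)$ equals $\bigcup_{\frak B}X(\frak B)$, which by Lemma \ref{lem:BlochX} is exactly the set of X-states. Since $f$ is of finite type between Noetherian schemes, $f(\scr E)$ is constructible by Chevalley's theorem, so its closure in the Jacobson scheme $\scr L\cong\Bbb A^{2^n-1}$ agrees with the closure of its set of closed points; therefore $\overline{f(\scr E)}$ is the Zariski closure of the set of X-states. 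Combining this with the previous paragraph, $Z$ is the reduced closed subscheme of $\scr L$ supported on the Zariski closure of the set of X-states, which is precisely the reduced induced subscheme structure of Definition \ref{defn:X-states}; hence $Z=\scr X$.

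I do not anticipate a serious obstacle: the argument is formal once Lemma \ref{lem:BlochX} is in hand. The single point that genuinely needs care is the identification of the scheme-theoretic image with the reduced closed subscheme on $\overline{f(\scr E)}$, which is exactly where quasi-compactness of $f$ and reducedness of the total space $\scr E$ are used — and it is the reason the model must be an honest vector bundle rather than some non-reduced thickening. A variant that sidesteps naming scheme-theoretic images is to observe that $f$ factors through the closed subscheme $\scr X\hookrightarrow\scr L$ (because each $X(\frak B)$ does and $\scr E$ is reduced) and that the induced morphism $\scr E\to\scr X$ is dominant by Lemma \ref{lem:BlochX}; then the scheme-theoretic image of $f$, being a reduced closed subscheme of the variety $\scr X$ that contains a dense subset, must be all of $\scr X$.
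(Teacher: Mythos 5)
Your proposal is correct and follows essentially the same route as the paper: identify the closed-point image of \eqref{eqn:e_to_l} with $\bigcup_{\frak B}X(\frak B)$ (the paper phrases this as the orbit $G\cdot X(\frak B)$, using transitivity of $G$ on $\breve{\Bbb P}_{\overline{n}}$, which is the same set), invoke Lemma \ref{lem:BlochX} to see this is exactly the set of X-states, and conclude by taking the reduced closure as in Definition \ref{defn:X-states}. You are more explicit than the paper about the scheme-theoretic-image formalities (quasi-compactness, reducedness of $\scr E$, Chevalley), which the paper leaves implicit; that is a harmless and welcome addition, not a different argument.
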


\begin{proof} The group  $G$ acts transitively on the base $\breve{\Bbb P}_{\overline{n}} $ of $\scr E$. Consequently,
on the level of closed points, the image of the map \eqref{eqn:e_to_l} coincides with the image
of the  composite map
\[
G\times X(\frak B) @> \subset >> G\times \scr L @>\bold \cdot >> \scr L
\]
given by the inclusion followed by the action. But the latter coincides with the 
orbit $G\cdot X(\frak B)$.
\end{proof}

Let $ \accentset{\circ}{\scr L}\subset \scr L$ denote the 
 open subscheme defined by the condition that the projection of a vector to each one-point correlation function
 has non-zero squared norm, i.e., 
 \[
 \accentset{\circ}{\scr L} := \scr L \times_{\prod_i \scr V_i} ( \textstyle{\prod_{i=1}^n} \breve{ \scr V}_i )\, .
 \]
 Let 
 \[
 \accentset{\circ}{\scr E} = \scr E\times_{\scr L} \accentset{\circ}{\scr L}\, .
\]
 Then  $ \accentset{\circ}{\scr E}$ is the base change of $\scr E$ along the open subscheme $ \accentset{\circ}{\scr L}\subset \scr L$.
In particular, 
the base change
\[
 \accentset{\circ}{\scr X} := \scr X \times_{\scr L} \accentset{\circ}{\scr L}
 \]
 coincides with the scheme theoretic image of the morphism $ \accentset{\circ}{\scr E} \to \accentset{\circ}{\scr L}$ (cf.~\cite[\href{https://stacks.math.columbia.edu/tag/081I}{Tag 081I}]{stacks-project}).
 Furthermore,  since base change preserves closed immersions, we have
 
 \begin{lem} The morphism $ \accentset{\circ}{\scr E}  \to \accentset{\circ}{\scr L}$ is a closed immersion whose
 scheme theoretic image is  $ \accentset{\circ}{\scr X}$. In particular, one has an isomorphism 
 $ \accentset{\circ}{\scr E} \cong  \accentset{\circ}{\scr X} $.
 \end{lem}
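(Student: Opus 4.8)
The plan is to show that the morphism $\accentset{\circ}{\scr E}\to\accentset{\circ}{\scr L}$ is a closed immersion. Granting this, the lemma follows at once: the scheme-theoretic image of $\accentset{\circ}{\scr E}\to\accentset{\circ}{\scr L}$ has already been identified with $\accentset{\circ}{\scr X}$ above, and a closed immersion is an isomorphism onto its scheme-theoretic image, so $\accentset{\circ}{\scr E}\cong\accentset{\circ}{\scr X}$. To begin, I would observe that $\scr E$ is canonically a sub-bundle of the trivial bundle $\scr L\times\breve{\Bbb P}_{\overline{n}}$ over $\breve{\Bbb P}_{\overline{n}}$: regrouping the Whitney summands $\scr E_{I,J}$ by the nonempty set $K=I\sqcup J$ and using \eqref{eqn:even-odd} identifies the fiber of $\scr E$ at $\frak B$ with $X(\frak B)=\bigoplus_{\emptyset\ne K\subseteq\overline{n}}\scr V_K^e\subseteq\bigoplus_{\emptyset\ne K\subseteq\overline{n}}\scr V_K=\scr L$. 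The resulting closed immersion $j\:\scr E\hookrightarrow\scr L\times\breve{\Bbb P}_{\overline{n}}$ has first-factor projection equal to \eqref{eqn:e_to_l}. Base changing $j$ along the open immersion $\accentset{\circ}{\scr L}\hookrightarrow\scr L$ and using stability of closed immersions under base change, one obtains a closed immersion $\accentset{\circ}{j}\:\accentset{\circ}{\scr E}\hookrightarrow\accentset{\circ}{\scr L}\times\breve{\Bbb P}_{\overline{n}}$ whose first-factor projection is the morphism $\accentset{\circ}{\scr E}\to\accentset{\circ}{\scr L}$ we wish to understand.

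Next I would introduce the morphism $q\:\accentset{\circ}{\scr L}\to\breve{\Bbb P}_{\overline{n}}$ recording, for an L-state $\rho$, the tuple of lines spanned by its one-point correlation functions: by the definition of $\accentset{\circ}{\scr L}$ these functions factor through $\prod_i\breve{\scr V}_i$, and since a non-degenerate vector spans a non-degenerate (hence nonzero) line, the quotient maps $\breve{\scr V}_i\to\breve{\Bbb P}_i$ compose to define $q$. The key claim is that $\accentset{\circ}{j}$ factors through the graph $\Gamma_q\subseteq\accentset{\circ}{\scr L}\times\breve{\Bbb P}_{\overline{n}}$, i.e.\ that $\mathrm{pr}_2\circ\accentset{\circ}{j}$ coincides with the composite of $\accentset{\circ}{\scr E}\to\accentset{\circ}{\scr L}$ with $q$. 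Since $\accentset{\circ}{\scr E}$ is a reduced variety (an open subscheme of the vector bundle $\scr E$ over the variety $\breve{\Bbb P}_{\overline{n}}$) and $\breve{\Bbb P}_{\overline{n}}$ is separated, it is enough to check this equality on closed points. A closed point of $\accentset{\circ}{\scr E}$ lies over some $\frak B=(\scr V_1^\ell,\dots,\scr V_n^\ell)$ and maps into the fiber $X(\frak B)$; by \eqref{eqn:even-odd} and Lemma \ref{lem:BlochX} its $\{i\}$-correlation function lies in $\scr V_{\{i\}}^e=\scr V_i^\ell$, and because the point also lies over $\accentset{\circ}{\scr L}$ this correlation function is non-degenerate, hence nonzero, hence spans the line $\scr V_i^\ell$. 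Thus $q$ returns $\frak B$, which is precisely $\mathrm{pr}_2\circ\accentset{\circ}{j}$ evaluated at that point.

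Finally, since $\breve{\Bbb P}_{\overline{n}}$ is separated, $\Gamma_q$ is a closed subscheme of $\accentset{\circ}{\scr L}\times\breve{\Bbb P}_{\overline{n}}$ and the first projection restricts to an isomorphism $\Gamma_q\xrightarrow{\ \sim\ }\accentset{\circ}{\scr L}$. As $\accentset{\circ}{j}$ is a closed immersion factoring through the closed subscheme $\Gamma_q$, the induced map $\accentset{\circ}{\scr E}\to\Gamma_q$ is a closed immersion, and composing with $\Gamma_q\xrightarrow{\ \sim\ }\accentset{\circ}{\scr L}$ shows that $\accentset{\circ}{\scr E}\to\accentset{\circ}{\scr L}$ is a closed immersion, as required. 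I expect the genuinely substantive step to be the closed-point verification of the previous paragraph — that over $\accentset{\circ}{\scr L}$ an X-state remembers its underlying longitudinal system through its one-point correlation functions — with the remainder being formal manipulations of closed immersions, graphs and base change.
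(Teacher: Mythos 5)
Your proof is correct, and it is in fact more complete than what the paper offers. The paper disposes of this lemma with the single clause ``since base change preserves closed immersions,'' which is not literally applicable: the morphism $\scr E \to \scr L$ being base-changed is not itself a closed immersion (it is only birational onto $\scr X$), so one cannot obtain the closed-immersion claim for $\accentset{\circ}{\scr E} \to \accentset{\circ}{\scr L}$ by base change alone. The genuine content --- which the paper leaves implicit and only uses later, in the proof of Proposition \ref{prop:rational-section}, when it asserts that $\accentset{\circ}{X}(\frak B) \cap \accentset{\circ}{X}(g\frak B) \ne \emptyset$ forces $g\frak B = \frak B$ --- is exactly your closed-point verification: on $\accentset{\circ}{\scr L}$ the one-point correlation functions are nonzero and, for a point of $X(\frak B)$, lie in $\scr V_i^{\ell}$, so they recover the longitudinal system. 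Your packaging of this via the retraction $q\colon \accentset{\circ}{\scr L} \to \breve{\Bbb P}_{\overline{n}}$ and its graph $\Gamma_q$ is a clean way to promote that pointwise observation to a scheme-theoretic closed immersion: the honest closed immersion is $\accentset{\circ}{\scr E} \hookrightarrow \accentset{\circ}{\scr L} \times \breve{\Bbb P}_{\overline{n}}$ (base-changed from the sub-bundle inclusion $\scr E \subset \scr L \times \breve{\Bbb P}_{\overline{n}}$), and the graph trick collapses the second factor. All the intermediate steps check out: $\scr E$ is indeed a sub-bundle of the trivial bundle by the splitting $\tau_i(-1) \cong \scr O_i(-1)^{\perp}$, the closed-point check suffices because $\accentset{\circ}{\scr E}$ is reduced and $\breve{\Bbb P}_{\overline{n}}$ is separated, and a closed immersion factoring through a closed subscheme induces a closed immersion into it. In short, your argument supplies the missing justification rather than merely reproducing the paper's.
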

 
\begin{rem}  One also has an explicit description of $ \accentset{\circ}{\scr E}$.
Define an embedding  $\overline{n}\subset \frak I_n$ by $i \mapsto (\{i\},\emptyset)$.
Then
\[
\frak I_n  \cong \overline{n} \amalg \frak I'_n\, ,
\]
where $\frak I'_n \subset \frak I_n$ is the set
of pairs $(I,J)$ such that $|I|\ne 1$ or $|J| \ne 0$. Define $ \accentset{\circ}{\scr E} \subset \scr E$ as the fiber product
\[
\bigoplus_{i\in \overline{n}} p_i^\ast( \scr O^\times_i(-1))\,\, \times_{\breve{\Bbb P}_{\overline{n}} } \,\,  \bigoplus_{(I,J) \in \frak I'_n}  \scr E_{I,J}
\]
where $\scr O^\times_i(-1)$ is the $\Bbb G_m$-bundle obtained from $\scr O_i(-1)$ by removing the zero section. The fiber
of $ \accentset{\circ}{\scr E}$ at $\frak B$ is identified with the Zariski open subset $\accentset{\circ}X(\frak B) \subset X(\frak B)$ consisting of those
vectors having non-trivial one-point correlation functions.
\end{rem}

\begin{cor} \label{cor:E=X} The morphism of varieties
\[
 \scr E\to \scr X
\]
 is a birational equivalence. 
 \end{cor}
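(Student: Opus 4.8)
The plan is to deduce the corollary formally from the two preceding lemmas, together with the irreducibility of $\scr E$ and of $\scr X$. Recall that the morphism $\scr E\to\scr X$ arises from \eqref{eqn:e_to_l} by factoring through its scheme-theoretic image, so it lies over $\scr L$; base-changing along the open immersion $\accentset{\circ}{\scr L}\hookrightarrow\scr L$ therefore produces a morphism $\accentset{\circ}{\scr E}\to\accentset{\circ}{\scr X}$, and the lemma just established identifies this with the isomorphism $\accentset{\circ}{\scr E}\cong\accentset{\circ}{\scr X}$. It will then suffice to exhibit $\accentset{\circ}{\scr E}$ and $\accentset{\circ}{\scr X}$ as dense open subvarieties of $\scr E$ and $\scr X$ respectively, since the definition of birational equivalence applies verbatim to such a pair of open sets.

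I would first record that $\accentset{\circ}{\scr E}=\scr E\times_{\scr L}\accentset{\circ}{\scr L}$ and $\accentset{\circ}{\scr X}=\scr X\times_{\scr L}\accentset{\circ}{\scr L}$ are open subschemes of $\scr E$ and $\scr X$, being base changes along an open immersion. Next, $\scr X$ is irreducible by the lemma asserting it is a variety, and $\scr E$ is irreducible because it is the total space of a vector bundle over the irreducible variety $\breve{\Bbb P}_{\overline{n}}=\prod_i\breve{\Bbb P}_i$; hence every nonempty open subscheme of either is automatically dense. So the whole matter reduces to checking $\accentset{\circ}{\scr E}\neq\emptyset$, after which the isomorphism forces $\accentset{\circ}{\scr X}\neq\emptyset$ as well, and both are then dense. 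For the nonemptiness I would invoke the explicit description recorded in the remark above: $\accentset{\circ}{\scr E}$ is the fiber product over $\breve{\Bbb P}_{\overline{n}}$ of $\bigoplus_{i\in\overline{n}}p_i^\ast(\scr O^\times_i(-1))$ with $\bigoplus_{(I,J)\in\frak I'_n}\scr E_{I,J}$, which maps onto the nonempty base $\breve{\Bbb P}_{\overline{n}}$ with nonempty fibers (each $\scr O^\times_i(-1)$ is a $\Bbb G_m$-bundle). Combining these points, $\scr E\to\scr X$ restricts to an isomorphism between the dense open subvarieties $\accentset{\circ}{\scr E}$ and $\accentset{\circ}{\scr X}$, hence is a birational equivalence.

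I do not anticipate a genuine obstacle here; the argument is essentially bookkeeping around the previous two lemmas. The only point requiring a moment's care is verifying that the base-changed morphism $\accentset{\circ}{\scr E}\to\accentset{\circ}{\scr X}$ really coincides with the isomorphism of the preceding lemma, rather than merely being abstractly an isomorphism. This is immediate from the compatibility of scheme-theoretic image with the flat base change $\accentset{\circ}{\scr L}\to\scr L$ (cf.~\cite[\href{https://stacks.math.columbia.edu/tag/081I}{Tag 081I}]{stacks-project}), which is precisely the compatibility already used to define $\accentset{\circ}{\scr X}$ as the scheme-theoretic image of $\accentset{\circ}{\scr E}\to\accentset{\circ}{\scr L}$.
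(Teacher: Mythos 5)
Your argument is correct and is essentially the paper's own proof: both rest on the commutative square in which the vertical maps are the open inclusions $\accentset{\circ}{\scr E}\subset\scr E$, $\accentset{\circ}{\scr X}\subset\scr X$ and the top horizontal map is the isomorphism $\accentset{\circ}{\scr E}\cong\accentset{\circ}{\scr X}$ from the preceding lemma. The only difference is that you spell out the density and nonemptiness checks that the paper leaves implicit, which is fine but not a different route.
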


 \begin{proof} In the commutative square
 \[
 \xymatrix{
 { \accentset{\circ}{\scr E}} \ar[r]^{\cong} \ar[d] &  {{\accentset{\circ}{\scr X}}} \ar[d]\\
  \scr E\ar[r] & \scr X
 }
 \]
 the vertical maps are open inclusions and the top horizontal map is an isomorphism.
 The conclusion follows.
\end{proof}

\subsection{The dimension of $\scr X$}

\begin{lem}   The vector space $X(\frak B)$ has dimension
$2^{2n-1}-1$.
\end{lem}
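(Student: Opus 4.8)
Since $X(\frak B)=\bigoplus_{\emptyset\ne I\subset\overline n}\scr V_I^e$ is a direct sum, it suffices to compute $\dim\scr V_I^e$ for each nonempty $I$ and add up. Write $k=|I|$ and decompose $\scr V_I=\bigotimes_{i\in I}(\scr V_i^\ell\oplus\scr V_i^t)$ according to the subset $J\subseteq I$ of indices at which the transversal factor is chosen; the corresponding summand is $\bigotimes_{i\in I\setminus J}\scr V_i^\ell\otimes\bigotimes_{j\in J}\scr V_j^t$, which has dimension $2^{|J|}$ because $\dim\scr V_i^\ell=1$ and $\dim\scr V_i^t=2$. By the definition of the even part, $\scr V_I^e$ is the sum of those summands with $|J|$ even, so extracting the even-degree part of a binomial expansion gives
\[
\dim\scr V_I^e \;=\; \sum_{\substack{0\le j\le k\\ j\ \text{even}}}\binom{k}{j}2^{\,j}\;=\;\frac{(1+2)^k+(1-2)^k}{2}\;=\;\frac{3^k+(-1)^k}{2}\, .
\]

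Summing over all nonempty $I\subset\overline n$, grouped by cardinality $k$, and using $\sum_{k=0}^n\binom nk 3^k=4^n$ together with $\sum_{k=0}^n\binom nk(-1)^k=0$, we obtain
\[
\dim X(\frak B)\;=\;\sum_{k=1}^{n}\binom nk\frac{3^k+(-1)^k}{2}\;=\;\frac{(4^n-1)+(0-1)}{2}\;=\;\frac{4^n-2}{2}\;=\;2^{2n-1}-1\, .
\]

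I do not expect any genuine obstacle here: the argument is elementary combinatorics, the only device being the standard trick of isolating the even-degree terms of $(1+2)^k$ via $(1+2)^k+(1-2)^k$.

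\emph{Alternative, more conceptual route.} One may instead observe that the splitting $\self(V_i)=\Bbb C\cdot\id\oplus\scr V_i^\ell\oplus\scr V_i^t$ is exactly the decomposition of $\self(V_i)$ into its parity-preserving part $\Bbb C\cdot\id\oplus\scr V_i^\ell$ (dimension $2$) and parity-reversing part $\scr V_i^t$ (dimension $2$) with respect to the super structure on $V_i$ determined by $\frak B$. Consequently $\bigoplus_{I\subset\overline n}\scr V_I^e$, now including $I=\emptyset$, is precisely the even part of $\self(V_{\overline n})$, namely $\maps(V_{\overline n}^0,V_{\overline n}^0)\oplus\maps(V_{\overline n}^1,V_{\overline n}^1)$; since $\dim V_{\overline n}^0=\dim V_{\overline n}^1=2^{n-1}$, this has dimension $2\cdot(2^{n-1})^2=2^{2n-1}$. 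Removing the one‑dimensional summand $\scr V_\emptyset^e=\Bbb C$ gives $\dim X(\frak B)=2^{2n-1}-1$.
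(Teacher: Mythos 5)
Your first computation is correct and is essentially the paper's own argument in different clothing: the paper sums $\binom{n}{s,t,n-s-t}2^t$ over pairs with $t$ even and $s+t\ge 1$, and evaluates it by identifying the unrestricted sum with the $2^{2n}$ binary strings of length $2n$ and keeping the half of even weight; you group the same sum by $k=|I|$ and extract the even-degree part via $\frac{(1+2)^k+(1-2)^k}{2}$. Same decomposition, same parity trick, so no substantive difference there. Your alternative route, however, is genuinely different and arguably cleaner: identifying $\bigoplus_{I\subset\overline n}\scr V_I^e$ with the parity-preserving part $\self(V^0_{\overline n})\oplus\self(V^1_{\overline n})$ of $\self(V_{\overline n})$ gives $2\cdot(2^{n-1})^2=2^{2n-1}$ with no generating-function manipulation, and it has the added virtue of tying the dimension count directly back to the original definition of an X-state as a super-vector-space morphism (it is essentially Lemma \ref{lem:BlochX} read backwards). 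Both arguments are sound; just make sure, in the second one, to note explicitly that $\Bbb C\cdot\id\oplus\scr V_i^\ell$ and $\scr V_i^t$ are respectively the even and odd parts of $\self(V_i)$ for the basis adapted to $\frak B$ (the $\sigma_z$ versus $\sigma_x,\sigma_y$ split from the proof of Lemma \ref{lem:BlochX}), since that is what makes the identification of $\bigoplus_I\scr V_I^e$ with the even endomorphisms legitimate.
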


\begin{proof}  Recall that $X(\frak B)$ is spanned by basic tensors 
consisting of an even number of vectors of transversal type. Describing such a tensor involves  choosing  $I,J \subset \overline{n}$
in which $I$ and $J$ are disjoint, $I \cup J$ is non-empty, and $|J|$ is even. 
 
It follows that $X(\frak B)$ is an vector space of dimension
\[
 \sum_{\alpha} \tbinom{n}{\alpha} 2^{t}\, ,
\] 
where $\alpha$ is indexed is over all pairs of non-negative integers $(s,t)$ satisfying
\begin{itemize}
\item $1\le s+t \le n$,
\item $t \equiv 0 \mod 2$.
\end{itemize} Here, 
$
\binom{n}{\alpha} 
$
is shorthand for the  multinomial coefficient $\binom{n}{s,t,n-s-t}$, $s$ corresponds to  $|I|$, $t$ corresponds to $J$, and 
 $2^t$ appears in the sum since $\dim(\scr V^t_i) = 2$ for $i\in \overline{n}$.

We deduce the result by means of the related expression
\[
 \sum_{\beta} \tbinom{n}{\beta} 2^{t}\, ,
\]
where now $\beta$ ranges over all pairs of non-negative integers $(s,t)$ such that $0 \le s + t \le n$.
By the multinomial theorem, this sum counts the number of distinct binary strings of length $2n$, where the 
number of such  strings is $2^{2n}$. Half of them have even sum which gives 
$2^{2n-1}$. If we discard  the case $s+t = 0$, we obtain 
$2^{2n-1}-1$.
\end{proof}

\begin{cor} \label{cor:dim} The variety $\scr X$ has dimension $2^{2n-1}+ 2n - 1$.
\end{cor}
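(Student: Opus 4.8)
The plan is to reduce the computation to the vector bundle model $\scr E$ and carry out a routine dimension count. First I would invoke Corollary \ref{cor:E=X}, which gives a birational equivalence $\scr E \to \scr X$; since birationally equivalent varieties have isomorphic function fields, hence equal Krull dimension, it suffices to compute $\dim \scr E$.

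Next I would use that $\scr E$ is a vector bundle over the base $\breve{\Bbb P}_{\overline{n}} = \prod_{i=1}^n \breve{\Bbb P}(\scr V_i)$. For a vector bundle $\pi\: E\to B$ of rank $d$ over an irreducible variety $B$ one has $\dim E = \dim B + d$, since $E$ is covered by Zariski open sets of the form $U\times \Bbb A^d$. The rank of $\scr E$ equals the dimension of its fiber $X(\frak B)$, which the preceding lemma evaluates as $2^{2n-1}-1$.

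Then I would compute the dimension of the base. Since $\scr V_i = \mathfrak{sl}(V_i)$ has dimension $3$, the projective space $\Bbb P(\scr V_i)$ has dimension $2$, and the dense open subvariety $\breve{\Bbb P}(\scr V_i)$ obtained by deleting the hyperplane $\Bbb P(H_i)$ of degenerate lines is again of dimension $2$. Hence $\dim \breve{\Bbb P}_{\overline{n}} = \sum_{i=1}^n \dim\breve{\Bbb P}(\scr V_i) = 2n$. Combining the three facts gives
\[
\dim \scr X = \dim \scr E = 2n + (2^{2n-1}-1) = 2^{2n-1}+2n-1\, .
\]

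There is no genuine obstacle here: the substantive work has already been carried out in establishing the birational equivalence $\scr E\cong\scr X$ and in computing $\dim X(\frak B)$, so what remains is pure bookkeeping. The only point requiring a moment's care is knowing that $\breve{\Bbb P}(\scr V_i)$ is a surface, which follows from $\scr V_i$ being the three-dimensional Lie algebra $\mathfrak{sl}_2$.
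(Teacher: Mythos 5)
Your argument is correct and follows the paper's proof exactly: both reduce to $\dim\scr E$ via Corollary \ref{cor:E=X} and then add the rank of the bundle, $\dim X(\frak B)=2^{2n-1}-1$, to the dimension $2n$ of the base $\prod_{i=1}^n\breve{\Bbb P}(\scr V_i)$. The extra detail you supply (why the base is $2n$-dimensional and why birational equivalence preserves dimension) is sound bookkeeping that the paper leaves implicit.
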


\begin{proof} By Corollary \ref{cor:E=X}, the varieties $\scr X$ and ${\scr E}$ have the same dimension. 
One then uses the fact that the dimension of $\scr E$ is the sum of $\dim(X(\frak B) = 2^{2n-1} -1$ and $\dim (\prod_{i=1}^n \breve{\Bbb P}(\scr V_i)) = 2n$. 
\end{proof}


\section{A rational section \label{sec:rel-sec}}
For a given choice of longitudinal system $\frak B$,  we will show that the vector space $X(\frak B)$ is a rational section for the variety $\scr X$ with
respect to the action of $G$. We will also identify the normalizer and Weyl group of this rational section.

For a  qubit $V$, and a choice of longitudinal $\scr V^\ell \subset \scr V$, the complementary subspace of transversals $\scr V^t$
is an subspace of dimension two and is equipped with the induced analytical scalar product. In particular, the orthogonal group $\Or(\scr V^t)$ of isometries
is a subgroup of $\SO(\scr V)$.

\begin{prop} \label{prop:rational-section} The affine space $X(\frak B)$ is a rational section for the action of the group $G$ on $\scr X$. Furthermore,
the normalizer of this action is given by the product
\[
N := \prod_{i=1}^n \Or(\scr V_i^t) 
\]
and the centralizer is given by $\pm I$, where $I\: X(\frak B) \to X(\frak B)$ is the identity.
\end{prop}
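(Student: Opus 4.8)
The plan is to verify the two defining conditions of a rational section (Definition \ref{defn:relativeSection}) and then separately compute the normalizer and centralizer. For condition (1), I would invoke the results already established: by Corollary \ref{cor:E=X} the total space $\scr E$ of the vector bundle over $\breve{\Bbb P}_{\overline n}$ with fiber $X(\frak B)$ is birational to $\scr X$, and the earlier lemma identifying the scheme-theoretic image of $\scr E \to \scr L$ with $\scr X$ shows that the image of $\scr E$ is exactly the orbit closure $\overline{G\cdot X(\frak B)}$. Since $G$ acts transitively on $\breve{\Bbb P}_{\overline n}$, every fiber $X(\frak B')$ is a $G$-translate of $X(\frak B)$, so $\overline{G \cdot X(\frak B)} = \scr X$, giving condition (1) immediately.

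For condition (2), I need a dense open $X(\frak B)_0 \subset X(\frak B)$ such that if $g\in G$ satisfies $g X(\frak B)_0 \cap X(\frak B)_0 \neq \emptyset$ then $g$ normalizes $X(\frak B)$. The natural choice is to take $X(\frak B)_0$ to be the locus where all one-point correlation functions $\rho_{\{i\}} \in \scr V_i^\ell$ are non-zero, so $X(\frak B)_0 \subset \accentset{\circ}{X}(\frak B)$; intersecting further with a generic condition if needed. The key point: if $\rho \in X(\frak B)_0$ and $g\rho \in X(\frak B)_0$, then for each $i$ the $i$-th component $g_i \in \SO(\scr V_i)$ must carry the non-degenerate line $\scr V_i^\ell$ (which is the one-point correlation of $\rho$, up to scalar) to the line $\scr V_i^\ell$ (the one-point correlation of $g\rho$). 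Thus $g_i$ fixes the longitudinal axis $z_i \in \breve{\Bbb P}(\scr V_i)$, hence preserves its orthogonal complement $\scr V_i^t$; the subgroup of $\SO(\scr V_i) \cong \PGL(V_i)$ fixing a non-degenerate point of $\breve{\Bbb P}(\scr V_i)$ is precisely $\Or(\scr V_i^t)$, the isometries of the transversal plane. So $g \in \prod_i \Or(\scr V_i^t) =: N$. Conversely, each such $g$ visibly preserves the even/odd super-structure \eqref{eqn:even-odd} attached to $\frak B$ and hence preserves $X(\frak B) = \bigoplus_{I} \scr V_I^e$, so $N$ is exactly the normalizer.

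For the centralizer $C$, I would argue that $g = (g_1,\dots,g_n) \in N$ fixes all of $X(\frak B)$ pointwise iff it fixes the one-point correlations $\scr V_i^\ell$ pointwise (forcing $g_i|_{\scr V_i^\ell} = \id$) and fixes the two-point correlations supported on the transversal planes $\scr V_i^t \otimes \scr V_j^t$ pointwise. On $\scr V_i^t$ an element of $\Or(\scr V_i^t)$ is either a rotation or a reflection; acting on $\scr V_i^t \otimes \scr V_j^t$ and demanding the action be trivial for all $i \ne j$ forces each $g_i|_{\scr V_i^t}$ to be $\pm \id_{\scr V_i^t}$, with the signs constrained so that the product of any two is $+1$ — i.e. all signs equal, $\varepsilon \in \{\pm 1\}$, giving $g = \varepsilon\cdot(\text{element acting as }\pm\id) $. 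Tracking the induced action on $X(\frak B)$: the sign on the component $\scr V_I^e$ is $\varepsilon^{|J|}$ where $J$ indexes the transversal factors, and since $|J|$ is even this is always $+1$; so both choices of $\varepsilon$ act as the identity on $X(\frak B)$. Hence $C = \{\pm I\}$ as claimed, and $W = N/C$. The main obstacle is the centralizer computation: one must be careful that "acts trivially on every fiber component $\scr V_I^e$" is genuinely weaker than "acts trivially on each $\scr V_i$," and correctly extract that the only ambiguity is the global sign $\varepsilon$ that becomes invisible because $|J|$ is forced even — this is exactly the place where the defining parity constraint on X-states re-enters.
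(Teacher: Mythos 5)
Your proposal is correct and follows essentially the same route as the paper: condition (1) via the transitivity of $G$ on $\breve{\Bbb P}_{\overline{n}}$ together with the identification of $\scr X$ as the closure of $G\cdot X(\frak B)$, and condition (2) by reading the longitudinal system off the nonvanishing one-point correlation functions on the open locus $\accentset{\circ}{X}(\frak B)$. Your centralizer computation is in fact more complete than the paper's one-line argument: fixing the one-point correlations only forces $g_i\in\SO(\scr V_i^t)$, and, as you observe, one genuinely needs the transversal two-point components $\scr V_i^t\otimes\scr V_j^t$ (and the evenness of $|J|$) to cut the centralizer down to $\pm I$.
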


\begin{proof}  Let $\accentset{\circ}X(\frak B) \subset X(\frak B) $ denote the fiber 
at $\frak B$ with respect to the projection 
of $\accentset{\circ}{\scr E} \to \breve{\Bbb P}_{\overline{n}} $.
Then there are nested inclusions
\[
G \cdot\accentset{\circ}X(\frak B)  \subset G \cdot X(\frak B) \subset {\scr X}
\]
in which $G \cdot \accentset{\circ}X(\frak B) $ coincides with $\accentset{\circ}{\scr X} $ 
since $G$ acts transitively on $\breve{\Bbb P}_{\overline{n}} $. These inclusions become equalities 
after taking closures, so the closure of the orbit $G \cdot \accentset{\circ}{X}(\frak B)$ is $\scr X$.

Since $\accentset{\circ}X(\frak B)$ is the fiber at $\frak B$ of $\accentset{\circ}{\scr E} \to \breve{\Bbb P}_{\overline{n}} $
and $\accentset{\circ}{\scr E}  \to \accentset{\circ}{\scr X}$ is an isomorphism, 
it follows that $\accentset{\circ}X(\frak B)  \cap \accentset{\circ}X(g\frak B) \ne \emptyset$ 
if and only if $g\frak B = \frak B$, i.e., if and only if $g = (g_1,\dots,g_n)$ stabilizes the longitudinal system $\frak B$.
But this means that each $g_i$ stabilizes $\scr V_i^\ell$ which is equivalent to the assertion
that $g_i \in \Or(\scr V^t_i)$ for all $i$. We infer that $\accentset{\circ}{X}(\frak B) \cap \accentset{\circ}{X}(g\frak B) \ne \emptyset$
if and only if $g\in N$.
Consequently, $X(\frak B)$ is a rational section with normalizer $N$.

The centralizer consists of those elements $g\in N$ which fix ${X}(\frak B)$. In particular,
$g$ is in the centralizer if it fixes every non-zero one-point correlation function. This can only be the case if $g = \pm I$.
 \end{proof}
 
 Let $W = N/\{\pm I\}$ denote the Weyl group. Then by Proposition \ref{prop:sec-rel} we infer
 
 \begin{cor} \label{cor:section} The inclusion $X(\frak B) \to  \scr X$ induces an isomorphism 
 \[
 \Bbb C(\scr X)^G \cong \Bbb C(X(\frak B))^W \, .
 \]
 \end{cor}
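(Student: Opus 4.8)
The plan is to deduce this immediately from Proposition \ref{prop:sec-rel}. By Proposition \ref{prop:rational-section}, the affine space $X(\frak B)$ is a rational section for the action of $G$ on $\scr X$, with normalizer $N = \prod_{i=1}^n \Or(\scr V_i^t)$ and centralizer $\{\pm I\}$, so the associated Weyl group is $W = N/\{\pm I\}$. Since $G = \prod_{i=1}^n \PGL(V_i)$ is reductive and $\scr X$ is a variety, every hypothesis of Proposition \ref{prop:sec-rel} is in place.

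First I would recall the remark preceding Proposition \ref{prop:sec-rel}: for any rational section $\cal S \subseteq X$ the restriction map $\Bbb C[X]^G \to \Bbb C[\cal S]^W$ is injective and therefore induces a field homomorphism $\Bbb C(X)^G \to \Bbb C(\cal S)^W$. Proposition \ref{prop:sec-rel} (that is, \cite[p.~161]{PV}) promotes this to an isomorphism. Specializing $X = \scr X$, $\cal S = X(\frak B)$, and using the identification $W = N/\{\pm I\}$ yields the asserted isomorphism $\Bbb C(\scr X)^G \cong \Bbb C(X(\frak B))^W$ induced by the inclusion $X(\frak B) \hookrightarrow \scr X$.

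There is no serious obstacle, since the substantive content has already been established in Proposition \ref{prop:rational-section} --- namely the two defining properties of a rational section: that $\overline{G\cdot X(\frak B)} = \scr X$ (via transitivity of $G$ on $\breve{\Bbb P}_{\overline n}$ together with Corollary \ref{cor:E=X}), and that $g\,\accentset{\circ}X(\frak B) \cap \accentset{\circ}X(\frak B) \neq \emptyset$ forces $g \in N$ (via the isomorphism $\accentset{\circ}{\scr E} \cong \accentset{\circ}{\scr X}$). The one point worth double-checking is that Proposition \ref{prop:sec-rel}, as quoted from \cite{PV}, is applicable in the present affine --- and not necessarily separated --- setting; this causes no difficulty, as the statement involves only function fields and is unchanged upon passing to dense open subvarieties of $\scr X$ and $X(\frak B)$.
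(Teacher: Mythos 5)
Your proposal is correct and is exactly the paper's argument: the paper likewise obtains the corollary by combining Proposition \ref{prop:rational-section} (which identifies $X(\frak B)$ as a rational section with normalizer $N$ and centralizer $\{\pm I\}$, hence Weyl group $W = N/\{\pm I\}$) with Proposition \ref{prop:sec-rel}. Your additional remarks on the injectivity of the restriction map and the applicability of the cited result are sensible but not needed beyond what the paper already records.
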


\section{Overview} \label{sec:overview}

The goal of this section is to outline the proof of Theorem \ref{bigthm:main}.
By Corollaries \ref{cor:dim} and  \ref{cor:section}, it will suffice to prove that $X(\frak B)/\!\!/W$ is rational. 

In broad strokes, the idea is to apply  Proposition \ref{prop:no-name}
to a certain $W$-linearized vector bundle 
\begin{equation} \label{eqn:W-lin}
X(\frak B) \to X_T(\frak B)
\end{equation}
of dimension $2^{2n-1}-5n+3$
to deduce that  $X(\frak B)/\!\!/W$ birationally equivalent to
the total space of  a vector bundle over the algebraic quortient $X_T(\frak B)/\!\!/W$.
The bulk of our effort is then to show that
the $X_T(\frak B)/\!\!/W$
is rational of dimension $4n-4$.

Here is an outline of  the main steps in slightly more detail.

\begin{enumerate}[Step (i):]
\item Let $K_n$ denote the complete graph on the set vertices $\overline{n}$.  The edges of $K_n$ 
are designated by their endpoints: if $i$ and $j$
denote distinct vertices, then $ij$ denotes the associated edge.

Choose a spanning
tree
\[
T\subset K_n\, .
\]
\item Define
\[
X^{0}(\frak B) := \bigoplus_{i\in T_0}  \scr V_i^\ell \, , \quad X^{1}(\frak B)  := \bigoplus_{ij\in T_1}  \scr V_{ij}^t \, ,
\]
and
\[
X_T(\frak B)  := X^{0}(\frak B)  \oplus X^{1}(\frak B) \, .
\]
Then $X_T(\frak B) \subset X(\frak B)$ is a split summand and projection to this summand gives a $W$-equivariant surjective homomorphism of vector spaces
\[
X(\frak B) \to X_T(\frak B)
\]
with kernel having dimension $2^{2n-1}-5n+3$.
In particular, the displayed map  $W$-linearized vector bundle of dimension
$2^{2n-1}-5n+3$, thereby producing \eqref{eqn:W-lin}.

\item 
One checks that the Weyl group $W$ acts almost freely on  $X_T(\frak B)$.
 Applying Proposition \ref{prop:no-name}
one deduces that $X(\frak B)/\!\!/W$ is birationally equivalent
to the total space of a vector bundle over $X_T(\frak B)/\!\!/W$.
We are now reduced to the problem of showing that $X_T(\frak B)/\!\!/W$ is rational.

\item  One shows that $X_T(\frak B)/\!\!/W$ is rational in multiple stages:
The connected component $W'$ of $W$ is $\SO_2(\Bbb C)^{\times n}$, and there is a short exact sequence 
\begin{equation} \label{eqn:wprime}
1\to W' \to W\to (\Bbb Z/2)^{\times n} \to 1
\end{equation}
of group schemes over $\Bbb C$. Making use of Lemma \ref{lem:double-quotient1} and Lemma  \ref{lem:double-quotient2},
we first construct a rational quotient with respect to the action of $W'$ on $X_T(\frak B)$  and then subsequently
a rational quotient with respect to $(\Bbb Z/2)^{\times n}$ acting on $X_T(\frak B)/\!\! /W'$. 
The model we construct for the rational quotient 
 $X_T(\frak B)/\!\!/W$ is then seen to be rational of dimension $4n-4$.
 \end{enumerate}

\begin{rem}\label{rem:wprime}  In the short exact sequence \eqref{eqn:wprime}, we implicitly  used the fact that over $\Bbb C$,  there is an isomorphism of
group schemes 
\[
\mu_2 \cong \Bbb Z/2\, ,
\]
where
$\mu_2$, the square roots of unity,  is represented by $\spec(\Bbb C[t]/(t^2-1))$ and $\Bbb Z/2$ is represented by  $\spec(\Bbb C\times \Bbb  C)$.
\end{rem}

\section{An auxiliary problem}  \label{sec:aux}
A crucial ingredient in step (v) of  the previous section is to  identify
a rational quotient  for the left action of $\SO_2(\Bbb C)$ on the vector space $M_2(\Bbb C)$  of  $2\times 2$ matrices over the complex numbers.

The group $\SO_2(\Bbb C)\times \SO_2(\Bbb C)$ acts 
on $M_2(\Bbb C)$ by
\[
(g,h)\cdot M = gMh^{-1}\, .
\]
To distinguish the factors, we write $\SO_2(\Bbb C)_\ell$ and $\SO_2(\Bbb C)_r$ for the 
left and right factors. Let $D \subset \SO_2(\Bbb C)_\ell \times \SO_2(\Bbb C)_r$  be the subgroup
isomorphic to $\Bbb Z/2$ given by $\{(I,I),(-I,-I)\}$, where $I$ is the identity matrix. 
Then $D$ acts trivially on $M_2(\Bbb C)$ so the action factors through the quotient group
\[
 (\SO_2(\Bbb C)_\ell \times \SO_2(\Bbb C)_r)/D\, .
\]
Let $\PSO_2(\Bbb C) = \SO_2(\Bbb C)/(\pm I)$.

\begin{lem} \label{lem:untwist} There is a preferred isomorphism
\[
 (\SO_2(\Bbb C)_\ell \times \SO_2(\Bbb C)_r)/D \cong  \SO_2(\Bbb C)_\ell \times \PSO_2(\Bbb C)_r 
 \]
 as well as a preferred isomorphism $\PSO_2(\Bbb C)_r \cong \SO_2(\Bbb C)$.
 \end{lem}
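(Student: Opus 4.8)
The plan is to work concretely with the standard parametrization of $\SO_2(\Bbb C)$ and to exhibit the two claimed isomorphisms by a direct group-theoretic splitting argument. First I would recall that over $\Bbb C$ the group $\SO_2(\Bbb C)$ is isomorphic to $\Bbb G_m$: after diagonalizing the quadratic form, a rotation matrix $\begin{pmatrix} \cos\theta & -\sin\theta\\ \sin\theta & \cos\theta\end{pmatrix}$ corresponds to the scalar $\lambda = \cos\theta + i\sin\theta \in \Bbb C^\times$, and under this identification $-I$ corresponds to $\lambda = -1$. Thus $\PSO_2(\Bbb C) = \SO_2(\Bbb C)/(\pm I)$ corresponds to $\Bbb G_m/\mu_2$, and the squaring map $\lambda \mapsto \lambda^2$ furnishes the preferred isomorphism $\PSO_2(\Bbb C)_r \cong \SO_2(\Bbb C)$; this gives the second claim.

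For the first claim, I would describe the isomorphism $(\SO_2(\Bbb C)_\ell \times \SO_2(\Bbb C)_r)/D \cong \SO_2(\Bbb C)_\ell \times \PSO_2(\Bbb C)_r$ explicitly on the level of $\Bbb G_m$-parametrizations. Writing $g \leftrightarrow \alpha$ and $h \leftrightarrow \beta$ with $\alpha,\beta \in \Bbb G_m$, the subgroup $D$ is $\{(1,1),(-1,-1)\}$, i.e. the diagonal copy of $\mu_2$. The map $(\alpha,\beta) \mapsto (\alpha, [\beta]) $, where $[\beta]$ denotes the class of $\beta$ in $\Bbb G_m/\mu_2$, is a surjective homomorphism $\SO_2(\Bbb C)_\ell \times \SO_2(\Bbb C)_r \to \SO_2(\Bbb C)_\ell \times \PSO_2(\Bbb C)_r$ whose kernel is exactly $D$: indeed $(\alpha,\beta)$ lies in the kernel iff $\alpha = 1$ and $\beta \in \mu_2$, which — wait, that would give kernel $\{1\}\times\mu_2$, not $D$. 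To fix this I instead use the map $(\alpha,\beta) \mapsto (\alpha\beta, [\beta])$: its kernel is $\{(\alpha,\beta) : \alpha\beta = 1,\ \beta \in \mu_2\} = \{(1,1),(-1,-1)\} = D$, and it is visibly surjective since given $(\gamma, [\delta])$ one picks a lift $\delta$ of $[\delta]$ and takes $(\gamma\delta^{-1}, \delta)$. This passes to the claimed isomorphism of the quotient by $D$, and it is an isomorphism of algebraic groups since it is a bijective morphism of smooth affine group schemes over a field of characteristic zero. Composing with the squaring isomorphism $\PSO_2(\Bbb C)_r \cong \SO_2(\Bbb C)$ of the previous paragraph yields the stated presentation.

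The main point to be careful about — and the only genuine obstacle — is bookkeeping: one must make sure the chosen homomorphism has kernel precisely $D$ (not some other order-two subgroup) and that it is compatible with the way the residual group acts on $M_2(\Bbb C)$ in the subsequent argument, since the whole reason for this lemma is to "untwist" the $D$-quotient so that later sections may work with an honest product $\SO_2(\Bbb C) \times \SO_2(\Bbb C)$ acting on $M_2(\Bbb C)$. I would close by noting that everything here is canonical once the quadratic form has been put in the standard hyperbolic normal form, which is why the isomorphisms deserve to be called "preferred."
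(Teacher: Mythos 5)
Your argument is correct, and it is in essence the same elementary approach as the paper's: identify $\SO_2(\Bbb C)$ with $\Bbb G_m$, take squaring for $\PSO_2(\Bbb C)_r\cong\SO_2(\Bbb C)$, and exhibit the first isomorphism by an explicit homomorphism whose kernel is $D$. The one point worth highlighting is the ``bookkeeping'' you rightly flag as the only real content: the twist in the first coordinate is not optional. The paper's own proof writes the map in the opposite direction as $(A,\pm B)\mapsto (A,B)D$, and as literally stated that assignment is not well defined, since $(A,-B)D=\{(A,-B),(-A,B)\}$ differs from $(A,B)D=\{(A,B),(-A,-B)\}$; the correct version is $(A,\pm B)\mapsto (AB,B)D$ (or $(AB^{-1},B)D$), which is exactly the inverse of your map $(\alpha,\beta)\mapsto(\alpha\beta,[\beta])$ with kernel computed to be precisely $D$ rather than $\{1\}\times\mu_2$. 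So your self-correction in the middle of the argument is not a cosmetic fix but the actual crux, and your version is the one that should be recorded. Your closing remark about compatibility with the residual action on $M_2(\Bbb C)$ is also well taken, since Proposition \ref{prop:torsor} uses the composite identification $\PSO_2(\Bbb C)_r\cong\SO_2(\Bbb C)$ to describe the action on $(a,b)$; verifying that compatibility is deferred there, so nothing further is required for the lemma itself.
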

 
 \begin{proof} An isomorphism $\SO_2(\Bbb C)_\ell \times \PSO_2(\Bbb C)  \to  (\SO_2(\Bbb C)_\ell \times \SO_2(\Bbb C)_r)/D$ is defined by
 $(A,\pm B) \mapsto (A,B)D = \pm (A,B)$. The  isomorphism $\PSO_2(\Bbb C)\to \SO_2(\Bbb C)$ is defined by $\pm A\mapsto A^2$.
 \end{proof}

\subsection{Digression: the multiplicative group}
 Let $\Bbb G_m = \Bbb C^{\times}$ be the multiplicative group of the complex numbers.
Recall that $\SO_2(\Bbb C)$ is the complex Lie group given by the set complex matrices
\[
A = \begin{pmatrix}
a & b \\
-b & a
\end{pmatrix}
\]
in which $a^2 + b^2 = 1$.
Define a homomorphism
\[
h\: \SO_2(\Bbb C) \to \Bbb G_m
\]
by sending the above matrix $A$ to $a+bi$.

\begin{lem}  \label{lem:mult} The homomorphism $h$ is an isomorphism.
\end{lem}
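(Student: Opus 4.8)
The plan is to exhibit an explicit two-sided inverse to $h$ and verify it is a homomorphism, which reduces the statement to a routine matrix computation. Concretely, I would define a map $k\colon \Bbb G_m \to \SO_2(\Bbb C)$ by sending $z = a + bi$ to the matrix $\begin{pmatrix} a & b \\ -b & a \end{pmatrix}$, where $a = \tfrac12(z + z^{-1})$ and $b = \tfrac{1}{2i}(z - z^{-1})$, so that $a^2 + b^2 = \tfrac14\big((z+z^{-1})^2 - (z - z^{-1})^2\big) = z\cdot z^{-1} = 1$; this shows $k$ lands in $\SO_2(\Bbb C)$. One then checks directly that $h\circ k = \mathrm{id}$ and $k\circ h = \mathrm{id}$: the second is immediate since a matrix of the displayed form with $a^2+b^2=1$ has $h$-image $a+bi$ whose associated $(a,b)$ is recovered by the above formulas.

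The remaining point is that $h$ is a group homomorphism, i.e. that $h(AA') = h(A)h(A')$. Writing $A' = \begin{pmatrix} a' & b' \\ -b' & a'\end{pmatrix}$, the product is
\[
AA' = \begin{pmatrix} aa' - bb' & ab' + a'b \\ -(ab' + a'b) & aa' - bb' \end{pmatrix},
\]
which is again of the standard form, and its $h$-image is $(aa' - bb') + (ab' + a'b)i = (a+bi)(a'+b'i) = h(A)h(A')$. It is also worth remarking that matrices of the form $\begin{pmatrix} a & b \\ -b & a\end{pmatrix}$ are closed under multiplication and contain the identity, so $\SO_2(\Bbb C)$ really is $\{A : a^2 + b^2 = 1\}$ as claimed; this is subsumed in the computation above.

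There is essentially no hard part here: once the correspondence $a + bi \leftrightarrow \begin{pmatrix} a & b \\ -b & a\end{pmatrix}$ is written down, bijectivity and multiplicativity are immediate from the identity $(a+bi)(a'+b'i) = (aa'-bb') + (ab'+a'b)i$ together with $a^2+b^2 = (a+bi)(a-bi)$. The only mild subtlety is algebro-geometric rather than set-theoretic: to conclude $h$ is an \emph{isomorphism of algebraic groups} (not merely of abstract groups) one notes that both $h$ and the inverse $k$ are given by polynomial, respectively Laurent-polynomial, formulas in the coordinates, hence are morphisms of varieties; since the paper works throughout with $\Bbb G_m = \spec(\Bbb C[t,t^{-1}])$ this is what is needed. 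I would state the proof in the terse form: define $k$ as above, observe it is a morphism landing in $\SO_2(\Bbb C)$, and note $h$ and $k$ are mutually inverse homomorphisms by the displayed identities.
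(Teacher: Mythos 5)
Your proof is correct and follows the same route as the paper: the paper's entire proof consists of exhibiting exactly the inverse map $\lambda \mapsto \begin{pmatrix} \tfrac{\lambda+\lambda^{-1}}{2} & \tfrac{\lambda-\lambda^{-1}}{2i} \\ -\tfrac{\lambda-\lambda^{-1}}{2i} & \tfrac{\lambda+\lambda^{-1}}{2} \end{pmatrix}$ that you write down. Your additional checks (multiplicativity of $h$ and the fact that both maps are morphisms of varieties) are routine verifications the paper leaves implicit.
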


\begin{proof} The inverse to $h$ maps $\lambda \in \Bbb G_m$ to the 
matrix
\[
\begin{pmatrix}
\tfrac{\lambda + \lambda^{-1}}{2} &  \tfrac{\lambda - \lambda^{-1}}{2i}  \\
-\tfrac{\lambda - \lambda^{-1}}{2i} & \tfrac{\lambda + \lambda^{-1}}{2} 
\end{pmatrix} \qedhere
\]
\end{proof}
 
\subsection{Solution to the auxiliary problem}
We now return to the problem of providing rational quotient for the action of 
$\SO_2(\Bbb C)_\ell$ on $M_2(\Bbb C)$
as a variety with action of $\PSO_2(\Bbb C)_r \cong \SO_2(\Bbb C)$.

The operation $M \mapsto M^tM$ defines a homomorphism  
\[
M_2(\Bbb C) \to \symm_2(\Bbb C)\, ,
\] where the target denotes the
vector space of symmetric $(2\times 2)$-matrices. If $M \in M_2(\Bbb C)$, then one may uniquely write
\[
M^tM= 2tI + A
\]
where $A$ is traceless and symmetric, where $2t$ is the trace of $M^tM$. Hence, $A$ is of the form
\[
\begin{pmatrix}
a & b \\
b & -a
\end{pmatrix}
\]
for suitable $a,b \in \Bbb C$. Then these quantities define $(\SO_2(\Bbb C)_\ell)$-invariant functions 
\[
\delta,t,a,b
\]
in which $\delta(M) = \det M, t(M) = \frac{1}{2}\tr(MM^t)$ and $a$ and $b$ are as above.
These functions satisfy the polynomial relation
\[
\delta^2 = t^2 - a^2 - b^2\, .
\]
Let $f = \delta^2 - t^2 + a^2+ b^2$. Then the principal ideal $(f) \subset k[\delta,t,a,b]$ is prime.

 Let \[
 R = (\Bbb C[\delta,t,a,b]/(f))_\delta
 \] 
 be the localization of $\Bbb C[\delta,t,a,b]/(f)$ obtained by
 inverting $\delta$. Let
 \[
 U :=\spec(R)\, .
 \] 
Then $U$ is a variety over $\Bbb C$ equipped with the action of $ \PSO_2(\Bbb C)_r$.

\begin{prop}\label{prop:torsor} The variety $U$
is a rational quotient for the action of  $\SO_2(\Bbb C)_\ell$ on $M_2(\Bbb C)$.

Furthermore, $\SO_2(\Bbb C) \cong \PSO_2(\Bbb C)_r$ acts trivially on the generators
$t$ and $\delta$ and by left multiplication on $(a,b)$ when the latter is considered as a column vector.
\end{prop}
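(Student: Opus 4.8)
The plan is to identify $U = \spec(R)$ with a suitable orbit space directly. First I would understand the geometry of the map $M \mapsto M^tM$ on $M_2(\Bbb C)$. Writing $M = \begin{pmatrix} p & q \\ r & s\end{pmatrix}$, one has $\det(M) = ps - qr$, $\tfrac12\tr(MM^t) = \tfrac12(p^2+q^2+r^2+s^2)$, and the traceless symmetric part $A$ of $M^tM$ has entries $a = \tfrac12(p^2 - q^2 + r^2 - s^2)$ and $b = pq + rs$ (up to the normalization already fixed in the statement). The left $\SO_2(\Bbb C)_\ell$-action sends $M$ to $gM$, hence $M^tM \mapsto M^t g^t g M = M^tM$, so all four functions $\delta, t, a, b$ are indeed $\SO_2(\Bbb C)_\ell$-invariant, and the stated relation $\delta^2 = t^2 - a^2 - b^2$ is a direct computation. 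So there is a natural morphism $M_2(\Bbb C) \to \spec(\Bbb C[\delta,t,a,b]/(f))$, and restricting to the locus $\delta \neq 0$ (equivalently $\GL_2$) gives a morphism to $U$.

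Next I would verify that this morphism induces an isomorphism on the invariant function fields, i.e.\ that $\Bbb C(M_2(\Bbb C))^{\SO_2(\Bbb C)_\ell} \cong \Frac(R)$. The key point is a dimension/separation-of-orbits count: a generic fiber of $M \mapsto (\delta, t, a, b)$ is a single $\SO_2(\Bbb C)_\ell$-orbit. Since $\SO_2(\Bbb C)_\ell$ is one-dimensional and acts almost freely on the $\delta \neq 0$ locus (the stabilizer of an invertible $M$ under left multiplication is trivial), each orbit has dimension one, and the image locus $U$ has dimension $3 = 4 - 1$ because $f$ is an irreducible polynomial (the cited primality of $(f)$) cutting out a hypersurface in $\Bbb A^4$, further localized. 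Thus the invariant field has transcendence degree $3$ and is generated by $\delta, t, a, b$; one shows these generate it by observing that a generic $M$ can be recovered up to $\SO_2(\Bbb C)_\ell$ from these data — for instance, by a polar-type decomposition $M = g \cdot (M^tM)^{1/2}$ using that $M^tM$ is determined by $(t,a,b)$ and $\delta$ pins down the correct square root. Equivalently, one checks the field extension $\Bbb C(\delta,t,a,b) \subset \Bbb C(M_2)^{\SO_2(\Bbb C)_\ell}$ has degree one by comparing transcendence degrees and noting the bigger field has no proper algebraic extension inside it that is invariant. This establishes that $U$ is a rational quotient; rationality of $U$ itself follows since the conic relation $\delta^2 + a^2 + b^2 = t^2$ over $\Bbb C(t)$ (with $\delta$ invertible) defines a rational curve, so $\Frac(R)$ is purely transcendental — in fact $R$ is already visibly rational via the substitution $\delta = t\cos$, $a = t\sin\cos$, etc., or more cleanly by parametrizing the smooth conic.

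Finally I would track the residual $\PSO_2(\Bbb C)_r \cong \SO_2(\Bbb C)$-action through these invariants. Under $M \mapsto Mh^{-1}$ with $h \in \SO_2(\Bbb C)_r$, the determinant $\delta$ is multiplied by $\det(h^{-1}) = 1$, so $\delta$ is fixed; $t = \tfrac12\tr(MM^t) = \tfrac12\tr(M h^{-1}h^{-t} M^t)$ but wait — more carefully, under $M \mapsto Mh^{-1}$ we have $M^tM \mapsto h^{-t} M^t M h^{-1} = h\,(M^tM)\,h^{-1}$ since $h^{-t} = h$ for $h \in \SO_2$; hence $t$, being $\tfrac12\tr(M^tM)$, is conjugation-invariant and fixed, while the traceless symmetric part $A \mapsto hAh^{-1}$, which is the standard $\SO_2$-action on the two-dimensional space of traceless symmetric matrices. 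Identifying that space with $\Bbb C^2$ via $(a,b)$, the conjugation action $A \mapsto hAh^{-1}$ becomes the rotation action on the column vector $(a,b)$ — this is the classical fact that $\SO_3 \to \SO_3$ restricted appropriately, or concretely that conjugating $\begin{pmatrix} a & b \\ b & -a\end{pmatrix}$ by a rotation rotates $(a,b)$ at double speed, absorbed by the isomorphism $\PSO_2 \cong \SO_2$, $\pm A \mapsto A^2$ from Lemma \ref{lem:untwist}. Carrying out this last bookkeeping — getting the doubling to cancel correctly with the $\PSO_2$-versus-$\SO_2$ identification so that the action on $(a,b)$ is genuinely by left multiplication rather than by its square — is the step I expect to require the most care, though it is ultimately a short explicit computation.
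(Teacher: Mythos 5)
Your argument is correct, but it reaches the first assertion by a genuinely different route than the paper. The paper proves outright that $Y=\{\delta\ne 0\}\to U$ is an $\SO_2(\Bbb C)_\ell$-torsor: it diagonalizes the action by passing to the $\Bbb G_m$-eigenbasis $u_\pm=u_1\pm iu_2$, $v_\pm=v_1\pm iv_2$ of the columns, covers $Y$ by the two loci $u_\pm v_\mp\ne 0$, and on each normalizes a matrix entry to $1$ to exhibit a trivializable $\Bbb G_m$-torsor over $\Bbb G_m\times\Bbb A^2$. You instead separate orbits: two invertible matrices with the same $(\delta,t,a,b)$ differ by $h=M'M^{-1}$, which is forced to lie in $\SO_2(\Bbb C)$ (the complex analogue of polar decomposition), so the fibers over $\delta\ne 0$ are exactly the orbits; combined with the count $\operatorname{trdeg}=4-1=3=\dim V(f)$ and the primality of $(f)$, this identifies $\Bbb C(M_2)^{\SO_2(\Bbb C)_\ell}$ with $\Frac(R)$. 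Your route is shorter and in fact gives separation on all of $Y$, not just generically, but at the final step you should invoke the standard criterion that invariants separating orbits in general position generate the invariant field (\cite[lem.~2.1]{PV}, which the paper itself uses in Appendix B) rather than the vaguer remark about the extension having ``no proper algebraic extension inside it that is invariant''---transcendence-degree comparison alone only shows the extension is algebraic. The paper's more explicit construction also earns its keep later: the $u_\pm$ eigencoordinates and the resulting $\Bbb G_m$-weight bookkeeping are reused in \S\ref{sec:quotient-by-wprime}--\S\ref{sec:w-wprime}. For the second assertion both arguments agree in substance; the paper omits the computation, while you correctly identify the one point requiring care, namely that the angle-doubling in $A\mapsto hAh^{-1}$ is exactly absorbed by the identification $\PSO_2(\Bbb C)_r\cong\SO_2(\Bbb C)$, $\pm h\mapsto h^2$, of Lemma \ref{lem:untwist}, so the residual action on the column vector $(a,b)$ is honest left multiplication.
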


 \begin{proof} Let 
$Y \subset M_2(\Bbb C)
 $ be the Zariski open subset consisting of the matrices $M$ satisfying the constraint
 \[
 \delta(M)\ne 0\, .
 \]
 For the first part, it will be enough to prove that the map $Y \to U$ defined by 
 \[
 M \mapsto (\delta,t,a,b)
 \]  is an $\SO_2(\Bbb C)_\ell$-torsor.
 
We may think of the action as follows: a matrix $M\in Y$ is a pair of column vectors
 $u = (u_1,u_2),  v = (v_1,v_2)$ and $g\in \SO_2(\Bbb C)$ acts by acting on these vectors
 individually: $g\cdot (u,v) = (gu,gv)$.  We define
 \[
 u_\pm = u_1 \pm iu_2\, , \quad v_\pm = v_1 \pm iv_2\, .
 \]
 This gives a change of basis to a pair of column vectors 
 \[
 u_\pm := (u_-,u_+), v_\pm :=  (v_-,v_+)\, .
 \]
If we identify $\SO_2(\Bbb C)$ with $\Bbb G_m$ (using Lemma \ref{lem:mult}), then the action in this new basis becomes
\[
\lambda\cdot (u_-,u_+) = (\lambda u_-,\lambda^{-1} u_+)\, , \quad \lambda\cdot (v_-,v_+)  = (\lambda v_-,\lambda^{-1} v_+)\, .
\]

Then $\delta \ne 0$ is the same as asking that
\[
u_-v_+ - v_-u_+ \ne 0\, .
\]
We may then write $Y = Y_- \cup Y_+$, where $Y_\pm$ is defined by the condition $u_\pm v_\mp \ne 0$.
Note that $\Bbb G_m$ acts freely on $Y_\pm$.
Given $M \in Y_-$, say
\[
M:= \begin{pmatrix}
u_- & v_- \\
u_+ & v_+
\end{pmatrix}\, ,
\]
we may act by $u_-^{-1} \in \Bbb G_m$ to obtain
\[
u_-^{-1}M := \begin{pmatrix}
1 & u_{-}^{-1} v_- \\
u_-u_+ & u_-v_+
\end{pmatrix}
\]
and the latter is the unique point in the orbit of $M$ with the property that the $(1,1)$-entry is 1. 
In the above, the $(2,2)$-entry is non-zero, whereas the other entries have no constraint.
It follows that $Y_- \to U_-$ is a trivializable $\Bbb G_m$-torsor with  $U_- \cong \Bbb G_m \times \Bbb A^2$. Similarly, $Y_+ \to U_+$ is a trivializable $\Bbb G_m$-torsor and therefore $Y\to U$ is
a $\Bbb G_m$-torsor.

As to the last part, $\PSO_2(\Bbb C)_r$ acts by conjugation on the matrix $M^tM = 2tI + A$, but conjugation on $2tI$ is trivial. Hence the action on $t$ is trivial.
With respect to the isomorphism $\PSO_2(\Bbb C)_r\cong \SO_2(\Bbb C)$, a straightforward calculation that we omit shows that the action of $\SO_2(\Bbb C)$
is by multiplication on the column vector associated with $(a,b)$. Lastly, the action of $\PSO_2(\Bbb C)_r$ on $\delta = \det(M)$ is trivial since the determinant
is invariant with respect to multiplication by elements of $\SO_2(\Bbb C)_\ell\cong \SO_2(\Bbb C)_r$.
\end{proof}

\section{The quotient by $W'$} \label{sec:quotient-by-wprime}
Recall from \S\ref{sec:rel-sec} that 
\[
N = \prod_{i=1}^n \Or(\scr V_i^t)
\]  is the normalizer of the section $X(\frak B)$ and $W = N/\{\pm I\}$ is the  Weyl group.
Consider the maximal connected subgroup 
\[
N' \subset N
\]
given by
\[
N' =  \prod_{i=1}^n \SO(\scr V_i^t) \, .
\]
Then one has  short exact sequences
\[
1\to N' \to N \to (\Bbb Z/2)^{\times n} \to 1\, , \qquad 1\to W' \to W \to (\Bbb Z/2)^{\times n} \to 1\, ,
\]
where $W' = N'/\{\pm I\}$  (cf.~\eqref{eqn:wprime} and Remark \ref{rem:wprime}). 

The goal of this section is to show that
$X_T(\frak B)/\!\!/W'$ is rational by constructing a rational quotient
for the action of $W'$ on 
\[
X_T(\frak B) = X^0(\frak B) \oplus X^1(\frak B)
\]
(cf. ~\S\ref{sec:overview}).
Note that $W'$ acts trivially on $X^0(\frak B)$. 
Consequently,  the algebraic quotient for $W'$ acting on $X_T(\frak B)$ is
\[
X^0(\frak B) \times (X^1(\frak B)/\!\!/W' )\, .
\]
Hence, it will suffice to construct a rational quotient for the action of $W'$ on $X^1(\frak B)$ and show that it is rational.

We will first construct a rational quotient  with respect to the subgroup of $W'$ given by
\[
W'_{n-1} := \prod_{i=1}^{n-1} \SO(\scr V_i^t)\, .
\]
The proof of the following result is essentially
the same as the proof of Lemma \ref{lem:untwist}; we omit the details.

\begin{lem} \label{lem:prod-iso} There is a preferred isomorphism
\[
W' \cong W'_{n-1}  \times  \PSO(\scr V_n^t)\, ,
\]
as well as a preferred isomorphism $ \PSO(\scr V_n^t) \cong  \SO(\scr V_n^t)$.
\end{lem}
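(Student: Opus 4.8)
\emph{Plan.} This is the $n$-factor version of Lemma~\ref{lem:untwist}, and I would prove it by the same device. First I would record the ambient group theory. Each transversal plane $\scr V_i^t$ is a two-dimensional complex space carrying the non-degenerate scalar product induced from $\scr V_i$, so an isometry $\scr V_i^t\cong\Bbb C^2$ identifies $\SO(\scr V_i^t)$ with $\SO_2(\Bbb C)$, hence with $\Bbb G_m$ by Lemma~\ref{lem:mult}. In particular $\SO(\scr V_i^t)$ is abelian and has a unique element of order two, namely $-I_i:=-\id_{\scr V_i^t}$ (which lies in $\SO(\scr V_i^t)$ since $\det(-I_i)=1$). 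Writing $N'=\prod_{i=1}^n\SO(\scr V_i^t)$, the centralizer $\{\pm I\}$ of Proposition~\ref{prop:rational-section} is exactly the diagonal copy $\langle(-I_1,\dots,-I_n)\rangle\cong\Bbb Z/2$, so $W'=N'/\{\pm I\}$ is this $n$-torus modulo its diagonal $\mu_2$.

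\emph{The two isomorphisms.} For $i<n$ I would fix an isomorphism $\iota_i\colon\SO(\scr V_n^t)\to\SO(\scr V_i^t)$ carrying $-I_n$ to $-I_i$ (one exists, both groups being $\cong\Bbb G_m$ with their unique order-two element distinguished; for $n=2$ this is just the identity of $\SO_2(\Bbb C)$, which is why no such choice is visible in Lemma~\ref{lem:untwist}). Then I would introduce the homomorphism of algebraic groups
\[
\Phi\colon N'\longrightarrow W'_{n-1}\times\PSO(\scr V_n^t),\qquad
\Phi(A_1,\dots,A_n)=\bigl((A_1\iota_1(A_n)^{-1},\dots,A_{n-1}\iota_{n-1}(A_n)^{-1}),\ \pm A_n\bigr),
\]
and check --- exactly as for the map $(A,\pm B)\mapsto(AB^{-1},B)D$ in Lemma~\ref{lem:untwist} --- that $\Phi$ is surjective (given $((B_i)_{i<n},\pm C)$, take $A_n=C$ and $A_i=B_i\iota_i(C)$) with kernel precisely $\{(A_1,\dots,A_n):A_n\in\{\pm I_n\},\ A_i=\iota_i(A_n)\}=\{\pm I\}$, so that it descends to the asserted isomorphism $W'\xrightarrow{\sim}W'_{n-1}\times\PSO(\scr V_n^t)$. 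The second isomorphism $\PSO(\scr V_n^t)\cong\SO(\scr V_n^t)$ would be $\pm A\mapsto A^2$, valid because squaring is a surjective endomorphism of $\SO(\scr V_n^t)\cong\Bbb G_m$ with kernel $\{\pm I_n\}$; this is verbatim the argument in Lemma~\ref{lem:untwist}.

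\emph{Where the care lies.} There is no genuine obstacle --- it is a Smith-normal-form computation for $\Bbb G_m^{\times n}$ modulo a diagonal $\mu_2$ --- and indeed I would, like the paper, simply write ``the same as the proof of Lemma~\ref{lem:untwist}''. The one point worth a remark, and the only reason the proof is not literally that of Lemma~\ref{lem:untwist}, is that for $i\neq j$ the groups $\SO(\scr V_i^t)$ and $\SO(\scr V_j^t)$ are isomorphic but not canonically so: the chosen $\iota_i$ (equivalently, the identifications with $\Bbb G_m$) supply the cross-factor multiplication appearing in $\Phi$, and ``preferred'' should be read as ``once these harmless choices have been fixed''.
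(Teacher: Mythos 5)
Your proposal is correct and is essentially the paper's (omitted) argument: the paper simply declares the proof of this lemma to be the same as that of Lemma \ref{lem:untwist} and gives no details. In fact your explicit map $\Phi$, with the twist $A_i\mapsto A_i\iota_i(A_n)^{-1}$ in the first $n-1$ coordinates, is the well-defined version of the formula $(A,\pm B)\mapsto (A,B)D$ written in the proof of Lemma \ref{lem:untwist} (which, as literally stated there, depends on the choice of representative $B$, since $(A,-B)D\neq(A,B)D$ in general), and your closing remark that the cross-factor identifications $\iota_i$ require fixing bases of the transversal planes correctly isolates the only point at which the $n$-factor case differs from the two-factor one.
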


\begin{rem} With respect to the isomorphism of Lemma \ref{lem:prod-iso}, it is straightforward to check that the action of $W'$ on 
the summand 
\[
\scr V_i^t \otimes \scr V_j^t \cong \hom(\scr V_j^t, \scr V_i^t )
\]
of $X^1(\frak B)$ factors through the projection $W' \to \SO(\scr V_i^t) \times \SO(\scr V_j^t)$, provided
$i,j < n$.
\end{rem}

\subsection{The quotient of  $X^1(\frak B)$ by $W_{n-1}'$ } 
For convenience, we will assume:

\begin{hypo} The spanning tree $T$ is a star graph with center vertex $n$. In particular, every edge of $T$ is of the form $jn$ for some $j \in \{1,\dots,n-1\}$.
\end{hypo}

\begin{rem} Although this hypothesis is not necessary, it does simplify the proof.
\end{rem} 

In what follows we choose an orthonormal basis for each $\scr V_i^t$. Then we have an identification 
\[
\scr V^t_j\otimes \scr V_n^t \cong M_2(\Bbb C^2)
\]
for $j \in \{1,\dots,n-1\}$. 
Let
\[
W'_{n-1} =\prod_{i=1}^{n-1} \SO(\scr V_i^t)\, .
\]
Let $\Bbb C[\delta_\bullet,t_\bullet, a_\bullet,b_\bullet]$ denote the
polynomial ring generated by $\delta_i,t_i,a_i,b_i$ for $i = 1,\dots n-1$.
Let $(f_\bullet)$ be the ideal generated by $\{f_1,\dots,f_{n-1}\}$, and let
\[
(\Bbb C[\delta_\bullet,t_\bullet, a_\bullet,b_\bullet]/(f_\bullet))_{\delta_\bullet}
\]
be the localization of $\Bbb C[\delta_\bullet,t_\bullet, a_\bullet,b_\bullet]$ obtained by inverting
the product $\delta_1\cdots\delta_{n-1}$.

\begin{lem} \label{lem:quotient-w-n-1} A rational quotient  for $W'_{n-1}$ acting on $X^1(\frak B)$ is given by 
\[
\spec(\Bbb C[\delta_\bullet,t_\bullet, a_\bullet,b_\bullet]/(f_\bullet))_{\delta_\bullet}
\]
with induced action of 
\[
W'/W'_{n-1} = \PSO(\scr V^t_n) \cong \SO_2(\Bbb C)
\]
as in Proposition \ref{prop:torsor}.
\end{lem}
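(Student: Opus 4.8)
\emph{Plan.} The idea is to recognize the pair $\big(X^1(\frak B),W'_{n-1}\big)$ as the $(n-1)$-fold external product of the pair $\big(M_2(\Bbb C),\SO_2(\Bbb C)_\ell\big)$ treated in Proposition~\ref{prop:torsor}, equipped with a compatible diagonal residual $\SO_2(\Bbb C)$-action, and then to apply Proposition~\ref{prop:torsor} one factor at a time. Under the standing Hypothesis that $T$ is the star with center $n$, we have $T_1=\{1n,\dots,(n-1)n\}$, so the chosen orthonormal bases yield a $W'$-equivariant identification $X^1(\frak B)=\bigoplus_{j<n}\scr V_j^t\otimes\scr V_n^t\cong\prod_{j<n}M_2(\Bbb C)$. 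Using Lemma~\ref{lem:prod-iso} (which writes $W'\cong W'_{n-1}\times\PSO(\scr V_n^t)$, with $\PSO(\scr V_n^t)\cong\SO_2(\Bbb C)$ via $\pm A\mapsto A^2$) together with the identification $\scr V_j^t\otimes\scr V_n^t\cong\hom(\scr V_n^t,\scr V_j^t)$, a direct computation in the chosen bases — parallel to the Remark preceding this subsection, now with the center vertex $n$ in the distinguished role — shows that on the $j$-th factor $M_2(\Bbb C)$ an element $(g_j,g_n)$ of $\SO(\scr V_j^t)\times\PSO(\scr V_n^t)$ acts by $M\mapsto g_jMg_n^{-1}$, exactly the conventions of \S\ref{sec:aux}: $W'_{n-1}$ acts through left multiplication on the $j$-th factor alone, and the residual $\PSO(\scr V_n^t)$ acts diagonally, as $\PSO_2(\Bbb C)_r$ on each factor.

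\emph{Quotient by $W'_{n-1}$.} Next I would invoke Proposition~\ref{prop:torsor} factorwise. For each $j$ let $Y_j\subset M_2(\Bbb C)$ be the open locus $\delta_j\neq0$ and $U_j=\spec R_j$, $R_j=(\Bbb C[\delta_j,t_j,a_j,b_j]/(f_j))_{\delta_j}$; Proposition~\ref{prop:torsor} says $Y_j\to U_j$ is an $\SO_2(\Bbb C)_\ell$-torsor, and its proof covers $Y_j$ by two Zariski opens, each $\SO_2(\Bbb C)_\ell$-equivariantly isomorphic over a corresponding open of $U_j$ to $\Bbb G_m\times\Bbb A^2$ with $\Bbb G_m$ translating itself. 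Taking products over $j\in\{1,\dots,n-1\}$ produces the $W'$-invariant dense open $Y:=\prod_j Y_j\subset X^1(\frak B)$ — the locus $\delta_1\cdots\delta_{n-1}\neq0$ — together with a $W'_{n-1}$-torsor
\[
Y\longrightarrow\prod_{j=1}^{n-1}U_j=\spec\!\Big((\Bbb C[\delta_\bullet,t_\bullet,a_\bullet,b_\bullet]/(f_\bullet))_{\delta_\bullet}\Big),
\]
whose target (a variety, being a product of varieties over $\Bbb C$) is precisely the scheme in the statement, since $\bigotimes_j R_j$ is that localization. As $Y$ is Zariski-locally, $W'_{n-1}$-equivariantly, a product of an affine space with the torus $W'_{n-1}$ translating itself, every $W'_{n-1}$-invariant rational function on $Y$ descends to the base, so $\Bbb C(X^1(\frak B))^{W'_{n-1}}=\Bbb C(Y)^{W'_{n-1}}\cong\Bbb C\big(\prod_j U_j\big)$; this exhibits $\prod_j U_j$ as a rational quotient.

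\emph{Residual action and main obstacle.} Because $W'_{n-1}$ is a direct factor of $W'$, the torsor $Y\to\prod_j U_j$ is $W'$-equivariant for the diagonal $\PSO(\scr V_n^t)$-action, whence the induced action on $\prod_j U_j$ is the quotient of that diagonal action; by the final paragraph of Proposition~\ref{prop:torsor} it is trivial on every $t_j,\delta_j$ and is left multiplication on each column vector $(a_j,b_j)$ — the diagonal $\SO_2(\Bbb C)$-action ``as in Proposition~\ref{prop:torsor},'' as asserted. I expect the only genuinely delicate point to be the bookkeeping of the first paragraph: checking that the orthonormal-basis identification $\scr V_j^t\otimes\scr V_n^t\cong M_2(\Bbb C)$ sends the $\SO(\scr V_j^t)$-action to left multiplication and the $\SO(\scr V_n^t)$-action to right multiplication by the inverse, and that the diagonal embedding $\PSO(\scr V_n^t)\hookrightarrow\prod_j\PSO_2(\Bbb C)_r$ is what yields a single $\SO_2(\Bbb C)$ acting on $\prod_j U_j$. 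Everything else is the formal fact that a finite product of (Zariski-locally trivial) torsors is again such a torsor, together with descent of invariant rational functions along it.
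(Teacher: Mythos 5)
Your proof is correct and follows essentially the same route as the paper: the paper simply iterates Lemma \ref{lem:double-quotient1} to write $X^1(\frak B)/\!\!/W'_{n-1}$ as $\prod_{i=1}^{n-1} M_2(\Bbb C)/\!\!/\SO_2(\Bbb C)_\ell$ and then cites Proposition \ref{prop:torsor}, which is exactly your factorwise reduction, just phrased via products of algebraic quotients rather than products of torsors. Your additional verification of the residual $\PSO(\scr V_n^t)$-action is a detail the paper leaves implicit but is consistent with its conventions.
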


\begin{proof} 
By an iterated application of Lemma \ref{lem:double-quotient1}
we have an isomorphism of algebraic quotients
\[
X^1(\frak B)/\!\!/W'_{n-1} =  \prod_{i=1}^{n-1} M_2(\Bbb C)/\!\!/\SO_2(\Bbb C)_\ell \, .
 \]
 Consequently, the result follows from Proposition \ref{prop:torsor}.
\end{proof}

\subsection{The quotient of  $X^1(\frak B)$ by $W'$} 
The isomorphism
\[
W' \cong W'_{n-1} \times \SO(\scr V_n^t) 
\]
and Lemma \ref{lem:quotient-w-n-1} show that
a rational quotient for the action of $W'$ on  $X^1(\frak B)$  is given by a rational quotient
for the action of $\SO(\scr V_n^t)$ on $X^1(\frak B)/\!\!/W'_{n-1}$. 
Our next step is to construct a rational quotient for the latter action using 
the rational quotient provided by Lemma \ref{lem:quotient-w-n-1}.

Let
 $\{e_1,e_2\}$ be an  orthonormal basis for $\scr V_n^t$.
Then 
\[
 \SO(\scr V_n^t)  \cong \SO_2(\Bbb C) \, .
 \]
So we have identifications $\Bbb C^2 \cong  \scr V_i^t$ for $i = 1,\dots, n$.

Consider the alternative basis $\{e_\pm\}$ for $\Bbb C^2$ given by
\[
e_\pm = e_1 \pm ie_2\, .
\]
If 
\[
w_i := (a_i,b_i) \in \Bbb C^2\, ,
\]
we let $w_i^\pm$ denote the components of $w$ in the new basis,
i.e., $w_i = w_i^-e_- + w_i^+ e_+$. Consequently, in the alternative basis,
 $w_i$  has coordinates $(w_i^-,w_i^+)$.

Using the isomorphism $\Bbb G_m \cong \SO_2(\Bbb C)$ of Lemma \ref{lem:mult}, 
one calculates that the action of $\Bbb G_m$ on $\Bbb C^2$ in the alternative basis is given by
\[
\lambda\cdot (w_i^-,w_i^+) = (\lambda w_i^-,\lambda^{-1}w_i^+)\, .
\]
Note that $f_i = \delta_i^2 - t_i^2 + a^2 + b^2$ as expressed  in the alternative basis is
the $\Bbb G_m$-invariant function
\[
\hat f_i := \delta_i^2 - t_i^2 + 4w_i^-w_i^+\, ,
\]
and as expressed in the new variables, we obtain the following alternative 
to Lemma \ref{lem:quotient-w-n-1}.

\begin{add} A rational quotient for  $W'_{n-1}$ acting on $X^1(\frak B)$ is given by
\[
 \spec(\Bbb C[\delta_\bullet ,t_\bullet, w_\bullet^-,w_\bullet^+]/(\hat f_\bullet)_{\delta_\bullet}) \, ,
\]
where $\Bbb G_m$ acts trivially on the generators $\delta_i,t_i$, 
and on the variables $w_i^\pm$ by $\lambda\cdot w_i^- = \lambda w_i^-$,
 and  $\lambda\cdot w_i^+ = \lambda^{-1} w_i^+$. 
 \end{add}
 
 The advantage of the latter description is that it allows
 an explicit description of a rational quotient for the action of $W'/W'_{n-1}$ on $X^1(\frak B)$.

\begin{defn} Let $\Omega$ be the set of monomials generated by $\{w_i^\pm\}_{i\le n-1}$. Then $\Omega$ is a monoid under multiplication. 
The {\it signature} is the unique monoid homomorphism $\Omega \to (\Bbb Z,+)$
which maps $w_i^\pm$ to $\pm 1$.
\end{defn}

Clearly, a monomial in the generators $\{w_i^\pm\}_{i\le n-1}$ is  $\Bbb G_m$-invariant if and only its signature is zero. 
Let  $\Omega^0$ to be the set
of all monomials in the generators $w_i^\pm$ that have signature zero. Then 
\[
u_{jk} := 4w_j^+w_k^-  \quad 1\le j,k \le n-1
\]
are a set of generators for $\Omega^0$.

\begin{defn}
A {\it loop} $\gamma$ of length $r$ in the complete graph $K_{n-1}$ is a finite sequence of contiguous, directed edges of the form
\[
\epsilon_1 = (j_1,j_2), \epsilon_2 = (j_2,j_3),\cdots ,\epsilon_r =(j_r,j_1)
\]
The set of vertices appearing in $\gamma$ is denoted by $v(\gamma)$, and the set of directed edges appearing in $\gamma$ is denoted by
$e(\gamma)$. We write $\ell(\gamma) = r$ for the length of $\gamma$.
\end{defn}

For every 
loop $\gamma$, we  define 
\[
u_\gamma := \prod_{\alpha \in e(\gamma)} u_{\alpha}\, .
\] 
Then  we have the relation
\[
u_\gamma   =  \prod_{j\in v(\gamma)}  (t_{j}^2 - \delta_{j}^2) \,  .
\]

Since  any loop can be rewritten as a concatenation of loops of length at most three, we have the following.

\begin{lem} A complete set of relations is given by
\[
u_{kk} = t_k^2 -  \delta_k^2 , \qquad u_\gamma   =  \prod_{j\in v(\gamma)}  (t_{j}^2 - \delta_{j}^2)\, , 
\]
as $\gamma$ ranges over all loops of length $\le 3$ and $1 \le k \le n-1$.
\end{lem}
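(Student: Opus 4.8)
The plan is to make the coordinate ring $A$ of the rational quotient of the Addendum completely explicit — here $A$ is the ring of $\Bbb G_m$-invariants of $B:=(\Bbb C[\delta_\bullet,t_\bullet,w^-_\bullet,w^+_\bullet]/(\hat f_\bullet))_{\delta_\bullet}$ — and then to read off the relations among its generators $\delta_i^{\pm1}$, $t_i$ and $u_{jk}$. Put $c_k:=t_k^2-\delta_k^2$. First I would exhibit a module basis for $B$. For a single index $j$ the hypersurface ring $\Bbb C[\delta_j^{\pm1},t_j,w^+_j,w^-_j]/(\hat f_j)$ carries the $\Bbb G_m$-grading with $\deg w^\pm_j=\pm1$ and $\deg\delta_j=\deg t_j=0$, in which $\hat f_j$ is homogeneous; using $4w^-_jw^+_j=c_j$ to clear every mixed monomial and the grading for independence, this ring is free over $\Bbb C[\delta_j^{\pm1},t_j]$ on $\{(w^+_j)^a\}_{a\ge0}\cup\{(w^-_j)^b\}_{b\ge1}$. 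Tensoring over $\Bbb C$, $B$ is free over $\Bbb C[\delta_\bullet^{\pm1},t_\bullet]$ on the monomials in which each index contributes a power of $w^+$ or of $w^-$ (or $1$), and passing to $\Bbb G_m$-invariants keeps exactly those of weight zero, which I call the reduced invariant monomials. Since such a monomial has disjoint ``$+$'' and ``$-$'' index sets, pairing its $w^+$-factors with its $w^-$-factors writes it, up to a power of $4$, as a product of the $u_{jk}=4w^+_jw^-_k$ with $j\ne k$. Hence $A$ is generated over $\Bbb C$ by $\delta_i^{\pm1}$, $t_i$ and the $u_{jk}$, and the reduced invariant monomials form a $\Bbb C[\delta_\bullet^{\pm1},t_\bullet]$-basis of $A$.

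Next I would verify the displayed relations and prove completeness. The relation $u_{kk}=4w^+_kw^-_k=c_k$ is just $\hat f_k=0$. A loop of length $2$ or $3$ in $K_{n-1}$ is necessarily simple (an edge joins distinct vertices, and such a short closed walk cannot revisit a vertex), so $u_\gamma=\prod_{\epsilon\in e(\gamma)}u_\epsilon=\prod_{j\in v(\gamma)}(4w^+_jw^-_j)=\prod_{j\in v(\gamma)}c_j$, the $w^+$- and $w^-$-factors pairing index by index; in fact $u_\gamma=\prod_{j\in v(\gamma)}u_{jj}$ holds identically in the monoid algebra $\Bbb C[\Omega^0]$, so every loop relation, of any length, is a consequence of the relations $u_{kk}=c_k$ by a telescoping argument — which is why it is enough to record loops of length $\le3$, the longer ones decomposing as the text observes. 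For completeness I would present $A$ as a quotient of $R:=\Bbb C[\delta_\bullet^{\pm1},t_\bullet]\otimes_{\Bbb C}\Bbb C[\Omega^0]$, where working over the monoid algebra builds in the identities $u_{jk}u_{lm}=u_{jm}u_{lk}$ (both sides equal $16w^+_jw^-_kw^+_lw^-_m$), and show that the kernel of $R\to A$ is the ideal $\frak a=(\{u_{kk}-c_k:1\le k\le n-1\})$, which by the previous sentence coincides with the ideal generated by the displayed relations. To see $R/\frak a\to A$ is injective: whenever a monomial in the $u_{jk}$ has an index $i$ occurring both as a first and as a second subscript, say with factors $u_{ia}$ and $u_{bi}$, the identity $u_{ia}u_{bi}=u_{ii}u_{ba}$ together with $u_{ii}\equiv c_i\pmod{\frak a}$ rewrites it with one fewer $u$-factor as $c_i$ times another such monomial; iterating, every element of $R$ is congruent mod $\frak a$ to a $\Bbb C[\delta_\bullet^{\pm1},t_\bullet]$-combination of reduced invariant monomials, which are $\Bbb C[\delta_\bullet^{\pm1},t_\bullet]$-independent in $A$ by the first part. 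Thus they are a basis of $R/\frak a$ as well and $R/\frak a\to A$ is an isomorphism, so $\frak a$ is the full relation ideal.

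The main obstacle is this completeness step, and the point requiring care is the choice of ambient ring. The rewriting $u_{ia}u_{bi}\equiv c_iu_{ba}$ must be justified using only the displayed relations and the quadratic identities among the $u_{jk}$ that already hold in $\Bbb C[\Omega^0]$ (equivalently, the $2\times2$ minors of the generic matrix $(u_{jk})$), never a cancellation valid merely because $A$ is a domain; this is exactly why it is natural to present $A$ over $\Bbb C[\delta_\bullet^{\pm1},t_\bullet]\otimes_{\Bbb C}\Bbb C[\Omega^0]$ rather than over a free polynomial ring. The combinatorial input recalled in the text — that every loop is a concatenation of loops of length $\le3$ — is then what shows that recording only the length-$2$ and length-$3$ loop relations loses nothing: for $|\{i,a,b\}|\le3$ the rewriting move is literally one of these, and in general it is obtained from them together with the telescoping of the identity $u_\gamma=\prod_{j\in v(\gamma)}u_{jj}$; confluence of the reduction is what finally matches the resulting spanning set with the reduced-invariant-monomial basis of $A$.
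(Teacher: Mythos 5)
Your argument is correct, and it supplies considerably more than the paper does: the paper's entire justification for this lemma is the preceding one-line remark that every loop is a concatenation of loops of length at most three, leaving implicit both why loop relations generate all relations and in what ambient ring the presentation is taken. Your route --- a standard-monomial basis for $(\Bbb C[\delta_\bullet,t_\bullet,w^-_\bullet,w^+_\bullet]/(\hat f_\bullet))_{\delta_\bullet}$ over $\Bbb C[\delta_\bullet^{\pm1},t_\bullet]$, followed by a presentation of the invariant ring over the monoid algebra $\Bbb C[\Omega^0]$ and a rewriting argument identifying the kernel with the ideal generated by the $u_{kk}-c_k$ --- is a genuinely different and more informative argument. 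Two of your observations deserve emphasis. First, in the monoid-algebra presentation the length-$2$ and length-$3$ loop relations are literal consequences of $u_{kk}=c_k$ (via $u_\gamma=\prod_{j\in v(\gamma)}u_{jj}$ for simple loops), so the displayed set is complete but redundant there; whereas over a free polynomial ring on the $u_{jk}$ the determinantal identities $u_{jk}u_{lm}=u_{jm}u_{lk}$ are \emph{not} formal consequences of the loop relations when $n-1\ge 4$ (set every $c_j=0$: the loop relations become a monomial ideal, which cannot contain the binomial $u_{12}u_{34}-u_{14}u_{32}$), so your choice of ambient ring is not merely convenient but is what makes the statement true. Second, the telescoping the paper alludes to --- splitting a long loop into two shorter ones --- yields $u_\gamma\, c_{j_1}c_{j_3}=u_{\gamma_1}u_{\gamma_2}$ and so requires cancelling $c_{j_1}c_{j_3}$, legitimate only after further localization or by invoking integrality; your identity $u_\gamma=\prod_{j\in v(\gamma)}u_{jj}$ inside $\Bbb C[\Omega^0]$ avoids this. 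A small shortcut for your completeness step: each $\hat f_k$ is itself $\Bbb G_m$-invariant and taking invariants is exact for the reductive group $\Bbb G_m$, so the kernel of $R\to A$ is automatically generated by the $u_{kk}-c_k$; your explicit rewriting then serves only to exhibit the normal form, and (as you note) confluence is not actually needed since independence of the reduced monomials is checked directly in $A$.
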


The relations can further be simplified as follows: if $\alpha = (i,j)$ is a directed edge of $K_{n-1}\subset K_n$
then we write 
\[
\tilde u_{ij} := \frac{u_{ij}}{\delta_i^2 - t_i^2}\, , \qquad i \ne j\, .
\] 
Then we have 
$\tilde u_\gamma = 1$ for all loops $\gamma$. In particular $\tilde u_{ij} ^{-1} = \tilde u_{ji}$.
Furthermore,  for $j,k \ne 1$, the relation
\[
\tilde u_{jk} = \frac{\tilde u_{1k}}{\tilde u_{1j}}
\]
shows that the relations are determined by the loops of length $\le 3$ that are based at the basepoint vertex $1$:

\begin{cor} A complete set of relations is given by
\[
\tilde u_\gamma = 1\, .
\]
where $\gamma$ ranges over the set of loops of length $\le 3$ that contain the basepoint vertex $1$.
\end{cor}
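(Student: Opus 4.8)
The plan is to start from the complete set of relations furnished by the preceding Lemma --- the self-loop relations $u_{kk}=t_k^2-\delta_k^2$ for $1\le k\le n-1$ together with the loop relations $u_\gamma=\prod_{j\in v(\gamma)}(t_j^2-\delta_j^2)$ as $\gamma$ ranges over loops of length $2$ and $3$ --- and to show that, after passing to the birational (localized) model in which each $t_i^2-\delta_i^2$ is a unit, the loop relations that are not indexed by a loop through the basepoint vertex $1$ are consequences of those that are.

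First I would use the self-loop relations to eliminate the variables $u_{kk}$ and to legitimize the substitution $u_{ij}=(t_i^2-\delta_i^2)\,\tilde u_{ij}$ for $i\ne j$; inverting the elements $t_i^2-\delta_i^2$ only discards a proper closed subset, so it is harmless for the purpose of exhibiting a rational quotient. Under this substitution a loop relation $u_\gamma=\prod_{j\in v(\gamma)}(t_j^2-\delta_j^2)$, after cancelling the common unit $\prod_{j\in v(\gamma)}(t_j^2-\delta_j^2)$, becomes exactly $\tilde u_\gamma=1$ (keeping track of the elementary sign from $\tilde u_{jk}\tilde u_{kl}$ versus $\tilde u_{jl}$). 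In particular the length-$2$ loops give $\tilde u_{ij}\tilde u_{ji}=1$, that is $\tilde u_{ji}=\tilde u_{ij}^{-1}$.

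Next I would carry out the reduction to the basepoint. The length-$3$ loop $1\to i\to j\to 1$ gives $\tilde u_{1i}\,\tilde u_{ij}\,\tilde u_{j1}=1$, and combining with $\tilde u_{j1}=\tilde u_{1j}^{-1}$ (a length-$2$ loop through $1$) yields $\tilde u_{ij}=\tilde u_{1j}/\tilde u_{1i}$ for all $i,j\in\{2,\dots,n-1\}$; both relations used here are indexed by loops of length $\le 3$ through vertex $1$. Substituting these expressions into any relation $\tilde u_\gamma=1$ for a loop $\gamma$ avoiding vertex $1$, the product telescopes to $1$ identically, so that relation is a consequence of the vertex-$1$ relations of length $\le 3$. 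Finally, the fact that every loop can be rewritten as a concatenation of loops of length $\le 3$ (already invoked in the preceding Lemma) shows that loops of length $\le 3$ suffice at the outset, whence the loops of length $\le 3$ through vertex $1$ generate the whole relation ideal. Along the way one reads off that the residual generating set is $\{\delta_i,t_i\}\cup\{\tilde u_{1m}:2\le m\le n-1\}$ together with the inverted $\delta_i$ and $t_i^2-\delta_i^2$.

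The main obstacle I anticipate is bookkeeping rather than any single deep point: one must verify that in reducing a length-$3$ loop not through vertex $1$ every relation invoked is itself a vertex-$1$ loop of length $\le 3$ (in particular that the length-$2$ relations $\tilde u_{m1}\tilde u_{1m}=1$ used to invert the spokes are among those allowed), and one must keep the signs straight in the passage $u_{ij}\leftrightarrow\tilde u_{ij}$ so that the loop relations collapse to exactly $\tilde u_\gamma=1$. There is also the minor point of checking that, after the localizations, the spoke variables $\tilde u_{12},\dots,\tilde u_{1,n-1}$ remain algebraically independent over $\Bbb C(\delta_\bullet,t_\bullet)$, so that the displayed set of relations is genuinely complete and not merely sufficient --- this is also what makes the resulting model of the quotient rational.
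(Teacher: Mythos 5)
Your argument is correct and is essentially the argument the paper gives: the paper presents it via the displayed identities $\tilde u_{ji}=\tilde u_{ij}^{-1}$ and $\tilde u_{jk}=\tilde u_{1k}/\tilde u_{1j}$ in the text preceding the corollary, and your proof simply makes the telescoping and the reduction to basepoint loops explicit. Your worry about signs is also warranted: with the paper's literal normalization $\tilde u_{ij}=u_{ij}/(\delta_i^2-t_i^2)$ one actually gets $\tilde u_\gamma=(-1)^{\ell(\gamma)}$, so your convention $u_{ij}=(t_i^2-\delta_i^2)\,\tilde u_{ij}$ is the one under which $\tilde u_\gamma=1$ holds on the nose.
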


Set
\[
\tilde u_j := \tilde u_{1j} \quad j = 2,\dots n-1\, ,
\]
and define
\[
\Delta = \prod_{i=1}^{n-1} \delta_i(\delta^2_i-t_i^2)\, .
\]
Then there are no relations between the variables $\tilde u_j$.
 Consequently,

\begin{prop} \label{prop:wprimeX1} A rational quotient for the action of $W'$ on $X^1(\frak B)$ is given by
\[
\spec(\Bbb C[t_\bullet,\delta_\bullet,\tilde u_\bullet]_{\Delta})
\]
In particular, the field $\Bbb C(X^1(\frak B))^{W'}$ is pure of transcendence degree $3n-4$.
\end{prop}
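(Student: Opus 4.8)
The statement to prove is Proposition \ref{prop:wprimeX1}: that $\spec(\Bbb C[t_\bullet,\delta_\bullet,\tilde u_\bullet]_{\Delta})$ is a rational quotient for the action of $W'$ on $X^1(\frak B)$, and consequently $\Bbb C(X^1(\frak B))^{W'}$ has transcendence degree $3n-4$. By Lemma \ref{lem:quotient-w-n-1} and its Addendum, we already have a rational quotient for $W'_{n-1}$ acting on $X^1(\frak B)$, namely $\spec(\Bbb C[\delta_\bullet,t_\bullet,w_\bullet^-,w_\bullet^+]/(\hat f_\bullet))_{\delta_\bullet}$, carrying the residual action of $W'/W'_{n-1} \cong \Bbb G_m$ by $\lambda\cdot w_i^\pm = \lambda^{\mp 1}w_i^\pm$ and trivially on $\delta_i,t_i$. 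So the first move is to reduce, via Corollary \ref{cor:double-quotient3} (or Lemma \ref{lem:double-quotient2}), to computing the $\Bbb G_m$-invariants of this localized ring: $\Bbb C(X^1(\frak B))^{W'} \cong \Bbb C(\spec(\Bbb C[\delta_\bullet,t_\bullet,w_\bullet^\pm]/(\hat f_\bullet))_{\delta_\bullet})^{\Bbb G_m}$.

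Next I would compute the ring of $\Bbb G_m$-invariants. Since the $\Bbb G_m$-weights of the $w_i^\pm$ are $\mp 1$ and the $\delta_i,t_i$ are weight zero, a monomial is invariant iff its signature vanishes; the invariant monoid $\Omega^0$ is generated by the products $u_{jk} = 4w_j^+w_k^-$ for $1\le j,k\le n-1$. Working in the localization where $\delta_1\cdots\delta_{n-1}$ is inverted — and noting that the relation $\hat f_i = 0$ gives $4w_i^-w_i^+ = t_i^2-\delta_i^2$, so $u_{ii} = t_i^2-\delta_i^2$ is already a function of $\delta_i,t_i$ and is a unit after further inverting the $\delta_i^2 - t_i^2$ — I would pass to the normalized generators $\tilde u_{ij} = u_{ij}/(\delta_i^2 - t_i^2)$. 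The loop relations $u_\gamma = \prod_{j\in v(\gamma)}(t_j^2-\delta_j^2)$ (which come from cyclically multiplying the defining relation of each $u$ around a loop and using $\hat f_j=0$) become $\tilde u_\gamma = 1$, and then the cocycle identity $\tilde u_{jk} = \tilde u_{1k}/\tilde u_{1j}$ lets one eliminate all $\tilde u_{jk}$ in favor of $\tilde u_j := \tilde u_{1j}$ for $j=2,\dots,n-1$, with no remaining relations among the $\tilde u_j$. This identifies the invariant ring, after inverting $\Delta = \prod_i \delta_i(\delta_i^2-t_i^2)$, with the localized polynomial ring $\Bbb C[t_\bullet,\delta_\bullet,\tilde u_\bullet]_\Delta$ — a localization of an affine space, hence rational. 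The transcendence count: $2(n-1)$ variables $t_i,\delta_i$ plus $(n-2)$ variables $\tilde u_j$ gives $2(n-1) + (n-2) = 3n-4$, as claimed.

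**The main obstacle.** The genuinely delicate point is verifying that the $\tilde u_j$ are free — i.e., that after the reductions above there are no hidden relations — and, relatedly, that we have correctly identified the \emph{field} of invariants and not merely a subfield. The cleanest route is a dimension count: the rational quotient must have dimension $\dim X^1(\frak B) - \dim W'$. Here $\dim X^1(\frak B) = \dim\bigoplus_{jn\in T_1}\scr V_{jn}^t = 4(n-1)$ (each $\scr V_j^t\otimes\scr V_n^t$ is $4$-dimensional and $T$ has $n-1$ edges), while $\dim W' = \dim W'_{n-1} + \dim\SO(\scr V_n^t) = (n-1) + 1 = n$; so the quotient has dimension $4(n-1) - n = 3n-4$, matching the count of free generators and confirming both freeness and that we have the full invariant field. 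One should also check that $\Bbb G_m$ acts with generically trivial stabilizer on $\spec(\Bbb C[\delta_\bullet,t_\bullet,w_\bullet^\pm]/(\hat f_\bullet))_{\delta_\bullet}$ so that the invariant field indeed has the expected transcendence degree — this is immediate since e.g. $w_1^-$ has nonzero weight and is generically nonzero — and that the map to the proposed quotient is dominant with the claimed function field, which follows from the explicit generators-and-relations description. The remaining verifications (the loop relation computation, the passage to $\tilde u$, the "loops of length $\le 3$ suffice" reduction) are the routine calculations already sketched in the lemmas preceding the proposition and can be invoked directly.
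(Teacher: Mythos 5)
Your proposal is correct and follows essentially the same route as the paper: reduce via the quotient-by-a-normal-subgroup lemma to the residual $\Bbb G_m$-action on the $W'_{n-1}$-quotient, identify the invariants with the signature-zero monomials generated by the $u_{jk}$, and use the loop relations to eliminate everything except $t_\bullet,\delta_\bullet,\tilde u_\bullet$. Your dimension count $\dim X^1(\frak B)-\dim W' = 4(n-1)-n = 3n-4$ is a welcome explicit justification of the algebraic independence of the remaining generators, which the paper merely asserts.
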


We conclude this section by determining a rational quotient for the action of $W'$ on $X_T(\frak B)$.
Let
\[
e_\bullet = \{e_1, \dots, e_n\}
\]
denote an orthonormal basis for the vector space $X^0(\frak B)$, where  $e_j$ spans $\scr V_j^\ell$.
We write
\[
\alpha_j\: \scr V^\ell_j \to \Bbb C\, , \qquad j =1,\dots, n
\]
for the invariant function given by $\alpha_j(v) = \langle v,e_j\rangle$.

Recall that $W'$ acts trivially on $X^0(\frak B)$. Hence,

\begin{cor} \label{cor:wprimeX1} 
A rational quotient for the action of $W'$ on $X_T(\frak B)$ is given by
\[
\spec(\Bbb C[t_\bullet,\delta_\bullet,\tilde u_\bullet,\alpha_\bullet]_{\Delta})\, .
\]
In particular, the field $\Bbb C(X_T(\frak B))^{W'}$ is purely transcendental over $\Bbb C$ of degree $4n-4$.
\end{cor}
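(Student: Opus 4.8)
The plan is to deduce Corollary~\ref{cor:wprimeX1} from Proposition~\ref{prop:wprimeX1} by the simplest possible product argument, since the only new ingredient is the trivially-acted summand $X^0(\frak B)$. First I would invoke the decomposition $X_T(\frak B) = X^0(\frak B) \oplus X^1(\frak B)$ together with the fact, already noted in~\S\ref{sec:quotient-by-wprime}, that $W'$ acts trivially on $X^0(\frak B)$. Writing $X^0(\frak B) = \spec(\Bbb C[\alpha_1,\dots,\alpha_n])$ in terms of the coordinate functions $\alpha_j$ dual to the orthonormal basis $e_\bullet$, the coordinate ring of $X_T(\frak B)$ is the tensor product $\Bbb C[\alpha_\bullet] \otimes_{\Bbb C} \Bbb C[X^1(\frak B)]$, and taking $W'$-invariants commutes with tensoring against a trivial factor, so $\Bbb C[X_T(\frak B)]^{W'} \cong \Bbb C[\alpha_\bullet] \otimes_{\Bbb C} \Bbb C[X^1(\frak B)]^{W'}$. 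This is precisely the content of Lemma~\ref{lem:double-quotient1} applied with $H$ trivial acting on $X^0(\frak B)$, so that $X_T(\frak B)/\!\!/W' \cong X^0(\frak B) \times (X^1(\frak B)/\!\!/W')$; the latter was in fact already observed explicitly at the start of the section.

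Next I would feed in Proposition~\ref{prop:wprimeX1}, which identifies a rational quotient for $W'$ on $X^1(\frak B)$ with $\spec(\Bbb C[t_\bullet,\delta_\bullet,\tilde u_\bullet]_\Delta)$. Since rational quotients are only defined up to birational equivalence and $X^0(\frak B) = \Bbb A^n$ with coordinates $\alpha_\bullet$, a rational quotient for the product action on $X_T(\frak B)$ is the product variety
\[
\Bbb A^n \times \spec(\Bbb C[t_\bullet,\delta_\bullet,\tilde u_\bullet]_\Delta) \cong \spec(\Bbb C[t_\bullet,\delta_\bullet,\tilde u_\bullet,\alpha_\bullet]_\Delta)\, ,
\]
where $\Delta = \prod_{i=1}^{n-1}\delta_i(\delta_i^2 - t_i^2)$ involves none of the new variables $\alpha_\bullet$, so the localization simply passes through the extra polynomial generators. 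One subtlety worth a sentence: I should confirm that the induced action of $W'/W'_{n-1} = \PSO(\scr V^t_n)$ used in Proposition~\ref{prop:wprimeX1} is unaffected by adjoining the $\alpha_\bullet$, which is immediate since that residual group already acted trivially on $X^0(\frak B)$ as well.

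Finally, for the transcendence-degree statement: the function field of the rational quotient is $\Bbb C(t_\bullet,\delta_\bullet,\tilde u_\bullet,\alpha_\bullet)$, and I would count generators. By Proposition~\ref{prop:wprimeX1} the subfield $\Bbb C(X^1(\frak B))^{W'} = \Bbb C(t_\bullet,\delta_\bullet,\tilde u_\bullet)$ is purely transcendental of degree $3n-4$; adjoining the $n$ algebraically independent coordinates $\alpha_1,\dots,\alpha_n$ (independent because they come from a tensor factor on which the group acts trivially, so no relations are introduced) yields a purely transcendental extension of degree $(3n-4) + n = 4n-4$. Hence $\Bbb C(X_T(\frak B))^{W'}$ is purely transcendental over $\Bbb C$ of degree $4n-4$, as claimed. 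I do not anticipate a genuine obstacle here — the entire argument is a bookkeeping application of Lemma~\ref{lem:double-quotient1} plus Proposition~\ref{prop:wprimeX1} — the only thing requiring care is making the "trivial tensor factor" reasoning precise enough that the algebraic independence of $\{\alpha_\bullet\}$ from $\{t_\bullet,\delta_\bullet,\tilde u_\bullet\}$ is not merely asserted but seen to follow from the tensor-product decomposition of coordinate rings.
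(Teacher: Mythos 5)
Your argument is correct and is essentially the paper's own: the corollary is deduced from Proposition \ref{prop:wprimeX1} by the observation (already recorded at the start of \S\ref{sec:quotient-by-wprime}) that $W'$ acts trivially on $X^0(\frak B)$, so that $X_T(\frak B)/\!\!/W' \cong X^0(\frak B)\times (X^1(\frak B)/\!\!/W')$ and the $n$ coordinates $\alpha_\bullet$ contribute $n$ further algebraically independent generators, raising the transcendence degree from $3n-4$ to $4n-4$. Your extra care in justifying the tensor-factor step via Lemma \ref{lem:double-quotient1} is sound but not a departure from the paper's route.
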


\section{The quotient by $W/W'$}  \label{sec:w-wprime}
In order to understand how the group
\[
W/W' \cong (\Bbb Z/2)^{\times n}
\]
 acts on the rational quotient  provided by Corollary \ref{cor:wprimeX1},
we introduce yet another set of generators $s_j,v_j$ that are defined by the equations
\[
s_j := \frac{1}{2}(u_{1j} + u_{j1}) \quad \text{and} \quad v_j := \frac{1}{2i}(u_{1j} - u_{j1})\, , \quad j = 2,\dots n-1\, .
\]
These generators are subject to the relations
\[
\rho_j := s_j^2 + v_j^2 - (t_j^2-\delta_j^2)(t_1^2-\delta_1^2) = 0\,.
\]
Then $X_T(\frak{B})/\!\!/W'$ is birationally equivalent to $\spec$ of
\[
(\Bbb C[t_\bullet,\delta_\bullet, s_\bullet,v_\bullet,\alpha_\bullet]/(\rho_\bullet))_{\Delta} \, .
\]
If we identify $W/W' = (\Bbb Z/2)^{\times n}$, then 
\begin{itemize}
\item all factors of $W/W'$ act trivially on the variables $s_j$ and $t_j$;
\item only the last factor of $W/W'$  acts non-trivially on the variables $v_j$;
\item only the first and $j$-th factors  of $W/W'$  act non-trivially on the variables $\delta_j$;
\item only the $j$-th factor of $W/W'$ acts non-trivially on $\alpha_j$.
\end{itemize}

Define yet another generating set by
\begin{alignat*}{3}
\tilde s_j &= s_j\, , \quad &\tilde v_j = \alpha_nv_j \, , \quad  &j = 2,\dots, n-1\, ,  \\
\tilde t_k &= t_k\, ,  \quad &\tilde \delta_k = \alpha_k\alpha_1\delta_k\, ,  \quad &k = 1,\dots n-1\, ,\\
\eta_\ell &= \alpha_\ell^2\, , &&\ell =1 \dots n\, .
\end{alignat*}
Then $W/W'$ acts trivially on the new generating set. Moreover, the relation defined by $\rho_j$ becomes:
\[
\tilde s_j^2 + \frac{\tilde v_j^2}{\eta_n}  = (\tilde t_j^2 - \frac{\tilde\delta_j^2}{\eta_1\eta_j})(\tilde t_1^2 - \frac{\tilde \delta_1^2}{\eta_1^2})\, .
\]
Solving for $\eta_j$ when $j\ne 1,n$ and simplifying, we obtain the rational function
\begin{equation} \label{eqn:rational-fcn}
\eta_j = \frac{\tilde \delta^2\eta_n(\tilde t_1^2\eta_1^2 - \tilde \delta_1^2)}{\eta_1\eta_n\tilde t_j^2(\tilde t_1^2\eta_1^2  -\tilde \delta_1^2) - \eta_1^3(\tilde s_j^2\eta_n + \tilde v_j^2) }
\end{equation}
It follows that the generators $\eta_j$ for $j \ne 1,n$ can be discarded.
Let $\beta_j$ denote the denominator of \eqref{eqn:rational-fcn} and define
\[
\tilde \Delta :=
 \eta_1\eta_n (\eta_1^2\tilde t_1^2 - \tilde \delta_1^2)(\prod_{k=1}^{n-1} \tilde \delta_k)(\prod_{j=2}^{n-1}\beta_j(\eta_n \tilde s_j^2 + \tilde v_j^2) ).
\]
Then 
\begin{cor}  \label{cor:rat-quo-w-XT} A rational quotient for the action of $W$ on $X_T(\frak B)$ is given by
\[
\spec(\Bbb C[\tilde t_\bullet, \tilde \delta_\bullet,\tilde s_\bullet,\tilde v_\bullet,\eta_1,\eta_n]_{\tilde \Delta})\, .
\]
In particular, the algebraic quotient $X_T(\frak B)/\!\!/W$ is rational and 
the function field $\Bbb C(X_T(\frak B))^W$ 
is purely transcendental over $\Bbb C$ of degree $4n-4$.
\end{cor}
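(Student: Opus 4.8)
The plan is to start from the rational quotient for the action of $W'$ on $X_T(\frak B)$ provided by Corollary \ref{cor:wprimeX1}, namely $\spec(\Bbb C[t_\bullet,\delta_\bullet,\tilde u_\bullet,\alpha_\bullet]_\Delta)$, and then to take the further algebraic quotient by $W/W' \cong (\Bbb Z/2)^{\times n}$, invoking Corollary \ref{cor:double-quotient3} (the double-quotient identity $X_T(\frak B)/\!\!/W = (X_T(\frak B)/\!\!/W')/\!\!/(W/W')$). So the entire content is to compute the $(\Bbb Z/2)^{\times n}$-invariants of the displayed coordinate ring and verify the result is a localized polynomial ring of the asserted transcendence degree. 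First I would replace the generators $\tilde u_j$ by the ``real and imaginary parts'' $s_j, v_j$ introduced in the section, recording the single quadratic relation $\rho_j = s_j^2 + v_j^2 - (t_j^2 - \delta_j^2)(t_1^2 - \delta_1^2) = 0$ for each $j = 2,\dots,n-1$; this is a routine change of variables on the already-established model, so the birational equivalence with $\spec$ of $(\Bbb C[t_\bullet,\delta_\bullet,s_\bullet,v_\bullet,\alpha_\bullet]/(\rho_\bullet))_\Delta$ is immediate.

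Next I would pin down precisely how each of the $n$ factors of $(\Bbb Z/2)^{\times n}$ acts: a sign change in the $i$-th qubit's transversal frame flips the sign of $\alpha_i$ (since $\alpha_i$ is linear in $\scr V_i^\ell$, which... wait — rather, it acts on the chosen frame so that $\alpha_i \mapsto -\alpha_i$), flips $v_j$ only through the $n$-th factor, flips $\delta_j$ only through the first and $j$-th factors, and fixes all $s_j$ and $t_k$. These four bullet points are stated in the excerpt and would need a short verification from the explicit description of $\SO(\scr V_i^t)$ and the definitions of the generators. Granting them, the strategy for computing invariants is the standard one for diagonal $(\Bbb Z/2)$-actions on a polynomial ring: introduce the squares $\eta_\ell = \alpha_\ell^2$ as new generators (these are manifestly invariant and generate the invariants among the $\alpha$'s after inverting appropriate elements), and ``twist'' the sign-anomalous generators $v_j$, $\delta_j$ by suitable $\alpha$'s to produce invariants $\tilde v_j = \alpha_n v_j$, $\tilde \delta_k = \alpha_k \alpha_1 \delta_k$, with $\tilde s_j = s_j$, $\tilde t_k = t_k$ unchanged. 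One then checks that $\Bbb C[\tilde t_\bullet,\tilde\delta_\bullet,\tilde s_\bullet,\tilde v_\bullet,\eta_\bullet]$ (localized at the product of the $\tilde\delta_k$, the $\eta_\ell$, and the images of the $\rho_j$-denominators) is exactly the ring of invariants: every monomial in the original generators that is fixed by all $n$ sign-flips must have an even total exponent in $\alpha_i$ after accounting for the parity contributed by $\delta$'s and $v$'s attached to index $i$, which is precisely the condition that it lie in the subring generated by the twisted variables.

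The main obstacle — and the step I expect to require the most care — is bookkeeping the relation: after the twist, $\rho_j = 0$ becomes $\tilde s_j^2 + \tilde v_j^2/\eta_n = (\tilde t_j^2 - \tilde\delta_j^2/(\eta_1\eta_j))(\tilde t_1^2 - \tilde\delta_1^2/\eta_1^2)$, which is \emph{linear} in $\eta_j$ for $j \ne 1, n$. Solving for $\eta_j$ gives the rational function \eqref{eqn:rational-fcn}, so the generators $\eta_2,\dots,\eta_{n-1}$ are not independent — they are determined by the surviving ones $\tilde t_\bullet, \tilde\delta_\bullet, \tilde s_\bullet, \tilde v_\bullet, \eta_1, \eta_n$ — and can be eliminated, leaving no relations among the remaining generators. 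I would verify that after inverting $\tilde\Delta$ (which clears the denominators $\beta_j$ and the other problematic loci, making the elimination valid as an isomorphism of localized rings, and ensuring $\eta_1, \eta_n$ are invertible so the twists can be undone) the resulting ring $\Bbb C[\tilde t_\bullet, \tilde\delta_\bullet, \tilde s_\bullet, \tilde v_\bullet, \eta_1, \eta_n]_{\tilde\Delta}$ is a localization of a genuine polynomial ring, hence its fraction field is purely transcendental over $\Bbb C$. Counting generators: $\tilde t_1,\dots,\tilde t_{n-1}$ ($n-1$), $\tilde\delta_1,\dots,\tilde\delta_{n-1}$ ($n-1$), $\tilde s_2,\dots,\tilde s_{n-1}$ ($n-2$), $\tilde v_2,\dots,\tilde v_{n-1}$ ($n-2$), and $\eta_1,\eta_n$ ($2$), for a total of $4n-4$, which matches the transcendence degree already computed for $\Bbb C(X_T(\frak B))^{W'}$ — consistent, since the finite quotient by $W/W'$ does not change dimension. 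This gives the asserted rationality and transcendence degree of $\Bbb C(X_T(\frak B))^W$.
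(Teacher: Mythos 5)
Your proposal follows the paper's own route essentially verbatim: pass from $\tilde u_\bullet$ to the generators $s_j,v_j$ with relations $\rho_j$, record the character by which each $\Bbb Z/2$ factor acts on each generator, twist by the $\alpha$'s to obtain the invariant generators $\tilde s_\bullet,\tilde v_\bullet,\tilde t_\bullet,\tilde\delta_\bullet,\eta_\bullet$, use the fact that the twisted relation is linear in $\eta_j$ to eliminate $\eta_2,\dots,\eta_{n-1}$, localize at $\tilde\Delta$, and count $4n-4$ free generators. The only cosmetic difference is that you justify the claim that the twisted variables generate the invariant field by a direct monomial-parity argument, whereas the paper packages that step as descent along a torsor (Appendix \ref{sec:descent}, using that the ideal $(\rho_\bullet)$ is generated by invariants); both are adequate.
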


\section{The proof of Theorem \ref{bigthm:main}}  \label{sec:main}

We are now in a position to complete the proof of the main result.
Let $X(\frak B)$ and $W$ be as above. By Proposition \ref{prop:rational-section}, $X(\frak B)$ is a rational section for the action of $G$ on 
$\scr X$. By Proposition \ref{prop:sec-rel}, the restriction homomorphism
\[
\Bbb C(\scr X)^G \to \Bbb C(X(\frak B))^W
\]
is an isomorphism. Consequently, it suffices to show that $\Bbb C(X(\frak B))^W$
is purely transcendental over $\Bbb C$ of degree $2^{2n-1} - n-1$.

By Corollary \ref{cor:no-name} applied to the $G$-equivariant short exact sequence of vector spaces
\[
0 \to F\to X(\frak B) \to X_T(\frak B)\to 0
\]
we obtain  an isomorphism of fields
\begin{equation}\label{eqn:last}
\Bbb C(X(\frak B))^W \cong  \Bbb C(F) \otimes \Bbb C(X_T(\frak B))^W  \, .
\end{equation}
Furthermore, the vector space $F$ has dimension  $2^{2n-1} - 5n+3$.

Choosing $T$ to be the spanning tree of $K_n$ given by the star graph
on $n$-vertices with center the vertex $n$, we apply
Corollary \ref{cor:rat-quo-w-XT} to deduce that
$\Bbb C(X_T(\frak B))^W$ is purely
transcendental over $\Bbb C$ of degree $4n-4$.  

By \eqref{eqn:last} we infer that $\Bbb C(X(\frak B))^W$ is purely transcendental over $\Bbb C$
of degree $(2^{2n-1} - 5n+3) + (4n-4) = 2^{2n-1} - n-1$. \qed

\appendix 
\section{Descent revisited}  \label{sec:descent}
In the  last step of the calculation appearing in \S\ref{sec:quotient-by-wprime}, 
we implicitly made use of  a version of descent for $G$-linearized vector bundles equipped with
additional structure, in which $G = W/W'$.  In effect, we considered the family
\[
X_T(\frak B)/\!\!/W' \to X^0(\frak B)\, ,
\]
with fiber $X^1(\frak B)/\!\!/W'$.
The displayed map, induced by projection, is well-defined since $W'$ acts trivially
on $X^0(\frak B)$. Moreover, the family is trivial: $X_T(\frak B)/\!\!/W' \cong  X^0(\frak B) \times (X^1(\frak B)/\!\!/W')$.
 We then produced descent data by constructing
a set of linearly independent $(W/W')$-equivariant sections. 
Unfortunately, this does not quite make sense since the family is not a $(W/W')$-linearized vector bundle.
 However, the family is a subvariety of a trivial $(W/W')$-linearized vector bundle, in which the fiber is
 a subvariety of an affine space.

To justify our calculation, we formulate a version of Proposition \ref{prop:no-name} 
for families of pairs $(E,Z)$,  in which $E$ is a trivial vector bundle on a variety $X$ on which a finite group $G$ acts almost freely, and $Z\subset E$
is an equivariant family of subschemes. 

Let $G$ almost freely
on a variety $X$. After passing to a suitable open subset of $X$, we may assume that
$G$ acts freely and $X\to X/G$ is a $G$-torsor.
 Let $V$ be a  representation of $G$. Then we have a (trivial)  $G$-linearized vector bundle
\[
X \times V \to X
\]
where $X\times V$ has the diagonal action.
Suppose $Z\subset V$ is an equivariant closed subvariety. Then $X \times Z \subset X\times V$ is an equivariant closed
subvariety. 

\begin{prob} Describe the descent data for the pair $(X\times V,X\times Z)$ 
with respect to descending along the torsor $X \to X/G$.
\end{prob}

 We proceed as follows: Let $\Bbb A^n$ be affine $n$-space equipped with
{\it trivial} $G$-action. By descent for vector bundles,
 there is an equivariant isomorphism of vector bundles
\[
X\times \Bbb A^n \cong X\times V\, .
\]
Using the above isomorphism, there is an equivariant closed subvariety 
\[
E \subset X\times \Bbb A^n
\] 
which
corresponds to $X\times Z$ with respect to the above isomorphism.
Then the prime ideal 
\[
I(E) \subset k[X][t_1,\dots,t_n]
\]
which defines $E$ as a vanishing locus
is  $G$-invariant in the sense that elements of $G$ map $I(E)$ to itself. 
Note that $G$ acts trivially on the variables $t_1,\dots,t_n$.

Furthermore the inclusion $X\times Z \subset X \times V$ corresponds under the isomorphism
to an inclusion $E\subset X\times \Bbb A^n$ so we have an equivariant family
\[
Z' \to E\to X
\]
that is isomorphic to $Z \to X \times Z \to X$.

\begin{hypo} \label{hypo:descent} The ideal 
\[
I(E)\subset k[X][t_1,\dots,t_n]
\] is generated by $G$-invariant elements. That is
\[
I(E) \cong I(E)^G \otimes_{k[X]^G[t_1,\dots,t_n]} k[X][t_1,\dots,t_n]\, .
\]
\end{hypo}

\begin{rem}  In our application, 
the hypothesis holds by inspection.
\end{rem}

\begin{lem} The pair 
\[
(X\times V,X\times Z)
\]
 descends along the torsor $X\to X/G$  if and only if Hypothesis \ref{hypo:descent} holds.

Moreover, a rational quotient for $(X\times Z)/\!\!/G$   is given by
\[
(X\times Z)/\!\!/G := \spec (k[X]^G[t_1,\dots,t_n]/I(E)^G) \, .
\]      
\end{lem}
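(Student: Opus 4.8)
The statement has two parts: an "if and only if" characterizing when the pair $(X\times V, X\times Z)$ descends along the torsor $X\to X/G$, and an explicit description of a rational quotient $(X\times Z)/\!\!/G$ under Hypothesis \ref{hypo:descent}. The plan is to reduce both parts to ordinary faithfully flat descent for the torsor together with the already-established No-Name Lemma (Proposition \ref{prop:no-name} and Corollary \ref{cor:no-name}), and then to check that the extra structure — the subvariety $Z\subset V$, equivalently the ideal $I(E)$ — is exactly what Hypothesis \ref{hypo:descent} records.

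First I would set up the ambient descent. After shrinking $X$ we may assume $X\to X/G$ is an honest $G$-torsor, and by descent for vector bundles there is an equivariant trivialization $X\times \Bbb A^n\cong X\times V$ with $G$ acting trivially on $\Bbb A^n$; this is precisely the content quoted in the remark after Proposition \ref{prop:no-name} (a basis of equivariant sections). Under this isomorphism the equivariant closed subvariety $X\times Z$ corresponds to some equivariant closed $E\subset X\times\Bbb A^n$, cut out by a $G$-stable prime ideal $I(E)\subset k[X][t_1,\dots,t_n]$, on which $G$ acts only through $k[X]$. Now the key point: a $G$-stable closed subscheme of $X\times\Bbb A^n$ descends along $X\to X/G$ to a closed subscheme of $(X/G)\times\Bbb A^n$ if and only if its defining ideal descends, i.e.\ is extended from $k[X]^G[t_1,\dots,t_n]$; and since $k[X]$ is faithfully flat over $k[X]^G$ along the torsor, this extension condition is exactly $I(E)\cong I(E)^G\otimes_{k[X]^G[t_\bullet]}k[X][t_\bullet]$, which is Hypothesis \ref{hypo:descent}. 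Spelling this out — that a $G$-equivariant quasicoherent ideal on $X\times\Bbb A^n$ descends iff it is pulled back from the quotient, and that for an ideal "descends" is equivalent to "is generated by invariants" in the faithfully flat setting — is the heart of the "if and only if."

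For the second part I would argue that when Hypothesis \ref{hypo:descent} holds, the descended family is $E/G\to X/G$ with $E/G=\spec(k[X]^G[t_\bullet]/I(E)^G)$, and that this is a rational quotient for $(X\times Z)/\!\!/G$. The cleanest route: $X\times Z\cong E$ $G$-equivariantly, and $E$ sits inside the trivial $G$-linearized bundle $X\times\Bbb A^n$ on which $G$ acts trivially in the fiber direction, so $k[E]=k[X][t_\bullet]/I(E)$ and $k[E]^G=(k[X][t_\bullet]/I(E))^G$. Because $G$ is reductive (here finite, hence linearly reductive in characteristic zero), taking $G$-invariants is exact, so from the short exact sequence $0\to I(E)\to k[X][t_\bullet]\to k[E]\to 0$ and the flat base change / extension identity of Hypothesis \ref{hypo:descent} one gets $k[E]^G\cong k[X]^G[t_\bullet]/I(E)^G$, which exhibits $(X\times Z)/\!\!/G=\spec(k[X]^G[t_\bullet]/I(E)^G)$ and in particular shows it is birational to (indeed equal to) the claimed affine scheme.

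The main obstacle I anticipate is the careful bookkeeping in the "only if" direction: showing that if the pair descends — meaning there is a subvariety of $(X/G)\times\Bbb A^n$ pulling back to $E$ — then $I(E)$ must actually be extended, not merely that $E$ is set-theoretically $G$-invariant. This is where faithful flatness of $X\to X/G$ and the exactness of pullback/invariants must be invoked precisely, and where one must be slightly careful about reducedness versus scheme structure (the ideal, not just the closed set, has to descend). The forward ("if") direction and the rational-quotient formula are then essentially formal consequences of reductivity and the already-cited descent results, so the bulk of the genuine work is isolating and verifying the flat-descent statement for ideals that Hypothesis \ref{hypo:descent} encapsulates.
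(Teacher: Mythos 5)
Your plan is correct and follows essentially the same route as the paper: trivialize the bundle equivariantly, observe that descent of the pair amounts to the ideal $I(E)$ being extended from $k[X]^G[t_1,\dots,t_n]$ (which is exactly Hypothesis \ref{hypo:descent}), and compute $k[X\times Z]^G\cong k[X]^G[t_1,\dots,t_n]/I(E)^G$ by taking $G$-invariants. The only cosmetic difference is that you take invariants of the short exact sequence $0\to I(E)\to k[X][t_1,\dots,t_n]\to k[E]\to 0$ using linear reductivity of the finite group $G$, whereas the paper takes invariants of the extended-algebra expression furnished by the hypothesis; both yield the same formula.
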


Note that $E/\!\!/G \subset (X/G) \times \Bbb A^n$.  

\begin{proof} It is straightforward to check that the hypothesis is necessary for $E$ to descend.
Conversely, we have $G$-equivariant isomorphisms
\begin{align*}
k[X\times Z] & \cong k[E]\, , \\ &\cong k[X][t_1,\dots,t_n]/I(E) \, ,  \\ &\cong 
(k[X] ^G[t_1,\dots,t_n]/I^G(E)) \otimes_{k[X]^G[t_1,\dots,t_n]} k[X][t_1,\dots,t_n] \, ,
\end{align*}
where the last isomorphism follows from the hypothesis.
Taking $G$-invariants, we obtain an isomorphism
\[
k[X\times Z]^G \cong k^G[X] [t_1,\dots,t_n]/I(E)^G\, . 
\]
Then $\spec$ applied  to the algebra surjection
\[
k[X]^G[t_1,\dots,t_n] \to k[X]^G [t_1,\dots,t_n]/I(E)^G
\]
induces the descended pair $((X/G) \times \Bbb A^n,E/\!\!/G)$.
\end{proof}

\section{X-states on 2-qubits} \label{sec:rationality-for-2-qubits}
The goal of this section is to describe an explicit transcendence basis
for the  field of invariant functions on the variety of X-states on 2-qubits.
However, we wish emphasize that the approach given here will not generalize to $n > 2$. 

In what follows, to make it clear that we
are considering only the 2-qubit case, we will denote the space of L-states on 2-qubits  by
$\scr L_2$ and the variety of $X$-states on 2-qubits by $\scr X_2$.
We also identify the vector spaces  $V_1$ and $V_2$ with $\Bbb C^2$.

\subsection{A different rational section}
In the case of 2-qubits, an L-state $\rho \in \scr L_2$ is given by
\[
(v,w,C)\, ,
\]
in which 
\begin{itemize}
\item $v\in \scr V_1 \cong \Bbb C^3$,
\item $w\in \scr V_2 \cong \Bbb C^3$, and 
\item $C \in \scr V_1 \otimes \scr V_2 \cong \hom(\scr V_1,\scr V_2) \cong M_3(\Bbb C)$. 
\end{itemize}
It will be convenient to display these data as a matrix
\[
{\small\left(
\begin{array}{c|c}
1 & w \\
\hline
v & C
\end{array}
\right)}
\]     
Let $\cal S= \Bbb C \times \Bbb C \times \Bbb C^3$. Then
one has an embedding $\cal S\to \scr L_2$ given by
\begin{equation} \label{eqn:2-qubit-section}
(x,y,\lambda) \mapsto 
{\small \left(
\begin{array}{c|c cc }
\begin{array}{c}
1  \end{array}
& 0 & 0 & y \\
\hline
0 & \lambda_1 & 0  & 0\\
0 & 0& \lambda_2 & 0\\
x & 0 & 0 & \lambda_3
\end{array}
\right)}
\end{equation}
This embedding factors as
\[
\cal S \to \scr X_2 \to \scr L_2\, .
\]
We next describe the normalizer of $\cal S$ in $\scr X_2$.
Let $S_2$ be the permutation group of order two, and let $K_4$ be the Klein 4-group
which we regard as the group of diagonal $2\times 2$ matrices with coefficients in $\pm 1$.
Then $S_2$ acts on $K_4$ by permuting diagonal matrix entries. The semi-direct product
\[
D_8 := K_4\rtimes S_2
\]
is identified with the dihedral group of order 8 and may be interpreted as the group of $2 \times 2$ permutation matrices (aka monomial matrices) with coefficients in $\pm 1$.
Note that $D_8$ embeds canonically in $\SO_3(\Bbb C)$ by 
\[
 D_8 \subset \Or_2(\Bbb C) @>>> \SO_3(\Bbb C)\, ,
\]
where the second homomorphism in the displayed composition is given by
\[
A\mapsto \begin{pmatrix}
A & 0 \\
0 & \det A
\end{pmatrix}\, .
\]
We may therefore regard any element of $D_8$ as an element of $\SO_3(\Bbb C)$.

Set
\[
N := (K_4 \times K_4) \rtimes S_2 \, ,
\]
where $S_2$ acts diagonally on $K_4 \times K_4$. Then $N$ is a group of order 32.
Note that $N \cong D_8 \times_{S_2} D_8$ and in particular, $N$ is a subgroup of
$G = \SO_3(\Bbb C) \times \SO_3(\Bbb C)$. It is straightforward to check that
$N$ is a subgroup of the normalizer of $\cal S \subset \scr X$ with respect to $G$.

\begin{lem}\label{lem:normalizer} The subgroup $N\subset G$ is the normalizer of $\cal S$ in $\scr X$. Moreover,
the centralizer is trivial.
\end{lem}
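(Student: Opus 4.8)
The plan is to show two things: that every element of $G$ stabilizing $\cal S$ actually lies in the subgroup $N$, and that no non-identity element of $G$ fixes $\cal S$ pointwise. I would work with a generic point of $\cal S$, i.e. the L-state \eqref{eqn:2-qubit-section} with $x,y,\lambda_1,\lambda_2,\lambda_3$ all nonzero and pairwise ``in general position,'' and track how an element $g = (g_1,g_2) \in G = \SO(\scr V_1)\times\SO(\scr V_2)$ acts on its Bloch components. Recall that the Bloch decomposition of $\scr L_2$ is $\scr V_1 \oplus \scr V_2 \oplus \scr V_{12}$; in the display \eqref{eqn:2-qubit-section} the one-point correlation functions are the vector $v = x\sigma_z^{(1)}$ (up to normalization) in $\scr V_1$ and $w = y\sigma_z^{(2)}$ in $\scr V_2$, while the matrix $C = \operatorname{diag}(\lambda_1,\lambda_2,\lambda_3)$ is the two-point correlation function in $\scr V_{12} \cong M_3(\Bbb C)$. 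The action of $g$ sends $v \mapsto g_1 v$, $w \mapsto g_2 w$, and $C \mapsto g_1 C g_2^{-1}$ (identifying $\scr V_{12}$ with $\hom(\scr V_2,\scr V_1)$).

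First I would extract the constraints coming from the one-point correlations: if $g\cdot\cal S \subseteq \cal S$, then for a generic point $g_1(x\sigma_z)$ must again be a multiple of $\sigma_z$, hence $g_1$ stabilizes the line $\Bbb C\sigma_z \subset \scr V_1$, so $g_1 \in \Or(\scr V_1^t)$ (the stabilizer of the $z$-axis inside $\SO_3$). Similarly $g_2 \in \Or(\scr V_2^t)$. Next I would use the two-point correlation: $C = \operatorname{diag}(\lambda_1,\lambda_2,\lambda_3)$ generic must be carried to another diagonal matrix of the same shape by $g_1 C g_2^{-1}$. Here is the crux: $g_1$ and $g_2$ are each of the form ``a $2\times 2$ orthogonal block on the $xy$-plane, together with a $\pm 1$ on the $z$-axis coordinated by the determinant.'' Writing $g_1 = \begin{pmatrix} A_1 & 0\\ 0 & \det A_1\end{pmatrix}$ and $g_2 = \begin{pmatrix} A_2 & 0 \\ 0 & \det A_2 \end{pmatrix}$ with $A_i \in \Or_2(\Bbb C)$, the condition that $g_1 C g_2^{-1}$ be diagonal for generic diagonal $C$ forces $A_1 \operatorname{diag}(\lambda_1,\lambda_2) A_2^{-1}$ to be diagonal for generic $\lambda_1,\lambda_2$, which is a standard fact forcing $A_1, A_2$ to be monomial matrices (products of a permutation matrix and a diagonal $\pm 1$ matrix) — and moreover $A_1$ and $A_2$ must induce the \emph{same} permutation of the two coordinates, since the $(3,3)$-entry $\lambda_3(\det A_1)(\det A_2)^{-1}$ sits in a fixed slot. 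This is exactly the condition defining $N = (K_4\times K_4)\rtimes S_2 = D_8 \times_{S_2} D_8$, so $g \in N$, giving the first assertion.

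For the centralizer statement: if $g$ fixes $\cal S$ pointwise, then in particular $g_1$ fixes $x\sigma_z$ for all $x$, so $g_1$ fixes $\sigma_z$; and $g_1$ fixing the diagonal $C = \operatorname{diag}(\lambda_1,\lambda_2,\lambda_3)$ on the left while $g_2$ fixes it on the right, together with genericity, forces $A_1 = A_2 = I$, hence $g_1 = g_2 = I$. Thus the centralizer is trivial and the Weyl group equals $N$ itself. The main obstacle I anticipate is purely bookkeeping: carefully pinning down the identification of $\Or_2(\Bbb C) \subset \SO_3(\Bbb C)$ and keeping straight how a monomial $2\times 2$ matrix acts on the diagonal $C$ (including the induced sign on the $z$-coordinate via the determinant), so that the subgroup one lands in is identified on the nose with $N$ and not merely abstractly isomorphic to it. None of this requires a Zariski-closure argument beyond the observation that the constraints derived at a generic point of $\cal S$ are closed conditions, so verifying them on a dense open subset suffices.
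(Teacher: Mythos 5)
Your treatment of the normalizer is correct and is essentially the paper's own argument: restrict to the dense open locus of $\cal S$ where $x,y,\lambda_1,\lambda_2,\lambda_3$ are nonzero and the $\lambda_i^2$ are pairwise distinct, use the one-point correlation functions to force $g_1,g_2$ into the line stabilizer $\Or_2(\Bbb C)\subset \SO_3(\Bbb C)$, and then use the two-point correlation to force the $2\times 2$ blocks to be signed permutation matrices inducing the same permutation. The only (harmless) difference is how monomiality is extracted: you use linearity of $D\mapsto A_1DA_2^{-1}$ on the space of diagonal matrices, whereas the paper passes to the conjugates $g_i\lambda^2 g_i^{-1}$ and argues that preserving diagonality of a generic diagonal matrix forces a permutation of eigenspaces. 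Both routes work and land in $N$ on the nose.

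The centralizer step, however, has a genuine gap --- one your argument shares with the paper's own proof. The conditions ``$g_i$ fixes $\sigma_z$'' and ``$A_1DA_2^{-1}=D$ for generic diagonal $D$'' force only $A_1=A_2=\pm I$, not $A_1=A_2=I$; your sign analysis silently discards the $-I$ solution. And that solution is not spurious: the corresponding element
\[
g_1=g_2=\operatorname{diag}(-1,-1,1)\in\SO_3(\Bbb C)\, ,
\]
i.e.\ rotation by $\pi$ about the longitudinal axis (the image of $[\sigma_z]\in\PGL_2(\Bbb C)$ under $\PGL(V_i)\cong\SO(\scr V_i)$), fixes $x\sigma_z$ and $y\sigma_z$ and conjugates every diagonal $C$ to itself, hence fixes $\cal S$ pointwise while being a nontrivial element of $N$. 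So the centralizer is the order-two group generated by this element, not the trivial group. This is in fact forced by Proposition \ref{prop:rational-section}: the centralizer of the larger set $X(\frak B)\supset\cal S$ is already $\{\pm I\}$, and centralizers can only grow when the set shrinks. Nothing downstream is damaged --- the restriction map still identifies $\Bbb C(\scr X_2)^G$ with $\Bbb C(\cal S)^N=\Bbb C(\cal S)^W$, since a centralizing element acts trivially on $\cal S$ by definition --- but the assertion ``$A_1=A_2=I$'' as written is false, and the conclusion of triviality should be replaced by ``the centralizer is $\Bbb Z/2$.''
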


\begin{proof} Let $g\in G$ satisfy the condition that 
 $g\cdot  (x,y,\lambda)$ lies in $\cal S$ 
for every $(x,y,\lambda) \in \cal S$. We must show show that $g$ lies
in $N$.  It suffices to show this for all $(x,y,\lambda)$
in the Zariski open dense subset of $\cal S$ defined by
the condition
\begin{equation}\label{eqn:condition}
x^2y^2\lambda^2_1\lambda^2_2\lambda^2_3\prod_{i < j} (\lambda^2_i - \lambda^2_j)^2 \ne 0\, .
\end{equation}
If $g = (g_1,g_2)$, then it follows that $g_1$ and $g_2$ preserve the the lines through $(0,0,x) \in \scr V_1$ and $(0,0,y)\in \scr V_2$ respectively. Furthermore,
we necessarily have that $g_2 \lambda g_1^{-1}$ is again a diagonal matrix and this will imply
that the matrices $g_2\lambda^2 g_2^{-1}, g_1\lambda^2 g_1^{-1}$ are both diagonal and of the form
\[
\begin{pmatrix}
 \lambda^2_{\sigma(1)} & 0 &0  \\
0 &   \lambda^2_{\sigma(2)} & 0 \\
0 & 0 &\lambda^2_3
\end{pmatrix}
\]
where $\sigma\in S_2$ is the same for both $g_1$ and $g_2$. From this we infer that $(g_1,g_2)$ lies in $N$. 
Hence, $N$ is the normalizer of $\cal S$.

The non-trivial elements of $N$ act on the matrix appearing in \eqref{eqn:2-qubit-section} by changing the signs
of entries and by possibly permutating the pair $(\lambda_1,\lambda_2)$. Using this description,
it follows that no non-trivial element of $N$ stabilizes a matrix as in \eqref{eqn:2-qubit-section} that satisfies 
the condition \eqref{eqn:condition}. Hence, the centralizer is trivial.
\end{proof}

\begin{prop} $\cal S\subset \scr X_2$ is a rational section.
\end{prop}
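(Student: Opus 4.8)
The plan is to verify the two conditions of Definition~\ref{defn:relativeSection}, taking $\mathcal S_0\subseteq\mathcal S\cong\Bbb A^5$ to be the dense open subset cut out by the inequation~\eqref{eqn:condition} — the same open set already used in the proof of Lemma~\ref{lem:normalizer}. Condition~(2), that $g\mathcal S_0\cap\mathcal S_0\neq\emptyset$ forces $g\in N$, is essentially contained in that proof, since the argument there only exploits the constraints imposed by a \emph{single} point $(x,y,\lambda)$ satisfying~\eqref{eqn:condition}. Explicitly, if $g=(g_1,g_2)$ carries some such point into $\mathcal S$, then the longitudinal components of the resulting state are again multiples of $\sigma_z$; as $x,y\neq0$ and $g_i\in\SO_3(\Bbb C)$, each $g_i$ fixes the line $\Bbb C\sigma_z$ and hence preserves the transversal plane $\langle\sigma_x,\sigma_y\rangle$, so in the basis $(\sigma_x,\sigma_y,\sigma_z)$ we may write $g_i$ in block form with a $2\times2$ orthogonal transversal block $A_i\in\Or_2(\Bbb C)$ and lower-right entry $\det A_i$. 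The requirement that the action on $C=\operatorname{diag}(\lambda_1,\lambda_2,\lambda_3)$ remain diagonal reduces to $A_1\operatorname{diag}(\lambda_1,\lambda_2)A_2^{-1}$ being diagonal; since $\lambda_1^2\neq\lambda_2^2$ and $\lambda_1,\lambda_2\neq0$, this together with orthogonality forces $A_1$ (and likewise $A_2$) to be a monomial matrix with $\pm1$ entries, with $A_1$ and $A_2$ sharing a common permutation part. Thus $(g_1,g_2)\in D_8\times_{S_2}D_8=N$. I would write out this diagonalization in full, but it introduces nothing beyond the computation already present in the proof of Lemma~\ref{lem:normalizer}.

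For condition~(1), that $\overline{G\mathcal S}=\scr X_2$, the plan is a dimension count. By Corollary~\ref{cor:dim}, $\dim\scr X_2=2^{3}+2\cdot2-1=11$, while $\dim G=6$ and $\dim\mathcal S=5$. Consider the action map $\phi\colon G\times\mathcal S_0\to\scr X_2$, $(g,s)\mapsto g\cdot s$, whose image is $G\mathcal S_0$. For a point $x=g_0s_0$ of the image with $s_0\in\mathcal S_0$, every $(g,s)\in\phi^{-1}(x)$ has $s=g^{-1}x=g^{-1}g_0s_0\in\mathcal S_0$; setting $h=g^{-1}g_0$ we get $hs_0\in\mathcal S_0$ and $s_0\in\mathcal S_0$, so $h\mathcal S_0\cap\mathcal S_0\neq\emptyset$ and hence $h\in N$ by condition~(2), i.e.\ $g\in g_0N$. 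Thus every fibre of $\phi$ has at most $|N|=32$ points. Since $G$ is connected, $G\times\mathcal S_0$ is irreducible of dimension $11$, so the finite-fibredness of $\phi$ gives $\dim\overline{G\mathcal S_0}=11$; as $\overline{G\mathcal S_0}$ is then an irreducible closed subvariety of $\scr X_2$ of full dimension and $\scr X_2$ is itself irreducible of dimension $11$, we get $\overline{G\mathcal S_0}=\scr X_2$. Hence $\overline{G\mathcal S}\supseteq\overline{G\mathcal S_0}=\scr X_2$, and condition~(1) holds.

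Combining the two steps shows $\mathcal S$ is a rational section for the $G$-action on $\scr X_2$; by Lemma~\ref{lem:normalizer} its normalizer is $N$ and its centralizer is trivial, so the associated Weyl group is $W=N$, of order $32$. The only real bookkeeping lies in condition~(2), and since the underlying linear-algebra computation already appears in the proof of Lemma~\ref{lem:normalizer}, the remaining work — the dimension count for condition~(1) — is routine; I do not anticipate any genuine obstacle.
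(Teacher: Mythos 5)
Your proof is correct, but your treatment of condition (1) takes a genuinely different route from the paper's. The paper invokes the complex-orthogonal analogue of the singular value decomposition (\cite[thm.~2]{Choudury-Horn}, or \cite[lem.~4.4]{CVKR-rationality}) to show that \emph{every} point of the dense open set $\scr X^0_2$ can be moved into $\cal S_0$ by some $g\in G$, so that $G\cdot\cal S_0=\scr X^0_2$ exactly; taking closures then gives $\overline{G\cal S}=\scr X_2$. You instead deduce density by a dimension count: having established condition (2) first, you observe that every nonempty fibre of the action map $G\times\cal S_0\to\scr X_2$ is contained in a coset of $N$ and hence has at most $32$ points, so the irreducible $11$-dimensional source maps with finite fibres into the irreducible $11$-dimensional target $\scr X_2$, forcing $\overline{G\cal S_0}=\scr X_2$. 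Both arguments are valid, and you correctly order the two conditions (your condition (1) genuinely depends on condition (2), which is proved independently, so there is no circularity). What the paper's route buys is the stronger pointwise statement $G\cdot\cal S_0=\scr X^0_2$, which is reused later in the appendix when the invariants $p_1,\dots,p_5$ are shown to separate orbits by restricting to $\cal S_0$; what your route buys is independence from any explicit normal-form theorem --- it works whenever the normalizer is finite and the dimensions match. Your sketch of condition (2) coincides with the paper's, which likewise defers to the computation in Lemma~\ref{lem:normalizer}, and you rightly note that that computation only uses a single point satisfying \eqref{eqn:condition}.
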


\begin{proof} 
We define $\scr X^0_2 \subset \scr X_2$ to be the Zariski open dense subset defined by the set of  X-states
$(v,w,C)\in \scr L_2$ satisfying the condition
\[
\langle v,v\rangle \langle w,w \rangle\det(C^tC)\Delta(C^tC) \ne 0.
\]
Set $\cal S_0 := \scr X^0_2 \cap\cal S$ (these are the the matrices
of \eqref{eqn:condition}). Then $\cal S_0 \subset \cal S$ is open and dense.

Given $(v,w,C) \in \scr X_2^0$, there is $g\in G$ such that $g\cdot (v,w,C)$ lies in $\cal S_0$ by \cite[lem.~4.4]{CVKR-rationality} or \cite[thm.~2]{Choudury-Horn}.
It follows that $G\cdot \cal S_0 = \scr X^0_2$. Taking  the closure of both sides of the latter, we
obtain the first condition of Definition \ref{defn:relativeSection}.

As to the second condition, it suffices to show that if 
\[
(x,y,\lambda) = g\cdot  (x',y',\lambda')
\]
for some $(x,y,\lambda),  (x',y',\lambda') \in \cal S_0$ and $g\in G$, then $g$ lies
in $N$. One establishes this by essentially the same argument that appears in the proof of Lemma \ref{lem:normalizer}.
The  details are left to the reader.
It follows that $\cal S$ is a rational section.
\end{proof}

\begin{cor} The restriction map $\Bbb C(\scr X_2)^{G} \to \Bbb C(\cal S)^{N}$ is an isomorphism.
\end{cor}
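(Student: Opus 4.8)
The plan is to deduce this directly from the rational section machinery already in place, so the proof should be very short. First I would invoke the preceding Proposition, which establishes that $\cal S\subset \scr X_2$ is a rational section with respect to the action of $G=\SO_3(\Bbb C)\times\SO_3(\Bbb C)$. The point of that Proposition is precisely that hypotheses (1) and (2) of Definition \ref{defn:relativeSection} hold, namely that $\overline{G\cal S}=\scr X_2$ and that an element of $G$ moving a generic point of $\cal S_0$ back into $\cal S_0$ must lie in the normalizer.

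Next I would pin down the Weyl group of the section. By Lemma \ref{lem:normalizer}, the normalizer of $\cal S$ in $\scr X_2$ is $N=(K_4\times K_4)\rtimes S_2$, a group of order $32$, and the centralizer is trivial. Hence the Weyl group $W=N/C$ is simply $N$ itself. (Here one is implicitly using that $\scr X_2$ is a variety, i.e.\ irreducible, which was checked in \S\ref{sec:X-states}, so that Proposition \ref{prop:sec-rel} applies.)

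Finally I would apply Proposition \ref{prop:sec-rel} to the rational section $\cal S$: it asserts that for any rational section the restriction homomorphism $k(X)^G\to k(\cal S)^W$ is an isomorphism of fields. Specializing to $X=\scr X_2$, $G=\SO_3(\Bbb C)^{\times 2}$, and $W=N$ as above, this yields that $\Bbb C(\scr X_2)^G\to\Bbb C(\cal S)^N$ is an isomorphism, which is exactly the claim.

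There is essentially no obstacle here: the corollary is a formal consequence of the preceding Proposition, Lemma \ref{lem:normalizer}, and the general Proposition \ref{prop:sec-rel}. The only thing to be careful about is bookkeeping — confirming that the ``normalizer'' and ``centralizer'' of Lemma \ref{lem:normalizer} are being used in the sense of \S\ref{sec:rel-sec} (subgroups of $G$ defined via the stabilizer of the subvariety $\cal S$), so that $W=N$, and that $\cal S_0$ chosen in the proof of the Proposition genuinely serves as the dense open subset required in Definition \ref{defn:relativeSection}. Once that is noted, the proof is a one-line citation.
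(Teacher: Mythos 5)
Your proof is correct and matches the paper's (implicit) argument exactly: the corollary follows from the preceding Proposition that $\cal S$ is a rational section, Lemma \ref{lem:normalizer} identifying the normalizer as $N$ with trivial centralizer (so the Weyl group is $N$ itself), and Proposition \ref{prop:sec-rel}. Nothing further is needed.
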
 

\begin{rem} The normalizer for the section $\cal S$ has dimension zero. 
This relies on the  fact that complex matrices admit singular value decomposition.
This will not generalize in any obvious way to X-states on $n$-qubits for $n > 2$. 
\end{rem}

\begin{conjecture} For $n > 2$, the variety $\scr X_n$ 
with action of $G$ does not admit a section with a discrete normalizer.
\end{conjecture}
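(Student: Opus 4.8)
The natural strategy is a proof by contradiction, carried out on the bundle model of \S\ref{sec:X-states}. Suppose that for some $n>2$ the variety $\scr X_n$ admits a rational section $\cal S$ with discrete normalizer $N$. Then the Weyl group $W=N/C$ is finite, so Proposition~\ref{prop:sec-rel} and Theorem~\ref{bigthm:main} give $\dim\cal S=2^{2n-1}-n-1$ and show that $\cal S$ meets a generic $G$-orbit in a finite nonempty set. The first move is to descend to a single fibre: with $q\:\accentset{\circ}{\scr X}_n\to\breve{\Bbb P}_{\overline{n}}$ the $G$-equivariant projection recording the one-point correlation lines, $G$ acts transitively on $\breve{\Bbb P}_{\overline{n}}$ with isotropy $G_{\frak B}=\prod_i\Or(\scr V_i^t)$, and a $G$-orbit meets a fibre $q^{-1}(\frak B)=X(\frak B)$ in precisely one $G_{\frak B}$-orbit. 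Thus the conjecture is equivalent to the assertion that, for $n>2$, the action of $G_{\frak B}$ on the vector space $X(\frak B)$ has no rational section with discrete normalizer. One direction of this equivalence is easy: a $G_{\frak B}$-section $\cal S_0\subseteq X(\frak B)$ has dimension $2^{2n-1}-n-1$, so $\overline{G\cdot\cal S_0}=\scr X_n$ for dimension reasons, and condition~(2) of Definition~\ref{defn:relativeSection} holds because equivariance of $q$ forces $g\frak B=\frak B$ whenever $g\cal S_0$ meets $\cal S_0$. The reverse direction requires intersecting $\cal S$ with the fibres of $q$ together with a transversality argument, and is part of what must be supplied.

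Next one peels off the torus. The identity component of $G_{\frak B}$ is the torus $\prod_i\SO(\scr V_i^t)\cong\Bbb G_m^{\times n}$, acting linearly on $X(\frak B)$ with weights $\sum_{j\in J}\pm e_j$ over the even-cardinality subsets $J\subseteq\overline{n}$; for $n\ge 2$ these span the character lattice rationally, so the torus acts with finite generic stabiliser, and \emph{every} linear torus action with finite generic stabiliser has a rational section with finite normalizer --- cut $X(\frak B)$ by the $n$ equations $v_1=\dots=v_n=1$ on coordinates in $n$ weight spaces whose weights form a $\Bbb Q$-basis. This is, in essence, the content of Corollary~\ref{cor:wprimeX1}, where the relevant $\SO_2(\Bbb C)$-torsors are trivialised via the material of \S\ref{sec:aux}. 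Hence the entire difficulty is concentrated in the component group $G_{\frak B}/G_{\frak B}^{\circ}\cong(\Bbb Z/2)^{\times n}$: one must decide whether the torus-section can be chosen, or modified, so as to become a section for all of $G_{\frak B}$ --- equivalently, whether the residual $(\Bbb Z/2)^{\times n}$-action on $X(\frak B)/\!\!/G_{\frak B}^{\circ}$ lifts to a rational section of the quotient map $X(\frak B)\dashrightarrow X(\frak B)/\!\!/G_{\frak B}$. Because $G_{\frak B}$ is neither connected nor abelian, this is a genuine descent problem, with obstruction lying in a non-abelian $H^1$, equivalently in a Brauer-type group of $X(\frak B)/\!\!/G_{\frak B}^{\circ}$ carrying its $(\Bbb Z/2)^{\times n}$-action.

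Showing that this obstruction is nonzero when $n>2$ is the crux, and is why the statement is left as a conjecture. The case $n=2$ shows it can vanish: there the only summand of $X(\frak B)$ on which the component group interacts nontrivially with the torus is the single correlation matrix $\scr V_1^t\otimes\scr V_2^t\cong M_2(\Bbb C)$, and singular value decomposition is exactly what makes the left $\SO_2(\Bbb C)$-action there a trivialisable torsor (Proposition~\ref{prop:torsor}) and produces the explicit section of Appendix~\ref{sec:rationality-for-2-qubits} with normalizer of order $32$. For $n>2$ the corresponding summands $\scr V_J^t$ with $|J|\ge 3$ even are higher tensors over $\prod_{j\in J}\SO_2(\Bbb C)$, for which there is no analogous normal form, and the obstruction is expected to survive. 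To finish one would either exhibit a concrete $G$-birational invariant of $\scr X_n$ distinguishing it from the twisted homogeneous fibration over $\scr X_n/\!\!/G$ with fibre $G/C$ that a discrete-normalizer section would produce --- an unramified Brauer class, say, or an obstruction built from the torsor of $G$-frames over $\accentset{\circ}{\scr X}_n$ --- or run the transversality argument directly, proving that every codimension-$n$ subvariety of $X(\frak B)$ that meets the generic $G_{\frak B}$-orbit finitely is carried into itself by a one-parameter subgroup of $G_{\frak B}$. Making either route uniform in $n$ is the obstacle; morally it amounts to the assertion that higher correlation tensors possess no singular value decomposition.
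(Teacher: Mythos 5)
This statement is posed in the paper as a \emph{conjecture}: the authors give no proof, only the heuristic remark that the $n=2$ section of Appendix~\ref{sec:rationality-for-2-qubits} relies on singular value decomposition of complex matrices, which ``will not generalize in any obvious way'' to $n>2$. Your proposal is likewise not a proof; it is a (reasonable, and broadly consistent with the authors' heuristic) reduction strategy with the decisive steps left open, and you say as much yourself. Concretely, two gaps remain. First, your reduction to the fibre is incomplete in the direction that matters: to rule out sections of $\scr X_n$ you need that \emph{every} rational $G$-section $\cal S\subseteq\scr X_n$ with discrete normalizer produces a rational $G_{\frak B}$-section of $X(\frak B)$ with discrete normalizer, and this is exactly the ``reverse direction'' you defer. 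A candidate $\cal S$ need not lie in, nor intersect cleanly with, a fibre of $q$; it could be positioned transversally to the fibration, and nothing in the proposal controls $\cal S\cap q^{-1}(\frak B)$ (it could be empty, non-reduced, reducible, or of the wrong dimension on the relevant open set), so condition (1) of Definition~\ref{defn:relativeSection} for the intersection is not established. Second, even granting that reduction, the crux --- that the descent obstruction for the component group $(\Bbb Z/2)^{\times n}$ acting on $X(\frak B)/\!\!/G_{\frak B}^{\circ}$ is nonzero for every $n>2$ --- is asserted only as ``expected to survive.'' No invariant is computed, no cohomology class is exhibited, and no argument is given that would distinguish $\scr X_n$ from the fibration that a discrete-normalizer section would produce. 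The phrase ``higher correlation tensors possess no singular value decomposition'' is a slogan, not a lemma: the absence of one particular normal form does not preclude the existence of some other slice with finite normalizer.

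A smaller but real issue: your torus step shows that $G_{\frak B}^{\circ}\cong\Bbb G_m^{\times n}$ admits a linear slice with finite normalizer \emph{inside the torus}, but the normalizer of that slice inside the full disconnected group $G_{\frak B}$ is what controls the descent problem, and you have not verified that the finite quotient acts on the torus-quotient in a way that admits (or obstructs) an equivariant lift. In short, the proposal correctly identifies where the difficulty lives --- it is concentrated in the interaction between the component group and the weight spaces $\scr V_J^t$ with $|J|\ge 4$ even --- but it does not close either of the two gaps, so the conjecture remains exactly that.
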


\subsection{A transcendence basis}
  If $G$ is a reductive group acting on an affine variety $X$, one says that a collection
   $\{f_1, \ldots, f_n\} \subseteq \mathbb{C}(X)^G$ {\em separates orbits in general position} 
   if there exists a Zariski-open (dense) subset $U\subseteq X$ such that, for every pair of elements $x_1, x_2 \in U$, if $f_i(x_1) = f_i(x_2)$ for all $i=1, \ldots, n$, then $x_1$ and $x_2$ belong to the same $G$-orbit, that is, there exists $g\in G$ such that $gx_1 = x_2$. 
     
Let $G = \SO(\scr V_1) \times \SO(\scr V_2)$. 
Consider the $G$-invariant polynomial functions
\[
p_i\: \scr L_2 \to \Bbb C, \quad i =1,\dots, 5,
\]
in which
\begin{itemize}
 \item $ p_1(v,w,C)= \langle v,v\rangle = \|v\|^2$,
 \item $p_2(v,w,C)=  \langle w,w\rangle = \|w\|^2$,
 \item  $p_3(v,w,C)= \langle Cv,w \rangle$,
 \item  $p_4(v,w,C)= \tr(C^tC)$, and 
 \item $p_5(v,w,C)= \det(C)$.
  \end{itemize}
Restricting the $p_i$ to $\scr X_2$,  we obtain
  five invariants
  \[
  p_i \in \Bbb C[\scr X_2]^G\, .
  \]
  
  \begin{prop} The invariants  $p_1,\dots,p_5$ separate orbits in general position. 
  \end{prop}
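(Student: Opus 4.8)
The plan is to pull the question back from $\scr X_2$ to the explicit five‑dimensional section $\cal S$. Since each $p_i$ is $G$‑invariant and $\cal S\subset\scr X_2$ is a rational section with normalizer $N$, it suffices to find a dense open $\cal S_0\subseteq\cal S$ on which the restrictions $p_i|_{\cal S}$ separate $N$‑orbits. Indeed, take $U:=G\cdot\cal S_0$, which is dense open in $\scr X_2$ (this is exactly the set $\scr X^0_2$ appearing in the proof that $\cal S$ is a rational section); if $x_1,x_2\in U$ satisfy $p_i(x_1)=p_i(x_2)$ for all $i$, write $x_j=g_js_j$ with $s_j\in\cal S_0$, so that $p_i(s_1)=p_i(x_1)=p_i(x_2)=p_i(s_2)$ by $G$‑invariance; then $s_1,s_2$ lie in a common $N$‑orbit, hence $x_1,x_2$ lie in a common $G$‑orbit.

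I would take $\cal S_0$ to be the locus \eqref{eqn:condition}. Writing a point of $\cal S$ as $(x,y,\lambda_1,\lambda_2,\lambda_3)$ via \eqref{eqn:2-qubit-section}, so that $v=(0,0,x)$, $w=(0,0,y)$ and $C=\mathrm{diag}(\lambda_1,\lambda_2,\lambda_3)$, one computes
\[
p_1=x^2,\qquad p_2=y^2,\qquad p_3=xy\lambda_3,\qquad p_4=\lambda_1^2+\lambda_2^2+\lambda_3^2,\qquad p_5=\lambda_1\lambda_2\lambda_3 .
\]
The Jacobian determinant of $(p_1,\dots,p_5)|_{\cal S}$ is a nonzero scalar multiple of $x^2y^2\lambda_3(\lambda_1^2-\lambda_2^2)$, so the $p_i$ are algebraically independent on $\cal S$ and $\Bbb C(p_1,\dots,p_5)$ is a subfield of $\Bbb C(\cal S)^N$ over which $\Bbb C(\cal S)$ is finite.

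The core step is to invert these relations on $\cal S_0$. Suppose $(x,y,\lambda)$ and $(x',y',\lambda')$ in $\cal S_0$ have equal values of $p_1,\dots,p_5$. From $p_1,p_2$ we get $x'=\varepsilon_x x$, $y'=\varepsilon_y y$ with $\varepsilon_x,\varepsilon_y\in\{\pm1\}$; then $p_3$ forces $\lambda_3'=\varepsilon_x\varepsilon_y\lambda_3$, so $\lambda_3'^2=\lambda_3^2$; then $p_4$ gives $\lambda_1'^2+\lambda_2'^2=\lambda_1^2+\lambda_2^2$ and $p_5$ gives $\lambda_1'\lambda_2'=\varepsilon_x\varepsilon_y\lambda_1\lambda_2$, whence $(\lambda_1'\lambda_2')^2=(\lambda_1\lambda_2)^2$. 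Therefore $\{\lambda_1'^2,\lambda_2'^2\}$ and $\{\lambda_1^2,\lambda_2^2\}$ are unordered pairs with the same sum and the same product, so they coincide; on $\cal S_0$ these two values are distinct and nonzero, and the remaining freedom in $(\lambda_1',\lambda_2')$ is precisely a simultaneous sign change together with a possible transposition of the two entries. It remains to match this list with the $N$‑action on $\cal S$ recorded in the proof of Lemma \ref{lem:normalizer}: the subgroup $K_4\times K_4$ acts on $(x,y,\lambda_1,\lambda_2,\lambda_3)$ by the sign patterns $(u_1,u_2,u_3,u_1u_2u_3,u_1u_2)$ for arbitrary $u_1,u_2,u_3\in\{\pm1\}$ — which is exactly the family of admissible sign patterns forced above, with $u_1=\varepsilon_x$, $u_2=\varepsilon_y$ and $u_3$ the sign of $\lambda_1'/\lambda_1$ — while the generator of $S_2$ realizes the transposition $\lambda_1\leftrightarrow\lambda_2$ (accompanied by $x\mapsto -x$, $y\mapsto -y$). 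Composing an element of $S_2$ with a suitable element of $K_4\times K_4$ therefore carries $(x,y,\lambda)$ to any prescribed solution $(x',y',\lambda')$, so the two points lie in one $N$‑orbit, as required.

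I expect the only real difficulty to be bookkeeping: keeping track of how the thirty‑two elements of $N=(K_4\times K_4)\rtimes S_2$ act on the coordinates $(x,y,\lambda_1,\lambda_2,\lambda_3)$ and confirming that the generic fibre of $(p_1,\dots,p_5)|_{\cal S}$ — which has sixteen points — is swept out by a single $N$‑orbit. As a cross‑check, one may instead compute the degree $[\Bbb C(\cal S):\Bbb C(p_1,\dots,p_5)]$ directly from the displayed formulas (it equals $16$) and compare it with $[\Bbb C(\cal S):\Bbb C(\cal S)^N]$; equality forces $\Bbb C(p_1,\dots,p_5)=\Bbb C(\cal S)^N\cong\Bbb C(\scr X_2)^G$, which both re‑proves that the $p_i$ separate orbits in general position and identifies them as a transcendence basis for $\Bbb C(\scr X_2)^G$.
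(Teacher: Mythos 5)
Your proof is correct and follows essentially the same route as the paper: restrict the $p_i$ to the section $\cal S_0$, obtain the functions $x^2,\,y^2,\,xy\lambda_3,\,\sum\lambda_i^2,\,\lambda_1\lambda_2\lambda_3$, and show by elementary sign and symmetric-function bookkeeping that two points of $\cal S_0$ with the same values differ by an element of $N$. Your version is in fact slightly more complete at the one place the paper is terse --- you explicitly verify that every admissible sign pattern $(\varepsilon_x,\varepsilon_y,u_3,u_1u_2u_3,\varepsilon_x\varepsilon_y)$ and the transposition $\lambda_1\leftrightarrow\lambda_2$ are actually realized by elements of $N=(K_4\times K_4)\rtimes S_2$, rather than just asserting that a suitable element of $N$ exists --- and the degree-$16$ cross-check is a nice independent confirmation.
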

  
  \begin{proof}  If we restrict each of the functions $p_i$
  to  $\cal S_0$, 
 we obtain the $N$-invariant functions which assign to a point $\alpha = (x,y,\lambda) \in S$  the expressions
  \begin{enumerate}
 \item $f_1= x^2$,
  \item $f_2 = y^2$,
  \item $f_3=\lambda_3 xy$
  \item $f_4= \lambda^2_1+\lambda^2_2 +\lambda^2_3$, and
  \item $f_5 = \lambda_1 \lambda_2\lambda_3$.
  \end{enumerate}
  Then $\{p_1,\dots,p_5\} \subset \Bbb C(\scr X_2)^G$ separates orbits on $\scr X_0$ if and only if
 $\{f_1,\dots,f_5\} \subset \Bbb C(\cal S)^N$ separates orbits on $\cal S_0 = \scr X^0_2 \cap \cal S$.
  
  Let $\alpha = (x,y,\lambda)$ and  $\alpha' = (x',y',\lambda')$ be points of $\cal S_0$ such that $f_i(\alpha) = f_i(\alpha')$ for $i = 1,\dots 5$.
  Then  the first three invariants $f_1,f_2,f_3$ give relations of the form
  \[
  x =\sigma_1 x',\quad  y = \sigma_2 y',\quad  \lambda_3 = \sigma_3\lambda'_3,
  \]
  where $\sigma_i = \pm 1$ and $\sigma_1\sigma_2\sigma_3 = 1$. But then we can apply 
  a suitable element of $N$
  to arrange it so that $\sigma_i = 1$ for all $i$. Consequently, we may assume that 
  \[
  x =x',\quad  y = y',\quad  \lambda_3 =\lambda'_3,
  \]
  
  Then  $f_4$ implies the relation
  \[
  \lambda^2_1+\lambda^2_2 = (\lambda'_1)^2+(\lambda'_2)^2
  \]
  and $f_4$ and $f_5$ taken together imply, after possibly reordering, that
  \[
  \lambda_1^2 = \tau\lambda'_1, \quad   \lambda_2 = \tau \lambda'_2, 
\]  
where $\tau = \pm 1$. Again, we may apply an element of $N$ to arrange  $\tau = 1$. 
Consequently  the collection $\{f_1,\dots,f_5\}$ separate orbits on $\cal S_0$.
\end{proof}
  
\begin{cor} The invariant functions $p_1,\dots,p_5$ are algebraically independent. Consequently, they induce an isomorphism 
\[
\Bbb C(\scr X_2)^G \cong \Bbb C(p_1,\dots,p_5)\, .
\]
\end{cor}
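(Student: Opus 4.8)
The plan is to deduce both assertions from the separation property just established, working through the rational section $\cal S$. First I would invoke the preceding corollary, that restriction induces an isomorphism $\Bbb C(\scr X_2)^{G}\cong\Bbb C(\cal S)^{N}$ under which each $p_i$ goes to the polynomial $f_i$ computed in the proof of the previous proposition; so it is enough to prove that $\Bbb C(\cal S)^{N}=\Bbb C(f_1,\dots,f_5)$ and that $f_1,\dots,f_5$ are algebraically independent over $\Bbb C$. Next I would record, using Lemma \ref{lem:normalizer}, that $N$ is finite (of order $32$) and acts faithfully on $\cal S\cong\Bbb A^5$, so that $\Bbb C(\cal S)$ is a finite extension of $\Bbb C(\cal S)^{N}$ and hence $\Bbb C(\cal S)^{N}$ has transcendence degree $5$ over $\Bbb C$. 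Granting the identity $\Bbb C(\cal S)^{N}=\Bbb C(f_1,\dots,f_5)$, this count forces the five generators to be algebraically independent, and the chain of isomorphisms then yields $\Bbb C(\scr X_2)^{G}\cong\Bbb C(p_1,\dots,p_5)$; so the whole statement reduces to the single identity $\Bbb C(\cal S)^{N}=\Bbb C(f_1,\dots,f_5)$.

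To establish that identity I would argue as follows. Since $f_1,\dots,f_5$ are $N$-invariant polynomials on $\cal S$, they lie in $\Bbb C[\cal S]^{N}$ and hence factor the morphism $\cal S\to\Bbb A^5$ through the quotient $q\colon \cal S\to \cal S/\!\!/N=\spec(\Bbb C[\cal S]^{N})$, producing a morphism $\psi\colon \cal S/\!\!/N\to\Bbb A^5$ with $\psi\circ q=(f_1,\dots,f_5)$. The open dense subset $\cal S_0\subset\cal S$ cut out by \eqref{eqn:condition} is $N$-stable, because its defining inequality is invariant under $N=(K_4\times K_4)\rtimes S_2$, so $q(\cal S_0)$ is open dense in $\cal S/\!\!/N$; and since $N$ is finite the fibres of $q$ over closed points are precisely the $N$-orbits. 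The separation statement of the previous proposition therefore says exactly that $\psi$ is injective on $q(\cal S_0)$, i.e.\ that $\psi$ is generically injective. A generically injective morphism of varieties over a field of characteristic zero is birational onto its image, so $\psi$ induces an isomorphism $\Bbb C(\overline{\psi(\cal S/\!\!/N)})\xrightarrow{\sim}\Bbb C(\cal S/\!\!/N)=\Bbb C(\cal S)^{N}$. As $\Bbb C(\overline{\psi(\cal S/\!\!/N)})$ is generated over $\Bbb C$ by the restrictions of the coordinate functions on $\Bbb A^5$, whose pullbacks under $\psi$ are $f_1,\dots,f_5$, this gives $\Bbb C(\cal S)^{N}=\Bbb C(f_1,\dots,f_5)$, completing the reduction.

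The one genuinely delicate point I expect is the implication ``generically injective $\Rightarrow$ birational onto its image'': this is where characteristic zero enters, via separability, guaranteeing that the degree of the dominant map $\cal S/\!\!/N\to\overline{\psi(\cal S/\!\!/N)}$ equals the cardinality of a generic fibre, namely one. Everything else is bookkeeping about quotients by the finite group $N$ and about which dense open subsets to restrict to. An alternative route would be to quote the general principle (see \cite{PV}) that a family of invariants separating orbits in general position generates the invariant field, but unwinding that statement comes down to the same characteristic-zero input, so I would prefer the self-contained argument above.
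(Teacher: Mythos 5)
Your argument is correct, and it reaches the conclusion by the same two-step logic as the paper (separation of orbits in general position forces generation of the invariant field; a transcendence-degree count of $5$ then forces algebraic independence), but the execution differs in two substantive ways. First, where the paper simply invokes \cite[lem.~2.1]{PV} to pass from ``separates orbits in general position'' to ``generates $\Bbb C(\scr X_2)^G$,'' you transport everything to the rational section $\cal S$ with its finite normalizer $N$ and prove that implication from scratch: the map $\psi\colon \cal S/\!\!/N\to \Bbb A^5$ determined by $f_1,\dots,f_5$ is injective on the dense open image of $\cal S_0$ (here you correctly use that $\cal S_0$ is $N$-stable and that fibres of a finite-group quotient are orbits), and generic injectivity in characteristic zero gives birationality onto the image, hence $\Bbb C(\cal S)^N=\Bbb C(f_1,\dots,f_5)$. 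This is exactly the content of the cited lemma in the finite-group case, so nothing is gained logically, but the argument becomes self-contained. Second, the paper obtains the transcendence degree $5$ from the already-proved main theorem (with $n=2$), whereas you derive it directly from the finiteness of $N$ acting on $\cal S\cong\Bbb A^5$; your route has the minor advantage of making the appendix independent of Theorem~A. Both proofs are valid; yours trades a citation for a page of standard but correctly executed geometry.
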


\begin{proof} By \cite[lem~2.1]{PV}, the invariants $p_1,\cdots, p_5$ generate $\Bbb C(\scr X_2)^G$ since they
separate orbits in general position. However, we already know that
$\Bbb C(\scr X_2)^G$  has transcendence degree 5. Consequently, $p_1,\dots,p_5$ are necessarily algebraically independent, for
otherwise they could not generate $\Bbb C(\scr X_2)^G$.
\end{proof}


\end{document}